\documentclass[11pt]{article}
\usepackage{bbold}
\usepackage{arxiv}
\usepackage{booktabs}
\usepackage{enumitem}
\usepackage{amsmath}
\usepackage{amsfonts}
\usepackage{amssymb}
\usepackage{mathtools}
\usepackage{graphicx}
\usepackage{color}
\usepackage{fullpage}
\usepackage[round]{natbib}

\newcommand{\calG}{{\cal G}}

\newcommand{\calI}{{\cal I}}

\newcommand{\calS}{{\cal S}}

\newcommand{\eqdef}{\stackrel{\text{\tiny def}}{=}} 
\newcommand{%
  \sjacobian}{\mathbf{SJ}}
\newcommand{%
  \jacobian}{\mathbf{J}}
\newcommand{%
  \hessian}{\mathbf{H}}
\newcommand{\bbone}{\mathbb{1}}

\newcommand{\supp}{\textnormal{supp}}
\newcommand{\spr}{{\mathrm{SPR}}}
\usepackage{comment}
\usepackage{stmaryrd}

\usepackage{sectsty}
\sectionfont{\large}
\subsectionfont{\normalsize}
\subsubsectionfont{\normalsize}
\paragraphfont{\normalsize}

\RequirePackage{algorithm}
\RequirePackage{algorithmic}
\usepackage[linktocpage=true]{hyperref}
\hypersetup{
  colorlinks=true,
  citecolor=blue,
  linkcolor=magenta,
}
\usepackage[capitalize,noabbrev]{cleveref}
\crefname{assumption}{Assumption}{Assumptions}
\usepackage{amsthm}
\usepackage{thmtools}
\usepackage{thm-restate}

\declaretheorem[name=Theorem, numberwithin=section]{theorem}

\declaretheorem[name=Lemma, numberlike=theorem]{lemma}

\declaretheorem[name=Definition, numberlike=theorem]{definition}

\declaretheorem[name=Corollary, numberlike=theorem]{cor}
\declaretheorem[name=Remark, numberlike=theorem]{remark}

\declaretheorem[name=Example, numberlike=theorem]{example}
\declaretheorem[name=Assumption, numberlike=theorem]{assumption}

\usepackage{multicol}
\usepackage{subcaption}

\newcommand\todo[1]{\noindent\textcolor{red}{#1}}
\newcommand{\argmin}{\operatornamewithlimits{argmin}}
\newcommand{\argmax}{\operatornamewithlimits{argmax}}

\renewcommand{\headeright}{}
\renewcommand{\undertitle}{}
\newcommand\blfootnote[1]{%
  \begingroup
  \renewcommand\thefootnote{}\footnote{#1}%
  \addtocounter{footnote}{-1}%
  \endgroup
}

\title{
  Computing Equilibria in Games with Stochastic Action Sets}

\usepackage{footmisc}
\usepackage{authblk}

\author{Thomas Schwarz}
\author{Jiaru Li}
\author{Ryann Sim}
\author{Chun Kai Ling}
\affil{School of Computing, National University of Singapore}

\date{}

\begin{document}

\maketitle
\blfootnote{Contact: \texttt{tschwarz@comp.nus.edu.sg},
\texttt{jiaru.li@u.nus.edu},
\texttt{ryann.sim@nus.edu.sg}, \texttt{chunkail@nus.edu.sg}}
\begin{abstract}
  The study of learning in games typically assumes that each player always has access to all of their actions. However, in many practical scenarios, players' available actions might be restricted due to exogenous stochasticity. To model this setting, for a game $\mathcal{G}_{\mathrm{orig}}$ with action set $A_i$ for each player $i$, we introduce the corresponding Game with Stochastic Action Sets (GSAS) which is parametrized by a probability distribution over the players' set of possible action subsets $\mathcal{S}_i \subseteq 2^{A_i}\setminus\{\varnothing\}$. In a GSAS, players' strategies and Nash equilibria (NE) admit prohibitively large representations, and existing algorithms for NE computation scale poorly. Under the assumption that action availabilities are independent between players, we show that NE in two-player zero-sum (2p0s) GSAS can be approximately represented by a compact vector of size $\vert A_i\vert$, overcoming the naïve exponential-sized representation. Computationally, we introduce an algorithm that minimizes ranking regret, converging to NE with high probability in 2p0s-GSAS with rate $O(\sqrt{\vert A_i\vert\log\vert A_i\vert/T})$. Finally, using the iterates of our algorithm, we develop a stochastic approximation procedure to recover compactly represented NE.
\end{abstract}

\section{Introduction}\label{sec:intro}
A common assumption in game theory is that the players always have access to all of their actions. However, in many multi-agent systems, a player's actions might be randomly restricted at certain times. For instance, a trader might only have access to a subset of options on any given day due to exogenous constraints outside their control. A naval ship might be unable to enter or patrol part of its usual operating area due to the presence of civilians or inclement weather. In these scenarios, a player has access to only a subset of their action set and must choose among the available actions, which could change each time the game is played. 


To formally model this phenomenon, we introduce the class of Games with Stochastic Action Sets (GSAS). 
A GSAS $\mathcal{G}$ begins with a `base' normal-form game $\calG_{\mathrm{orig}}$ with finite action sets $A_i$ for each player $i$. We further equip $\calG$ with a probability distribution $\rho \in \Delta(\calS)$, where $\calS \coloneqq \bigtimes_i\calS_i$ and each $\calS_i$ is the set of all possible action subsets $\calS_i \subseteq 2^{A_i}\setminus\{\varnothing\}$ for each player $i$.
At each round, Nature draws a joint availability set $S\in\calS$ from $\rho$. Each player observes only their own availability set $S_i\subseteq A_i$ and is restricted to selecting an action $a_i\in S_i$. 

In this paper, we are interested in obtaining Nash equilibria (NE) of GSAS, which are given by strategy profiles from which no player has an incentive to unilaterally deviate. 
However, existing strategy formalisms require that players select, for each possible action subset, a distribution over available actions to play. 
For instance, formulating a GSAS as a Bayesian game~\citep{harsanyi1968games} requires modeling action availabilities as types, where strategies are mappings from types to action distributions.
Furthermore, since some actions may be unavailable,
the NE structure of a GSAS $\calG$ can differ significantly from $\calG_{\mathrm{orig}}$ (cf.~\cref{ex:matching_pennies}).



This raises a key challenge in understanding the properties of NE in GSAS. In particular, the naïve specification of a player's strategy might require a mapping from action availabilities to distributions over actions, whose domain could be exponentially large in the worst case. Thus, even \emph{specifying} a NE could be challenging, raising concerns about NE as an implementable solution concept and motivating a study of conditions under which NE can be \emph{compactly represented} in GSAS. This type of result has several seminal precedents: Kuhn's theorem~\citep{kuhn1953extensive} showed that behavioral strategies (as opposed to  exponentially-sized mixed strategies) suffice to capture NE in extensive-form games of perfect recall; \citet{shapley1953stochastic} showed that every two-player zero-sum Markov game admits a Markovian equilibrium.

We next turn our attention to the complementary task of \textit{computing} NE in the special but important case of two-player zero-sum (2p0s) GSAS.
A standard paradigm for equilibrium computation in normal-form games utilizes a canonical connection between no-regret learning and game-theoretic equilibria, which states that no-regret learners converge in time-average to approximate equilibria~\citep{cesa2006prediction}. This connection has led to the design of decentralized algorithms that can efficiently compute NE in standard two-player zero-sum games; see, e.g., ~\citep{freund1999adaptive,rakhlin2013optimization,daskalakis2011near,syrgkanis2015fast,daskalakis2021near}. However, when attempting to solve a 2p0s-GSAS directly, the connection between `standard' no-external-regret learning and equilibria breaks down since players' action sets can vary between rounds. To circumvent this, one could also consider special cases of GSAS that allow for the application of existing techniques. For instance, when $\rho$ is fully known to the solver, the game can be expanded into a Bayesian game.
Nevertheless, even in this special case, existing solvers do not scale well since the number of possible action sets could be large.
This motivates the design of computational techniques
that can efficiently compute \emph{compact representations} of NE, 
even without specifying $\rho$ explicitly.

\paragraph{Our contributions.}
The primary contribution of our work is the design and analysis of a complete procedure that computes compact representations of approximate NE in 2p0s-GSAS. 
This requires three key aspects.
First, we \textbf{formalize and study properties of equilibria} in 2p0s-GSAS under the assumption that players' action availabilities are independent. We show that players' strategy sets can be restricted to the set of `implementable' marginals which contains all equilibria of the GSAS. We also establish that any implementable equilibrium marginal can be approximately represented as a vector $w$ of size $\vert A_i\vert$. Second, 
we introduce an algorithm called CAT-FTPL that \textbf{minimizes a suitable notion of regret} called \textit{ranking regret} in GSAS, and converges to approximate NE in 2p0s-GSAS. We prove a sublinear bound on expected pathwise ranking regret and derive a corresponding high-probability upper-tail bound.
Third, using \emph{only} the CAT-FTPL iterates, we introduce a stochastic approximation-based method to \textbf{extract a compact vector} $w$ that represents an approximate NE. 
Finally, we investigate the empirical efficacy of our proposed method in 2p0s-GSAS.
Our experimental results demonstrate the scalability of our method compared to standard game solvers and show that it exhibits approximate convergence to compact equilibria in large games.

\section{Related Work}\label{sec:related}
Due to space constraints, in this section we discuss related work that directly concerns our techniques and analysis. Additional discussions on related work and on the connection to Bayesian games are deferred to~\cref{appsecs:related}.

\paragraph{Sleeping bandits.}
Our work is closely related to the \emph{sleeping bandit} setting in the multi-armed bandit literature, which studies regret minimization when arms are available stochastically or adversarially~\citep{auer2002finite,blum2007external,kanade2009sleeping,kleinberg2010regret,kanade2014learning,saha2020improved,nguyen2024near}. These techniques have been applied to online combinatorial optimization~\citep{neu2014online,kale2016hardness} and reinforcement learning~\citep{drago2025sleeping}. 
\paragraph{Games with exponential action sets.} 
\citet{panageas2023semi} and 
\citet{dong2023taming} studied semi-bandit learning in congestion games which admit exponentially large action sets, while~\citet{farina2022kernelized} and \citet{kontogiannis2025efficient} studied regret minimization in \emph{polyhedral} games (with underlying combinatorial structure). The emphasis of these works is on deriving strong regret guarantees, whereas we additionally analyze efficient equilibrium \emph{representation}.
\paragraph{Connection to Bayesian games.}
By interpreting each player's availability set $S_i$ as their private type drawn from the marginal distribution induced by $\rho$, a GSAS can be viewed as a variant of Bayesian games~\citep{harsanyi1968games}. However, while a GSAS shares some structure with Bayesian games, they are distinct in several ways, introducing new challenges and motivating specialized algorithms: (i) classical Bayesian game solvers assume common knowledge of payoff functions, possible types, and prior distribution over types, whereas a GSAS allows for $\rho$ to be unknown to the solver; and (ii) algorithms for computing Bayesian NE have rates that depend on the number of types~\citep{fujii2025bayes,dagan2024external,peng2024complexity}.

\section{Games with Stochastic Action Sets}\label{sec:prelims}
\paragraph{Notation.} Denote the $n$-dimensional nonnegative orthant by $\mathbb{R}^n_{\ge0}$. For a finite set $S$, 
let $\Delta(S)$ be the probability simplex 
$\{ x \in \mathbb{R}^{|S|}_{\ge0}\mid \sum_{s\in S} x(s) = 1 \}$, 
such that if $y \in \Delta(S), s \in S$, $y(s)$ is
the probability 
that item $s$ is selected. 

Consider an $n$-player normal/strategic-form game $\mathcal{G}_\mathrm{orig}$ with finite action set $A_i$, joint action space $A = A_1 \times \dots \times A_n$, and payoff functions $u_i: A \rightarrow [-1, 1]$. 
In line with prevailing conventions, we let $a=(a_1,\dots,a_n)$ be an action profile, and as shorthand $u_i(a) = u_i(a_1,\dots,a_n)$. We also denote by $-i$ the set of players other than $i$, such that $u_j(a'_i, a_{-i}) = u_j(a_1,\dots, a_i', \dots, a_n)$.

\begin{definition}[GSAS]
    Given a game $\mathcal{G}_\mathrm{orig} = (A_1, \dots, A_n, u_1, \dots, u_n)$, let $\mathcal{S}_i \subseteq 2^{A_i} \setminus \{ \varnothing \}$, let $\mathcal{S} = \mathcal{S}_1 \times \dots \times \mathcal{S}_n$, and let $\rho \in \Delta(\mathcal{S})$ be a distribution over elements $\mathcal{S}$, such that $\rho(S)$ gives the probability that the availability set $S$ is observed for $S \in \mathcal{S}$.
    Then, a normal-form \textbf{G}ame with \textit{\textbf{S}tochastic \textbf{A}ction \textbf{S}ets} (GSAS) is given by the tuple $\mathcal{G} = (\mathcal{G}_\mathrm{orig}, \mathcal{S}, \rho)$.
\end{definition}
\begin{definition}[2p0s-GSAS]
 A two-player zero-sum GSAS (2p0s-GSAS) $\mathcal{G}$ is one where $\mathcal{G}_\mathrm{orig}$ is two-player zero-sum, i.e., $n=2, u_1(a)=-u_2(a)$ for all action profiles $a \in A$.
\end{definition}

Informally, a GSAS proceeds as follows. At the start of the game, each player privately receives their availability set $S_i \in \mathcal{S}_i$ from Nature based on $\rho$. Each then plays an action $a_i \in S_i$ simultaneously and receives a payoff $u_i(a)$ based on the action profile $a \in A$. While each player $i$ observes their availability set $S_i$ and might have full knowledge of $\rho$,
they do not observe their opponents' availability set $S_{-i}$.
In the remainder of the paper, unless otherwise stated, we make a technical assumption that facilitates our analysis. A similar assumption is often made to facilitate price-of-anarchy analyses in Bayesian games~\citep{fujii2025bayes, roughgarden2015price, syrgkanis2013composable, syrgkanis2012bayesian}.

\begin{assumption}\label{assumption:product}
    $\rho(S) = \prod_{i=1}^{n} \rho_i(S_i)$ for some probability distributions $\rho_i: \mathcal{S}_i \rightarrow [0, 1]$, i.e., the availability of actions is independent across players. 
\end{assumption}

A \textit{pure strategy} in a GSAS is a deterministic mapping $\pi_i: \mathcal{S}_i \rightarrow A_i$ where $\pi_i(S_i) \in S_i$. More generally, we define a \textit{mixed strategy} (or simply \textit{strategy}) for player $i$ as a mapping $\pi_i: \mathcal{S}_i \to \Delta(A_i)$ such that $\operatorname{supp}(\pi_i(S_i)) \subseteq S_i$. 
A player's strategy gives, for every possible observed subset of actions, a distribution of actions corresponding to the observed action subset.\footnote{Mixed strategies in Bayesian games are classically defined by \textit{distributions over pure strategies}. It follows from classical results \citep{harsanyi1968games} that this is strategically equivalent (for most equilibrium computation purposes, including ours) to our simpler definition of per-type conditional distribution over actions.} Note that even under~\cref{assumption:product}, the explicit representation of the $\pi_i$'s could be exponentially large, and dealing with this is a key contribution of our work. 

Given a joint availability set $S\in\mathcal{S}$, we denote by $\pi$ the joint strategy of all players and by $\pi(a \mid S) = \prod_{i \in [n]} \pi_i(a_i \mid S_i)$ the probability of an action profile $a$ for every $a\in A$.
Player $i$'s expected payoff is
\begin{equation}
    U_i(\pi) = \mathbb{E}_{S \sim \rho}\left[\mathbb{E}_{a \sim \pi(S)}[u_i(a)]\right],
\end{equation}
where the inner expectation is over the actions sampled independently according to each player's strategy given their action sets, and the outer expectation is over the availability set $S$ drawn from $\rho$. We also define the expected payoff of a specific action $a_i$ of player $i$ with respect to the ensemble of the opponents' strategies $\pi_{-i}$ by:
\begin{equation}
    U_i(a_i; \pi_{-i}) = \mathbb{E}_{S \sim \rho}\left[\mathbb{E}_{a_{-i} \sim \pi_{-i}(S_{-i})}[u_i(a_i, a_{-i})]\right].
\end{equation}


\begin{definition}[$\epsilon$-Nash equilibrium ($\epsilon$-NE)]
For $\epsilon\ge 0$, a strategy profile $\pi = (\pi_1, \ldots, \pi_n)$ is an $\epsilon$-Nash equilibrium if no player can improve their expected payoff by more than $\epsilon$ by unilaterally deviating from $\pi$, i.e., for every player $i \in [n]$ and for every alternative strategy $\pi_i'$,
\begin{equation}
    U_i(\pi_i, \pi_{-i}) \ge U_i(\pi_i', \pi_{-i}) - \epsilon.
\end{equation}
A $0$-NE is a Nash equilibrium, i.e., no player can strictly benefit by unilaterally deviating.
\label{def:ne}
\end{definition}
These definitions extend the classical normal-form Nash equilibrium to GSAS and are consistent with Bayesian games and Bayesian Nash equilibria~\citep{harsanyi1968games}.
While it is tempting to assume that solutions to GSAS may be easily `extracted' from the underlying $\mathcal{G}_\mathrm{orig}$, we show in the following example that this is in general \emph{untrue}.
In particular, the presence of stochastic action availabilities for a player can significantly change the set of NE, even if other players do not face stochastic action sets. 

\begin{example}
    Let $\mathcal{G}_\mathrm{orig}$ be a matching pennies game with actions $H$ and $T$. Define a GSAS by setting $u_1(H,H)=u_1(T,T)=1$, $u_1(H,T)=u_1(T,H)=-1$, and $u_2=-u_1$. Suppose that $\mathcal{S}_2 = \{ \{ H, T \} \}$, i.e., all actions are always available for player 2, but $\mathcal{S}_1=\{ S_{11} = \{ H, T \}, S_{12} = \{ H \} \}$ with $\rho_1(S_{11}) = \lambda$ and $\rho_1(S_{12}) = 1-\lambda$. If $\lambda=1$, we have the regular matching pennies game, while for $\lambda=0$, player 1's only available action is $H$. The set of NE is as follows: 
    \begin{enumerate}[nosep, topsep=0pt, partopsep=0pt, parsep=0pt, itemsep=0pt, leftmargin=*, label=(\roman*)]
    \item $\lambda > 0.5$: player 1 sets $\pi_1(T \mid S_{11}) = {0.5}/{\lambda}$ which ``effectively'' plays $T$ with probability $0.5$, while player $2$ plays uniformly at random: this is essentially the standard matching pennies; 
    \item $\lambda < 0.5$: player 1 sets $\pi_1(T\mid S_{11}) = 1$ and player 2 plays $T$ deterministically; and 
    \item $\lambda =0.5$: player $1$ sets $\pi_1(T\mid S_{11})= 1$ and player 2 plays $T$ with probability in $[0.5, 1]$.
    \end{enumerate}
    
    Intuitively, when $\lambda < 0.5$, player 1 can never play $T$ frequently enough and player 2 takes advantage of this by playing $T$ deterministically, while player $1$ plays $T$ whenever possible. When $\lambda > 0.5$, player 1 compensates by playing $T$ with higher probability when $S_{11}$ is offered, i.e., $T$ is available, since they must play $H$ all the time if $S_{12}$ is drawn.
\label{ex:matching_pennies}
\end{example}


\section{Properties of Equilibria in GSAS}\label{sec:gsasproperties}

For a player $i$, the marginal distribution over actions $a_i \in A_i$ induced by any of their strategies $\pi_i$ is 
$\mathbb{P} \left[ a_i ; \rho_i, \pi_i \right] =
    \textstyle\sum_{S_i \in \mathcal{S}_i}  \left( \rho_i(S_i)  
     \pi_i(a_i\mid S_i) \bbone\{ a_i \in S_i \}\right)$. 
A crucial observation is that since action availabilities are independent by~\cref{assumption:product}, player $i$'s payoff can be expressed in terms of $\mathbb{P}[a_j; \rho_j, \pi_j]$ for all players $j$, rather than the much larger $\pi_j$'s (derivation in~\cref{appsec:eqn4deriv}):
\begin{align}
   U_i(\pi) = \sum_{a \in A} u_i(a) \prod_{j \in [n]} \mathbb{P} \left[ a_j ; \rho_j, \pi_j \right].
    \label{eq:depend-only-marginals}
\end{align}


\begin{definition} \label{def:marginal}
Let $\mu_i \in \Delta(A_i)$ be a probability distribution over player $i$'s possible actions. We say that $\mu_i$ is \textit{implementable} if there exists a strategy $\pi_i:\mathcal{S}_i \to \Delta(A_i)$ such that, 
$\forall a_i \in A_i$, $\mu_i(a_i) = \mathbb{P}[a_i; \rho_i, \pi_i]$. 
We also say that $\pi_i$ \textit{implements} $\mu_i$, or $\pi = (\pi_1,\dots, \pi_n)$ implements $\mu = (\mu_1, \dots, \mu_n)$ if $\pi_i$ implements $\mu_i$ for all $i$. The set of implementable marginals for player $i$ is denoted $M_i \subseteq \Delta(A_i)$.
\end{definition}


Given an implementable marginal $\mu$, we abuse notation to define the expected payoff to player $i$ following $\mu$ as: $U_i(\mu) = \mathbb{E}_{S \sim \rho}\left[\mathbb{E}_{a \sim \mu}[u_i(a)]\right]$. Similarly, 
we let $U_i(a_i; \mu_{-i}) = \mathbb{E}_{S \sim \rho}\left[\mathbb{E}_{a_{-i} \sim \mu_{-i}}[u_i(a_i, a_{-i})]\right]$.
Note that by definition of $\mu$, $\mathbb{E}_{S \sim \rho}\left[\mathbb{E}_{a \sim \pi(S)}[u_i(a)]\right] = \mathbb{E}_{S \sim \rho}\left[\mathbb{E}_{a \sim \mu}[u_i(a)]\right]$.
Consequently, we can take each player's `effective' strategy space to be $M_i$, a  compact convex set, implying an equivalence between Nash equilibria in the sense of \cref{def:ne} and implementable marginal distributions that disincentivize unilateral deviations. 

\begin{restatable}{proposition}{neImplementable}
    \label{thm:ne-implementable}
    Consider a GSAS where $\pi = (\pi_1, \dots, \pi_n)$ is a strategy profile that implements $\mu = (\mu_1, \dots, \mu_n)$. Then $\pi$ is an $\epsilon$-Nash equilibrium if and only if for all $i \in [n]$, 
        $U_i(\mu) \geq \max_{\mu_i' \in M_i} U_i(\mu_i', \mu_{-i}) - \epsilon$.
\end{restatable}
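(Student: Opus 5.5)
The plan is to reduce both sides of the equivalence to statements about marginals, using the payoff formula \eqref{eq:depend-only-marginals}, which holds under \cref{assumption:product}. Write $\Pi_i$ for player $i$'s set of strategies $\pi_i:\mathcal{S}_i\to\Delta(A_i)$ with $\supp(\pi_i(S_i))\subseteq S_i$. The first step is to note that, by \eqref{eq:depend-only-marginals}, $U_i(\pi)$ depends on $\pi$ only through the marginals $\mathbb{P}[a_j;\rho_j,\pi_j]$. Hence, whenever $\pi$ implements $\mu$, we have $U_i(\pi)=\sum_{a}u_i(a)\prod_j\mu_j(a_j)=U_i(\mu)$, because $\rho$ is a product measure and the inner expectation does not depend on $S$. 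The same holds for any deviation: if $\pi_i'\in\Pi_i$ implements $\mu_i'$, then $(\pi_i',\pi_{-i})$ implements $(\mu_i',\mu_{-i})$, and so $U_i(\pi_i',\pi_{-i})=U_i(\mu_i',\mu_{-i})$.

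The second step is to identify the two sets of attainable deviation payoffs. The map $\Phi_i:\Pi_i\to\Delta(A_i)$ given by $\pi_i'\mapsto \mathbb{P}[\cdot\,;\rho_i,\pi_i']$ has image exactly $M_i$; this is just \cref{def:marginal}. Every strategy implements some element of $M_i$, and every element of $M_i$ is implemented by some strategy. Combining this with the first step gives
\[
\{U_i(\pi_i',\pi_{-i}) : \pi_i'\in\Pi_i\} \;=\; \{U_i(\mu_i',\mu_{-i}) : \mu_i'\in M_i\},
\]
and therefore the two suprema coincide.

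The third step is to justify writing a maximum. $M_i$ is the image of the compact convex set $\Pi_i$, a product of faces of simplices, under the linear map $\Phi_i$, so $M_i$ is compact and convex. The function $\mu_i'\mapsto U_i(\mu_i',\mu_{-i})$ is linear, hence continuous, so the maximum over $M_i$ is attained. Even without attainment, the equality of suprema would be enough.

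The final step is to chain the pieces. $\pi$ is an $\epsilon$-NE iff for all $i$ and all $\pi_i'$ we have $U_i(\pi)\ge U_i(\pi_i',\pi_{-i})-\epsilon$. This holds iff $U_i(\pi)\ge \sup_{\pi_i'}U_i(\pi_i',\pi_{-i})-\epsilon$, which by the above is equivalent to $U_i(\mu)\ge\max_{\mu_i'\in M_i}U_i(\mu_i',\mu_{-i})-\epsilon$. The only real care point is the first step: the abused notation $U_i(\mu)$, defined with an outer $\mathbb{E}_{S\sim\rho}$, must be checked to agree with \eqref{eq:depend-only-marginals}. Independence across players is exactly what lets the expectation over $S$ factor into the product of marginals. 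Without \cref{assumption:product}, the deviation payoffs would depend on correlations between $S_i$ and $S_{-i}$, and the equivalence could fail.
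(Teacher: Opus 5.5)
Your proposal is correct and follows essentially the paper's argument: both use \eqref{eq:depend-only-marginals} to get \(U_i(\pi)=U_i(\mu)\) and \(U_i(\pi_i',\pi_{-i})=U_i(\mu_i',\mu_{-i})\) for any \(\pi_i'\) implementing \(\mu_i'\), together with the fact that strategies map onto \(M_i\). The paper proves the two implications separately, whereas you handle both at once through a single equality of deviation-payoff sets and also check that the maximum over the compact set \(M_i\) is attained; this is a slightly tidier presentation of the same reasoning.
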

Notice that multiple $\pi$ may implement the same $\mu$, but not vice versa. \cref{thm:ne-implementable} implies that if $\pi$ and $\pi'$ both implement $\mu$ \emph{and} $\pi$ is a NE, then so is $\pi'$. In 2p0s-GSAS, a NE corresponds to a bilinear saddle point problem over $M_1$ and $M_2$, and a minimax theorem over implementable strategies holds.
\begin{restatable}
[Minimax theorem for 2p0s-GSAS]
{proposition}{neZeroSum}
    \label{thm:ne-2p0s}
    For a 2p0s-GSAS, a strategy profile $(\pi_1^*, \pi^*_2)$ is a NE if and only if the associated $(\mu^*_1, \mu^*_2)$ is a saddle point of the function $U=U_1=-U_2$, i.e.,
    $
        \mu^*_1 \in \argmax_{\mu_1 \in M_1}$ $ \min_{\mu_2 \in M_2} U(\mu_1, \mu_2)$ and
         $\mu^*_2 \in$ $ \argmin_{\mu_2 \in M_2}\max_{\mu_1 \in M_1} U(\mu_1, \mu_2)$,
        where
        $\max_{\mu_1 \in M_1}$ $ \min_{\mu_2 \in M_2} U(\mu_1, \mu_2) = \min_{\mu_2 \in M_2}$ $\max_{\mu_1 \in M_1} U(\mu_1, \mu_2)$.
\end{restatable}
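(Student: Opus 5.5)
The plan is to reduce the statement to the classical von Neumann/Sion minimax theorem on the sets $M_1, M_2$, and then use \cref{thm:ne-implementable} to pass between strategies and marginals.

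\textbf{Step 1: structure of $M_i$ and $U$.} First I will show that each $M_i$ is nonempty, convex and compact. The map $\pi_i \mapsto \mathbb{P}[\cdot;\rho_i,\pi_i]$ is linear in the table of values $(\pi_i(a_i\mid S_i))_{S_i, a_i}$. Its domain is a finite product of simplices $\prod_{S_i\in\mathcal{S}_i}\Delta(S_i)$, which is nonempty (each $S_i \neq \varnothing$), convex and compact. So $M_i$ is the linear image of a polytope, hence itself a polytope.

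By \eqref{eq:depend-only-marginals}, with $n=2$ we get $U(\mu_1,\mu_2)=\sum_{a} u_1(a)\mu_1(a_1)\mu_2(a_2)=\mu_1^\top A\mu_2$, which is bilinear and continuous. Von Neumann's minimax theorem (or Sion's) on the compact convex sets $M_1\times M_2$ then gives $\max_{\mu_1}\min_{\mu_2}U=\min_{\mu_2}\max_{\mu_1}U=:v$. The max and min are attained by compactness and continuity.

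\textbf{Step 2: NE $\Rightarrow$ saddle point.} Let $(\pi_1^*,\pi_2^*)$ be a NE implementing $(\mu_1^*,\mu_2^*)$. Apply \cref{thm:ne-implementable} with $\epsilon=0$ to both players. Since $U_2=-U$, this gives
\[
\max_{\mu_1}U(\mu_1,\mu_2^*)\le U(\mu^*)\le \min_{\mu_2}U(\mu_1^*,\mu_2).
\]
Hence $\min_{\mu_2}\max_{\mu_1}U\le U(\mu^*)\le \max_{\mu_1}\min_{\mu_2}U$. Combined with Step 1, all these quantities equal $v$, so $\mu_1^*$ attains the outer max and $\mu_2^*$ attains the outer min.

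\textbf{Step 3: saddle point $\Rightarrow$ NE.} Suppose $\mu_1^*$ is a maximin and $\mu_2^*$ a minimax strategy. Then
\[
\max_{\mu_1}U(\mu_1,\mu_2^*)=v=\min_{\mu_2}U(\mu_1^*,\mu_2).
\]
Sandwiching $U(\mu_1^*,\mu_2^*)$ between these two quantities forces $U(\mu^*)=v$. This yields exactly the two inequalities of \cref{thm:ne-implementable} with $\epsilon=0$, so any implementing $\pi^*$ is a NE.

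I expect no real obstacle. The only point needing care is Step 1: checking that $M_i$ is convex and compact, and that $U$ on marginals is genuinely bilinear. This is where \cref{assumption:product} and \eqref{eq:depend-only-marginals} enter.
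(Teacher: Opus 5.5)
Your proposal is correct and takes essentially the same route as the paper, whose proof is a one-line appeal to von Neumann's minimax theorem together with the fact that $U(\pi)=U(\mu)$ whenever $\pi$ implements $\mu$. You make explicit the details the paper leaves implicit: $M_i$ is a polytope as the linear image of $\prod_{S_i\in\mathcal{S}_i}\Delta(S_i)$, $U$ is bilinear in the marginals, and both directions follow from \cref{thm:ne-implementable} with $\epsilon=0$ (Step 2 needs only weak duality, while Step 3 genuinely uses the minimax equality).
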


The following result shows that NE in $\mathcal{G}_\mathrm{orig}$ correspond to NE in $\mathcal{G}$ if the original NE are implementable. 
\begin{restatable}{proposition}{gsasContainsNe}
Consider a GSAS $\mathcal{G} = (\mathcal{G}_\mathrm{orig}, \mathcal{S}, \rho)$. Let $x^*=(x^*_1,\dots,x^*_n)$ be an $\epsilon$-NE of $\mathcal{G}_\mathrm{orig}$, where $x^*_i \in \Delta(A_i)$. 
 If $\mu^*=x^*$ is implementable in $\mathcal{G}$ by $\pi^*=(\pi^*_1, \dots, \pi^*_n)$, then $\pi^*$ is an $\epsilon$-NE in $\mathcal{G}$. 
\label{thm:gsas-contains-ne}
\end{restatable}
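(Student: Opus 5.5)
We reduce the statement to \cref{thm:ne-implementable}, which says that $\pi^*$ is an $\epsilon$-NE of $\mathcal{G}$ exactly when every player's marginal-level best-response gap is at most $\epsilon$. Since $\pi^*$ implements $\mu^* = x^*$, it suffices to show, for every player $i$,
\[
U_i(\mu^*) \ge \max_{\mu_i' \in M_i} U_i(\mu_i', \mu^*_{-i}) - \epsilon .
\]

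First we identify the payoffs. By \eqref{eq:depend-only-marginals}, and since the outer expectation over $S\sim\rho$ in the definition of $U_i(\mu)$ is vacuous, we have for any product of marginals $\nu=(\nu_1,\dots,\nu_n)$ with $\nu_j\in\Delta(A_j)$
\[
U_i(\nu) = \sum_{a\in A} u_i(a)\prod_j \nu_j(a_j).
\]
This is exactly the mixed-strategy expected payoff of $\nu$ in $\mathcal{G}_\mathrm{orig}$; write it $u_i^{\mathrm{orig}}(\nu)$. Hence $U_i(\mu^*) = u_i^{\mathrm{orig}}(x^*)$ and $U_i(\mu_i',\mu^*_{-i}) = u_i^{\mathrm{orig}}(\mu_i', x^*_{-i})$ for any $\mu_i'\in M_i$.

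Second, we use the inclusion $M_i\subseteq\Delta(A_i)$, which holds by \cref{def:marginal}. It gives
\[
\max_{\mu_i'\in M_i} u_i^{\mathrm{orig}}(\mu_i',x^*_{-i}) \le \max_{y_i\in\Delta(A_i)} u_i^{\mathrm{orig}}(y_i,x^*_{-i}) \le u_i^{\mathrm{orig}}(x^*)+\epsilon,
\]
where the last inequality is the $\epsilon$-NE property of $x^*$ in $\mathcal{G}_\mathrm{orig}$. Chaining this with the payoff identification yields the required inequality for every $i$. Applying \cref{thm:ne-implementable} then concludes that $\pi^*$ is an $\epsilon$-NE of $\mathcal{G}$.

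There is no substantial obstacle. The only point needing care is that the abuse-of-notation payoff $U_i(\mu)$ coincides with the original game's mixed payoff. This relies on \cref{assumption:product}, through \eqref{eq:depend-only-marginals}, so that the payoff factorizes over the marginals. We would also note explicitly that the restriction to implementable deviations only shrinks the set of deviations, which is why the implication holds only in this direction.
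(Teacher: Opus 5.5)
Your proof is correct and follows the paper's argument: you identify the GSAS payoffs with the original game's mixed-strategy payoffs through the marginal factorization, then combine $M_i\subseteq\Delta(A_i)$ with the $\epsilon$-NE property of $x^*$. The only difference is cosmetic: you conclude via \cref{thm:ne-implementable}, whereas the paper argues directly with an arbitrary deviation $\pi_i'$ and the marginal $\mu_i'$ it induces.
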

Note that even in the two-player zero-sum case, \cref{thm:gsas-contains-ne} requires the NE in $\mathcal{G}_\mathrm{orig}$, $x^*$, to be implementable for both players, i.e., $x^*_i$ alone being implementable does not imply a solution to the maximin problem or optimal strategy for player $i$ (see \cref{ex:matching_pennies} (ii)).
The above discussion indicates that for the purposes of equilibrium representation, we can work with $M_i$ rather than the larger set of possible $\pi_i$. However, it is still unclear how to recover $\pi_i$ from $\mu_i$ efficiently. Our first major contribution is that every $\mu_i \in M_i$ can be \emph{approximately} implemented
by a compact, polynomially sized $w_i$.

\begin{restatable}{theorem}{gsasCompact}
    Let $\pi_i$ implement $\mu_i \in M_i$.
    Then, \(\forall \delta \in (0,1)\),
    there exists $w_i \in \Delta(A_i)$ prescribing a strategy \(\pi_i'\) where
		$\pi'_i(a_i \mid S_i) 
		:= \frac{w_i(a_i)\,\bbone\{a_i \in S_i\}}
		{\sum_{a'_i \in A_i} w_i(a'_i)\,\bbone\{a'_i \in S_i\}}$.
 Furthermore, \(\pi_i'\) induces a marginal \(\mu'_i\) with total variation distance \(d_{\mathrm{TV}}(\mu_i, \mu'_i) < \delta\).
	\label{thm:compact}
\end{restatable}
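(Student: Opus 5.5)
The plan is to realize the map from weights to marginals as the gradient of a convex potential, use convex duality to show that its range contains the relative interior of $M_i$, and then handle boundary points of $M_i$ by a density argument. Parametrize positive weights by $w_i(a)\propto e^{\theta_a}$ with $\theta\in\mathbb{R}^{A_i}$, and define
\[
\Phi(\theta)\;=\;\mathbb{E}_{S_i\sim\rho_i}\Bigl[\log \textstyle\sum_{a\in S_i} e^{\theta_a}\Bigr].
\]
A direct computation shows $\partial\Phi/\partial\theta_a = \sum_{S_i}\rho_i(S_i)\,\pi'_i(a\mid S_i)$. This is exactly the marginal $\mu'_i$ induced by the weight-proportional strategy $\pi'_i$ of the statement. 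So the set of marginals attainable by such strategies with strictly positive $w_i$ is the range of $\nabla\Phi$. Since $\Phi$ is a mixture of log-sum-exp functions, it is convex and smooth.

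\textbf{Step 1: describe $M_i$ by its support function.} For a direction $d\in\mathbb{R}^{A_i}$, I claim $\max_{\mu\in M_i}\langle\mu,d\rangle=\mathbb{E}_{S_i}[\max_{a\in S_i}d_a]$. The inequality $\le$ holds because any $\pi_i$ puts mass only on available actions. Equality is attained by the pure strategy that picks the $d$-maximal available action, i.e.\ a ranking-based strategy. This expectation is also the recession function of $\Phi$:
\[
\lim_{t\to\infty}\frac{\Phi(\theta+td)}{t}=\mathbb{E}_{S_i}\bigl[\max_{a\in S_i}d_a\bigr].
\]
Hence $\Phi^\infty=\sigma_{M_i}$, the support function of $M_i$.

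\textbf{Step 2: the main step, exact attainment on the relative interior.} Fix $\bar\mu$ in the relative interior of $M_i$ and consider minimizing $\Psi(\theta)=\Phi(\theta)-\langle\bar\mu,\theta\rangle$. The recession function of $\Psi$ is $\sigma_{M_i}(d)-\langle\bar\mu,d\rangle\ge 0$. Because $\bar\mu$ is relatively interior, this is strictly positive unless $d$ lies in the lineality space, namely the directions on which $\langle\mu,d\rangle$ is constant over $M_i$. Two kinds of such directions arise: the all-ones direction, and coordinates of actions that are never available, where $\mu(a)=0$ throughout. Along these directions $\Psi$ is constant, because $\Phi(\theta+t\mathbf 1)=\Phi(\theta)+t$ and $\langle\bar\mu,\mathbf 1\rangle=1$. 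Quotienting them out, $\Psi$ is coercive, so by standard convex analysis (Rockafellar, Thm.~27.1) a minimizer $\theta^\ast$ exists. At the minimizer $\nabla\Phi(\theta^\ast)=\bar\mu$. Normalizing $w_i=e^{\theta^\ast}/\|e^{\theta^\ast}\|_1$ then gives a $w_i\in\Delta(A_i)$ whose strategy $\pi_i'$ implements $\bar\mu$ exactly.

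I expect this step to be the main obstacle. The care needed is in verifying that the only directions with zero recession are exactly the lineality directions of $M_i$. That requires identifying the affine hull of $M_i$, which may be lower-dimensional because of tight constraints of the form $\mu(B)=\mathbb{P}(S_i\cap B\neq\varnothing)$. I would argue this abstractly: $\sigma_{M_i}(d)-\langle\bar\mu,d\rangle=0$ together with $\bar\mu\in\operatorname{relint}M_i$ forces $d$ to be orthogonal to $\operatorname{aff}M_i-\bar\mu$. Along such $d$, the quantity $\max_{a\in S_i}d_a$ equals $\langle\mu,d\rangle$ for every ranking vertex, which forces $d$ to be constant on each $S_i$ in the support of $\rho_i$. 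Then $\Phi(\theta+td)-\langle\bar\mu,\theta+td\rangle$ is constant in $t$, so these directions can indeed be quotiented out.

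\textbf{Step 3: boundary points by density.} Now let $\mu_i\in M_i$ be arbitrary, with $M_i$ convex and compact. Pick any $\hat\mu\in\operatorname{relint}M_i$ and set $\bar\mu=(1-s)\mu_i+s\hat\mu$. This point lies in the relative interior for every $s\in(0,1]$, and $d_{\mathrm{TV}}(\mu_i,\bar\mu)\le s$. Choosing $s<\delta$ and applying Step 2 to $\bar\mu$ produces the required $w_i$ with $d_{\mathrm{TV}}(\mu_i,\mu_i')<\delta$. The strict inequality and the restriction $\delta\in(0,1)$ reflect exactly that boundary marginals, such as the ranking vertices, are only reached in the limit of weights tending to zero.
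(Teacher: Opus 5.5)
Your proof is correct, but it takes a different route from the paper's.

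\textbf{The paper's route.} The paper perturbs $\pi_i$ to $(1-\epsilon)\pi_i+\epsilon\,\mathrm{Unif}(S_i)$. This makes the joint matrix $Y(S_i,a_i)=\rho_i(S_i)\tilde\pi_i^{(\epsilon)}(a_i\mid S_i)$ have exactly the zero pattern of the availability matrix $X$. It then invokes Menon's diagonal-scaling theorem to obtain $P=\mathrm{diag}(u)X\mathrm{diag}(v)$ with the same row and column sums, and $w_i$ is $v$ normalized.

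\textbf{How your route relates.} Your Step 3 is the same perturbation: the paper's choice is $s=\epsilon$, with $\hat\mu$ the marginal of the uniform-on-$S_i$ strategy. Your Steps 1--2 replace the black-box scaling theorem with a self-contained convex-duality proof of the special case needed. Eliminating $u(S_i)=\rho_i(S_i)/\sum_{a\in S_i}v_a$ turns the column-sum condition into exactly $\nabla\Phi(\log v)=\bar\mu$. Menon's combinatorial feasibility condition (a same-support $Y$ exists) becomes your geometric one, $\bar\mu\in\mathrm{ri}\,M_i$.

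One small imprecision: your initial list of ``two kinds'' of zero-recession directions is not exhaustive. For example, $\mathcal S_i=\{\{a,b\},\{c\}\}$ also yields $e_c$. However, the abstract argument you then give covers the general case: such a $d$ is constant on every $S_i$ in the support of $\rho_i$, so $\Psi$ is constant along it.

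\textbf{What each approach buys.} The paper's route is shorter given the classical theorem. It also connects directly to entropic optimal transport, which the paper uses for its maximum-entropy remark.

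Your route gives several things the paper's does not:
\begin{itemize}
\item It shows that every relatively interior marginal is implemented exactly.
\item It handles degenerate inputs automatically. Sets with $\rho_i(S_i)=0$, or actions that are never available, produce zero row or column sums. These violate the positivity hypotheses of the scaling theorem as the paper states it.
\item It exhibits $\Psi(\theta)=\Phi(\theta)-\langle\bar\mu,\theta\rangle$ as the convex objective whose stochastic negative gradient is the update $G_i^t$ of \cref{alg:compute_w}, when the target marginal is fed in. Up to an additive constant, $\Psi$ is also the KL Lyapunov function $V$ used in the proof of \cref{prop:compute_w}. So your argument unifies existence and computation.
\end{itemize}
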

Intuitively, any implementable $\mu_i$ can be $\delta$-approximated in terms of total variation distance using a compact representation by an $|A_i|$-dimensional vector $w_i$.  
Given $w_i$, the corresponding $\pi'_i(a_i\mid S_i)$ for a fixed $S_i$ can be computed in time linear in $|A_i|$, playing proportionally to $w_i$ but restricted to the available actions $S_i$. 

\begin{example}
    \label{ex:rps}
    Consider a 2p0s-GSAS $\mathcal{G}$ where $\mathcal{G}_\mathrm{orig}$ is a variant of standard rock-paper-scissors where the payoff is halved whenever player 1 plays paper, i.e., the payoff is $0.5$ (resp. $-0.5$) instead of $1$ (resp. $-1$) for winning (resp. losing). The payoff matrix of $\calG_{\mathrm{orig}}$ is given by:
    \begin{align*}
        \begin{array}
        {cccc}
        \toprule
         & \textbf{R}\text{ock}& \textbf{P}\text{aper} & \textbf{S}\text{cissors}  \\
        \midrule
        \textbf{R}\text{ock}     & 0  & -1 & 1 \\
        \textbf{P}\text{aper} & 0.5 & 0  & -0.5 \\\textbf{S}\text{cissors}    & -1  & 1  & 0 \\
        \bottomrule
        \end{array}
    \end{align*}
    

Let $\mathcal{S}_1=\{ S_{11}=\{ \textbf{R}, \textbf{P}, \textbf{S} \}, S_{12} = \{ \textbf{P}, \textbf{S} \} \}$ and $\mathcal{S}_2 = \{ S_{21} = \{ \textbf{R}, \textbf{S} \} , S_{22} = \{ \textbf{P}, \textbf{S}\} \}$ and $\rho_1(S_{11}) = \rho_1(S_{12}) = \rho_2(S_{21}) = \rho_2(S_{22}) =0.5$. It is easy to verify that a possible NE $\pi^*$ is $\pi^*_1(S_{11}) = \left(\textbf{R}: \frac{1}{2}, \textbf{P}: 0, \textbf{S}: \frac{1}{2}\right)$, $\pi^*_1(S_{12}) = \left(\textbf{R}: 0, \textbf{P}: 1, \textbf{S}: 0\right)$, $\pi^*_2(S_{21}) = \left(\textbf{R}: \frac{2}{3}, \textbf{P}: 0, \textbf{S}: \frac{1}{3}\right)$, and $\pi^*_2(S_{22}) = \left(\textbf{R}: 0, \textbf{P}: \frac{2}{3}, \textbf{S}: \frac{1}{3}\right)$. Both players obtain an expected payoff of $0$.
The corresponding marginal distributions of play are $\mu^*_1 = \left(\textbf{R}: \frac{1}{4}, \textbf{P}: \frac{1}{2}, \textbf{S}: \frac{1}{4}\right)$ and $\mu^*_2 = \left(\textbf{R}: \frac{1}{3}, \textbf{P}:\frac{1}{3}, \textbf{S}: \frac{1}{3}\right)$. 
It can be shown that compact vectors $w^*_1 = \left(\textbf{R}:\frac{1}{2}, \textbf{P}: \frac{1}{3}, \textbf{S}: \frac{1}{6}\right)$ and $w^*_2 = \left(\textbf{R}: \frac{2}{5}, \textbf{P}: \frac{2}{5}, \textbf{S}: \frac{1}{5}\right)$ are consistent with the marginal distribution and yield a NE of $\calG$ (see~\cref{appsec:analyticalRPS} for a derivation).
\end{example}

\begin{remark}
The compact representation $w_i$ has two interesting properties. (i) The $\pi_i$ it induces satisfies independence of irrelevant alternatives (IIA), and (ii) it yields the $\pi_i$ that approximately implements $\mu_i$ with maximum entropy. Details are deferred to~\cref{appsec:furtherprops}.
\label{rem:compact-interesting}
\end{remark}


Our compactness result in this section can be viewed as an analogue of Kuhn's theorem~\citep{kuhn1953extensive} for extensive-form games, which shows that behavioral strategies are \emph{outcome-equivalent} to the larger set of normal-form strategies under the assumption of perfect recall. Similarly,~\cref{thm:compact} ensures that compact vectors suffice to \emph{approximately} capture implementable NE under~\cref{assumption:product}. 

Thus far, we have used the term `compact representation' rather loosely to refer to a simplex vector of size $\vert A_i\vert$. The astute reader might be concerned about their precise \emph{bit complexity}, i.e., whether the entries of $w_i^*$ can be written using a polynomial number of bits with respect to $\calG_{\mathrm{orig}}$. In~\cref{appsec:bitcomplexity}, we derive several results in this direction. First, we formalize the notion of a `bit-compact' representation in the context of strategies in a GSAS. 
We then show in~\cref{thm:nocompact} that without~\cref{assumption:product}, there exist 2p0s-GSAS that do \emph{not} admit bit-compact representations. 
This further justifies the importance of~\cref{assumption:product} in our analysis, and motivates future work on finding subclasses of GSAS for which bit-compact representations always exist.

\section{Computing Equilibria in GSAS}\label{sec:compute}

In this section, we focus on the problem of NE computation in 2p0s-GSAS. There are three main regimes of interest. (i) The \textbf{small-support} regime, where $\mathcal{S}$ is small and $\rho$ is known exactly (cf. \cref{ex:rps}). (ii) The \textbf{oracle-access} regime, where $\mathcal{S}$ is exponentially large in $|A_i|$ so that $\rho$ cannot be explicitly enumerated, but can be queried in constant time for every $S \in \mathcal{S}$.
(iii) The \textbf{sample-access} regime, where $\rho$ is unknown and we only have access to a simulator that samples $S \sim \rho$. 
In regime (i), one could represent the game as a Bayesian game and apply off-the-shelf Bayesian game solvers. However, this does not take advantage of the additional structure afforded by GSAS, and incurs runtime costs linear in the number of types, which can be exponential in $\vert A_i\vert$ (cf. \cref{appsec:bayesiangame}). Our goal is to design a broad approach that applies to any GSAS, even in regime (iii).
All proofs of the theoretical results in this section are deferred to~\cref{appsecs:proofs5}.

\subsection{Ranking regret minimization in GSAS}
Going forward, we utilize the \emph{online learning} or \emph{repeated} game paradigm. In this setting, for each player $i\in [n]$, the sequence $\{S^t_i\}_{t=1, \ldots, T}$ is sampled i.i.d. from $\rho_i$. In each iteration $t$, player $i$ selects a strategy $\pi_i^t$, observes $S_i^t$, samples $a_i^t\sim\pi_i^t(\cdot\mid S_i^t)$, and then observes $u_i(a_i,a_{-i}^t)$ for every $a_i\in S_i^t$. Notice that this setting applies to all regimes (i)-(iii) as described above. A standard performance metric for learning in games is (cumulative) \emph{external regret}, formally defined as $\max_{a'_i\in A_i}\sum_{t=1}^T \left[u_i(a'_i,a^t_{-i}) - u_i(a^t_i,a^t_{-i})\right]$, with the folk result that no-external-regret algorithms 
converge in time-average to the set of Nash equilibria in 2p0s games. 
In GSAS, standard notions of regret are unsuitable since the competing action may be unavailable in certain rounds. This motivates the adoption of a regret variant called \emph{ranking regret} that was first studied in the sleeping bandit literature~\citep{kanade2009sleeping}.


Some additional notation is in order: a ranking $\tau_i$ is a total order on $A_i$. The corresponding \emph{ranking strategy} $\pi_i^{\tau_i}$ selects $\operatorname{top}_{\tau_i}(S_i)$, the highest-ranked action in the realized availability set $S_i$. Let $\mathcal T_i$ denote the set of rankings of $A_i$, and let $\mu_i^{\tau_i}\in M_i$ be the marginal induced by $\pi_i^{\tau_i}$. 
In the GSAS setting, the regret definition is given by comparing the players' \emph{expected} payoffs $U_i$ for playing strategy $\pi^t_i$ with the expected payoff for playing according to the ranking strategy $\pi^{\tau_i}_i$, which is selected in hindsight. 

\begin{definition}[Ranking Regret]
\label{def:ranking-regret}
Consider a sequence of strategy profiles $\pi^t=(\pi_1^t,\pi_2^t,\dots,\pi_n^t)$, $t=1,\ldots,T$. Player $i$'s ranking regret is
\begin{equation}
    R_{T,i}^{\mathrm{rank}}
    :=
    \max_{\tau_i\in\mathcal T_i}
    \sum_{t=1}^T
    \left[
        U_i(\pi_i^{\tau_i},\pi_{-i}^t)
        -
        U_i(\pi^t)
    \right].
    \label{eq:ranking-regret}
\end{equation}
\end{definition}


If a player's ranking regret grows sublinearly, i.e., $R_{T,i}^{\mathrm{rank}}=o(T)$ as $T\to\infty$, they are said to have \emph{no-ranking-regret}. Intuitively, the player does not regret playing their strategy $\pi_i^t$ against the best action ranking in expectation.

Next, while ranking regret is defined for arbitrary GSAS, in~\cref{prop:SINash} we establish a connection between algorithms that achieve no-ranking-regret and Nash equilibria in the special case of 2p0s-GSAS. Let $\mu_i^t$ be the marginal distribution induced by $\pi_i^t$, and define the time-averaged induced marginal distribution
    $\bar\mu_i^T
    :=
    \frac1T\sum_{t=1}^T\mu_i^t$.
This marginal is implementable by definition, since the pointwise average strategy
    $\bar\pi_i^T(\cdot\mid S_i)
    :=
    \frac1T\sum_{t=1}^T\pi_i^t(\cdot\mid S_i)$
induces $\bar\mu_i^T$.
In 2p0s-GSAS, let $U \coloneqq U_1 = -U_2$. We consider the standard Nash convergence metric of \emph{saddle point residual} (SPR).
By \cref{thm:ne-implementable}, it suffices to consider the implementable strategies $\mu_i\in M_i$ for each player $i$:

\begin{definition}[SPR in implementable strategies]
\label{def:SPR}
    \begin{align}
    \operatorname{SPR}(\mu_1,\mu_2)
    :=
    \max_{\mu_1'\in M_1}U(\mu_1',\mu_2)
    -
    \min_{\mu_2'\in M_2}U(\mu_1,\mu_2').
    \label{eq:SPR}
\end{align}
\end{definition}

\begin{restatable}{proposition}{SINashConvergence}
    \label{prop:SINash}
Consider a 2p0s-GSAS $\calG$ satisfying \cref{assumption:product}. For every strategy sequence,
$
    \operatorname{SPR}(\bar\mu_1^T,\bar\mu_2^T)
    =
    \frac{
        R_{T,1}^{\mathrm{rank}}
        +
        R_{T,2}^{\mathrm{rank}}
    }{T}
    =:
    \epsilon_T.$
Consequently, every strategy profile implementing $(\bar\mu_1^T,\bar\mu_2^T)$ is an $\epsilon_T$-Nash equilibrium. In particular, if $R_{T,i}^{\mathrm{rank}}=o(T)$ for both players, then the time-averaged marginals converge to the set of NE of $\calG$ in terms of SPR.
\end{restatable}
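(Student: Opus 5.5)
The plan is to reduce everything to the bilinearity of $U$ in the marginals, given by \eqref{eq:depend-only-marginals}, together with one structural fact: a linear function over $M_i$ is maximized by a ranking marginal $\mu_i^{\tau_i}$. With these two ingredients, the identity $\operatorname{SPR}(\bar\mu_1^T,\bar\mu_2^T)=\epsilon_T$ follows by summing the two regret expressions, in which the realized payoffs $U(\pi^t)$ cancel.

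\textbf{Step 1 (rankings are optimal responses).} Fix any vector $g\in\mathbb{R}^{A_1}$; we will use $g(a_1)=U(a_1;\mu_2)$. For any strategy $\pi_1$, write
\[
\sum_{a_1} g(a_1)\,\mathbb{P}[a_1;\rho_1,\pi_1]=\sum_{S_1}\rho_1(S_1)\sum_{a_1\in S_1}\pi_1(a_1\mid S_1)\,g(a_1).
\]
Each inner sum is at most $\max_{a_1\in S_1} g(a_1)$. Let $\tau_1$ be the ranking that sorts $A_1$ by decreasing $g$, breaking ties arbitrarily. Then $\operatorname{top}_{\tau_1}(S_1)$ attains this maximum simultaneously for every $S_1$. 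Hence
\[
\max_{\mu_1'\in M_1}\langle g,\mu_1'\rangle=\max_{\tau_1\in\mathcal T_1}\langle g,\mu_1^{\tau_1}\rangle.
\]
The same argument with minimization gives the analogous statement for player 2. I expect this to be the only genuinely non-formal step: it is where the definition of ranking regret is shown to be the right one. Its key point is that a single ranking is pointwise optimal for all availability sets at once, and this relies on the comparator payoff vector $g$ not depending on $S_1$. That independence is exactly what \cref{assumption:product} buys us, since the opponent's marginal does not depend on $S_1$.

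\textbf{Step 2 (summing regrets).} By \eqref{eq:depend-only-marginals}, $U_1(\pi_1^{\tau_1},\pi_2^t)=U(\mu_1^{\tau_1},\mu_2^t)$, and this is linear in $\mu_2^t$. Therefore
\[
R^{\mathrm{rank}}_{T,1}=\max_{\tau_1}T\,U(\mu_1^{\tau_1},\bar\mu_2^T)-\sum_t U(\mu_1^t,\mu_2^t)=T\max_{\mu_1'\in M_1}U(\mu_1',\bar\mu_2^T)-\sum_t U(\mu^t),
\]
where the second equality uses Step 1. Similarly, since $U_2=-U$,
\[
R^{\mathrm{rank}}_{T,2}=-T\min_{\mu_2'\in M_2}U(\bar\mu_1^T,\mu_2')+\sum_t U(\mu^t).
\]
Adding the two expressions and dividing by $T$ yields $\operatorname{SPR}(\bar\mu_1^T,\bar\mu_2^T)=\epsilon_T$.

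\textbf{Step 3 (equilibrium consequence).} Let $\pi$ implement $(\bar\mu_1^T,\bar\mu_2^T)$. Since $\bar\mu_1^T\in M_1$ and $\bar\mu_2^T\in M_2$, we have
\[
\min_{\mu_2'}U(\bar\mu_1^T,\mu_2')\le U(\bar\mu_1^T,\bar\mu_2^T)\le\max_{\mu_1'}U(\mu_1',\bar\mu_2^T).
\]
It follows that each player's maximal unilateral gain over $M_i$ is at most the SPR, namely at most $\epsilon_T$. Then \cref{thm:ne-implementable} gives that $\pi$ is an $\epsilon_T$-NE. Finally, if $R^{\mathrm{rank}}_{T,i}=o(T)$ for both players, then $\epsilon_T\to0$, which is the stated convergence in SPR.
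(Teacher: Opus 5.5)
Your proposal is correct and follows essentially the same route as the paper. Both arguments first show that a ranking sorted by the action values attains the best response over $M_i$ simultaneously for every $S_i$. They then add the two ranking regrets so that the realized payoffs cancel and bilinearity yields the SPR, and finish with \cref{thm:ne-implementable}. The only cosmetic difference is that the paper sorts by the cumulative values $G_i(a_i)=\sum_t U_i(a_i;\pi_{-i}^t)$, whereas you first collapse the sum to $\bar\mu_{-i}^T$ by linearity, which is equivalent.
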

In practice, we use the empirical marginals $\tilde{\mu}^T_i := \frac{1}{T}\sum_{t=1}^T \pi^t_i(\cdot \mid S^t_i)$ in place of $\bar{\mu}^T_i$; a standard concentration argument shows that the same SPR guarantee holds with high probability.

\begin{remark}[Necessity of Ranking Regret]\label{rem:externalfails}
    While sublinear external regret suffices to show time-averaged convergence to the set of NE in standard 2p0s games, this implication does not hold in GSAS. In the sleeping bandits literature, other natural analogues of regret compatible with action restrictions are sleeping {external} and {internal} regret. However, we show in~\cref{appsec:counter} that neither notion is sufficient to guarantee convergence to Nash equilibrium in GSAS.
\end{remark}
Proposition \ref{prop:SINash} reduces equilibrium computation to the problem of achieving sublinear \emph{expected pathwise} ranking regret. Here ``pathwise'' means that the best fixed ranking is chosen
after the randomly sampled trajectory is realized. We address this problem in the sample-access setting (regime (iii)), where player $i$ can draw independent sets from $\rho_i$ in addition to observing the realized set $S_i^t$.

Our proposed algorithm, CAT-FTPL, uses some ideas from prior work. Follow-the-Perturbed-Leader (FTPL)~\citep{hannan1957approximation} selects an available action by perturbing cumulative estimated losses; \textsc{SleepingCat}~\citep{neu2014online} uses Counting Asleep Times (CAT) to replace unknown inverse availability probabilities by geometric waiting times; and Geometric Resampling (GR)~\citep{neu2016importance} simulates these waiting times. CAT-FTPL uses a GR implementation of a CAT-style estimator, together with a small implicit-exploration offset~\citep{kocak2014efficient} that yields the exponential-moment control needed to obtain sublinear regret.

Using player $i$'s observed payoff, we define the observed loss $\ell_{i,t}(a_i):=\frac{1}{2}(1-u_i(a_i,a_{-i}^t))\in[0,1].$ Let $\widehat L_i^{t-1}$ be the cumulative estimated-loss vector. CAT-FTPL draws independent standard Gumbel perturbations $Z_{i,t}(a_i)$ and plays
\begin{equation}
    a_i^t\in\argmin_{a_i\in S_i^t}
    \bigl\{\eta_i\widehat L_i^{t-1}(a_i)-Z_{i,t}(a_i)\bigr\}.
    \label{eq:cat-ftpl-choice}
\end{equation}

Equivalently, the induced strategy  for player $i$ is
\begin{equation}
    \pi_i^t(a_i\mid S_i)
    =\frac{\mathbb 1\{a_i\in S_i\}
      e^{-\eta_i\widehat L_i^{t-1}(a_i)}}
      {\sum_{b_i\in S_i}e^{-\eta_i\widehat L_i^{t-1}(b_i)}}.
    \label{eq:cat-ftpl-policy}
\end{equation}

Thus, each perturbation vector induces a random ranking, but the algorithm stores only one score per primal action of the GSAS. Going forward, $\pi_i^t$ denotes the induced strategy as in \eqref{eq:cat-ftpl-policy}, and $a_i^t$ denotes its sampled action.

It remains to estimate the losses of actions that are not observed at a given round. For each $a_i\in S_i^t$, draw auxiliary sets $\widetilde S_{i,t,1},\widetilde S_{i,t,2},\ldots\stackrel{\mathrm{i.i.d.}}{\sim}\rho_i$ and independent stopping indicators $B_{i,t,k}(a_i)\sim\operatorname{Bernoulli}(\gamma_i/(1+\gamma_i))$. Define the number of trials until the first success as
\begin{equation}
    H_{i,t}(a_i):=
    \min\bigl\{k\ge1:
      a_i\in\widetilde S_{i,t,k}\ \text{or}\ B_{i,t,k}(a_i)=1\bigr\}.
    \label{eq:cat-ftpl-count}
\end{equation}
Putting these elements together, we present CAT-FTPL in~\cref{alg:cat-ftpl}. Theoretically, we show a sublinear expected ranking regret bound for players using CAT-FTPL in GSAS.
\begin{algorithm}[H]
\caption{CAT-FTPL for player $i$}
\label{alg:cat-ftpl}
\begin{algorithmic}[1]
\STATE \textbf{Input:} $\eta_i,\gamma_i>0$ and sample access to $\rho_i$
\STATE $\widehat L_i^0\gets0\in\mathbb R^{|A_i|}_{\ge 0}$
\FOR{$t=1,\ldots,T$}
    \STATE Observe $S_i^t$, draw Gumbel perturbations and select $a_i^t$ according to
           \eqref{eq:cat-ftpl-choice}
    \STATE Observe $\ell_{i,t}(a_i)$ for every $a_i\in S_i^t$
    \STATE $\widehat\ell_{i,t}\gets0\in\mathbb R^{|A_i|}_{\ge 0}$
    \FOR{$a_i\in S_i^t$}
        \STATE Generate $H_{i,t}(a_i)$ by \eqref{eq:cat-ftpl-count} and set
        $\widehat\ell_{i,t}(a_i)\gets
         \ell_{i,t}(a_i)H_{i,t}(a_i)/(1+\gamma_i)$
    \ENDFOR
    \STATE $\widehat L_i^t\gets
           \widehat L_i^{t-1}+\widehat\ell_{i,t}$
\ENDFOR
\end{algorithmic}
\end{algorithm}

\begin{restatable}{theorem}{CATFTPLRegret}
\label{thm:cat-ftpl}
A player using CAT-FTPL has expected pathwise ranking regret bounded as
\begin{equation}
    \mathbb E\left[R_{T,i}^{\mathrm{rank}}\right]
    \le 2\left(\log|A_i|\left(\frac{1}{\eta_i}
    +\frac{1}{\gamma_i}\right)
    +|A_i|T(\eta_i+\gamma_i)\right).
    \label{eq:cat-ftpl-bound}
\end{equation}
In particular, with parameters
\begin{equation}
    \eta_i=\gamma_i=
    \sqrt{\frac{\log|A_i|}{|A_i|T}},
    \label{eq:cat-ftpl-parameters}
\end{equation}
we obtain $\mathbb E\left[R_{T,i}^{\mathrm{rank}}\right]\le 8\sqrt{|A_i|T\log|A_i|}.$
\end{restatable}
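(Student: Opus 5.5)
The plan is to follow the standard FTPL/exponential-weights analysis, as in \textsc{SleepingCat}, and work with losses, since $U_i = 1-2\,\mathbb E[\ell]$ makes ranking regret twice the loss-ranking regret; this accounts for the leading factor $2$. Fix the losses $\ell_{i,t}$ generated by the opponent and write the expected loss of the played strategy given $\widehat L_i^{t-1}$ as
\[
\mathbb E_{S_i\sim\rho_i}\sum_{a_i} \pi_i^t(a_i\mid S_i)\,\ell_{i,t}(a_i) = \sum_{a_i} \mu_i^t(a_i)\,\ell_{i,t}(a_i).
\]
This holds because $S_i^t$ is independent of the past. A ranking $\tau$ similarly gives $\sum_{a_i}\mu_i^{\tau}(a_i)\ell_{i,t}(a_i)$ with $\mu_i^\tau(a_i)=\mathbb P_{S_i}[\operatorname{top}_\tau(S_i)=a_i]$.

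\textbf{Step 1 (estimator bias).} Let $q(a_i)=\mathbb P_{\rho_i}[a_i\in S_i]$ and $\beta=\gamma_i/(1+\gamma_i)$. Then $H_{i,t}(a_i)$ is geometric with success probability $p=q+(1-q)\beta$, so
\[
\mathbb E\bigl[\widehat\ell_{i,t}(a_i)\mid \mathcal F_{t-1}\bigr]=\frac{q\,\ell_{i,t}(a_i)}{(1+\gamma_i)p}=\frac{q\,\ell_{i,t}(a_i)}{q+\gamma_i}\le \ell_{i,t}(a_i).
\]
The estimator is therefore optimistically biased. I will also need the second-moment bound $\mathbb E[\widehat\ell^2]\le 2q\ell/(q+\gamma_i)^2$ and the exponential-moment control supplied by the implicit-exploration offset.

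\textbf{Step 2 (reduce to full-information Hedge on a surrogate).} The standard trick is to view CAT-FTPL through the mixture over $S_i$: the mapping from a perturbed score vector to $\operatorname{top}(S_i)$ defines a ranking-valued FTPL. With Gumbel perturbations, FTPL over rankings is equivalent to exponential weights, or a Plackett--Luce type sampling, so the usual potential $\Phi_t=\frac1{\eta_i}\log\sum_{a}e^{-\eta_i\widehat L_i^{t}(a)}$ can be used. More concretely, the Gumbel-max identity shows that the induced ranking is distributed as a Plackett--Luce permutation with weights $e^{-\eta_i\widehat L^{t-1}}$, and for any fixed $S$ this yields \eqref{eq:cat-ftpl-policy}. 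I would then apply the FTPL "be-the-leader" decomposition in ranking space. The perturbation term is bounded by the expected maximum of $|A_i|$ Gumbels, giving $\log|A_i|/\eta_i$. The stability term is $\eta_i\sum_t\mathbb E\sum_a\mu_i^t(a)\widehat\ell_{i,t}(a)^2$. Using $\mu_i^t(a)\le q(a)$ and Step 1, this is at most $\eta_i\sum_t\sum_a 2q^2/(q+\gamma_i)^2\lesssim \eta_i|A_i|T$.

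\textbf{Step 3 (bias and pathwise comparator).} The comparator uses $\sum_t\mu^\tau\cdot\widehat\ell_t$. Converting this back to true losses costs $\sum_t\sum_a\mu^{\tau}(a)\,\ell\,\gamma_i/(q+\gamma_i)$, and since $\mu^\tau(a)\le q(a)$ this is at most $\gamma_i|A_i|T$. The harder issue is that the comparator is chosen pathwise, i.e.\ after seeing the realized $\widehat L$. I therefore need $\mathbb E\max_\tau\sum_t(\ldots)$ rather than $\max_\tau\mathbb E$. I would handle this with a supermartingale argument in the style of implicit exploration (Neu 2015). The sum
\[
\sum_t\mu^\tau\cdot(\widehat\ell_t-\ell_t/(1+\gamma_i))
\]
has exponential moments controlled by $\gamma_i$, which gives $\mathbb P[\cdot>\log(1/\delta)/\gamma_i]\le\delta$. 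A union bound over the $|A_i|$ relevant extremal directions (the comparator is linear in $\widehat L$, and $\mu^\tau$ lies in a polytope with vertices indexed by rankings) then contributes $\log|A_i|/\gamma_i$. \emph{This is the main obstacle}, since a naive union bound over $|A_i|!$ rankings gives $|A_i|\log|A_i|$. I would try to reduce to per-action deviations: $\mu^\tau\cdot X\le\max_a$ of partial-sum-type quantities, or I would bound $\exp$-moments of $\max_a$ of the per-action estimation errors weighted by $q(a)$.

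Summing the perturbation, stability, bias and deviation terms and multiplying by $2$ gives \eqref{eq:cat-ftpl-bound}. Plugging in \eqref{eq:cat-ftpl-parameters} makes each of the four terms $\sqrt{|A_i|T\log|A_i|}$, for a total of $8\sqrt{|A_i|T\log|A_i|}$.
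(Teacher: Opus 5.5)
\textbf{Gap 1: the potential.} The core step, bounding regret on the estimated losses against the best \emph{ranking}, is not actually established. The potential you write, $\frac{1}{\eta}\log\sum_{a\in A_i}e^{-\eta\widehat L(a)}$, runs over all of $A_i$. Its increments are governed by the unrestricted exponential-weights vector $p_t\propto e^{-\eta\widehat L^{t-1}}$, not by the played marginal $\mu_t(a)=\mathbb E_{S}[\pi^t(a\mid S)]$. Its terminal value compares to $\min_a\widehat L_T(a)$, a single action rather than a ranking.

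The missing idea, which is the heart of the paper's proof, is to average the per-set log-sum-exp over availability sets: $\Phi_t=\mathbb E_{S\sim\rho_i}\bigl[\log\sum_{a\in S}e^{-\eta\widehat L_{t-1}(a)}\bigr]$. For each fixed $S$ the increment is $\log\sum_{a\in S}\pi^t(a\mid S)e^{-\eta\widehat\ell_t(a)}\le-\eta\langle\pi^t(\cdot\mid S),\widehat\ell_t\rangle+\frac{\eta^2}{2}\langle\pi^t(\cdot\mid S),\widehat\ell_t^{\,2}\rangle$, and averaging over $S$ turns $\pi^t$ into $\mu_t$. At the end, $\log\sum_{a\in S}e^{-\eta\widehat L_T(a)}\ge-\eta\min_{a\in S}\widehat L_T(a)$, and the single ranking that sorts by $\widehat L_T$ attains all these minima at once. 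Hence $\Phi_{T+1}\ge-\eta\min_\tau\langle\mu^\tau,\widehat L_T\rangle$, while $\Phi_1\le\log|A_i|$.

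Your fallback, be-the-leader in ranking space, can be made rigorous. Fix the path so $\mu_t$ is a deterministic function of $\widehat L_{t-1}$, use one analysis perturbation $Z$, and bound $\max_\tau\langle\mu^\tau,Z\rangle\le\max_a Z(a)$. You must also subtract the comparator's $\mathbb E\langle\mu^{\tau^*},Z\rangle$, since the bare expected maximum of $|A_i|$ Gumbels is $\log|A_i|$ plus the Euler--Mascheroni constant. However, its stability term $\sum_t\langle\mu_t-\mu_{t+1},\widehat\ell_t\rangle$ is bounded only by $\eta\sum_t\langle\mu_t,\widehat\ell_t^{\,2}\rangle$. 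That is a factor $2$ worse than the $\frac{\eta}{2}\sum_t\langle\mu_t,\widehat\ell_t^{\,2}\rangle$ the potential gives. With $\mathbb E\sum_t\langle\mu_t,\widehat\ell_t^{\,2}\rangle\le 2|A_i|T$ you get $2\eta|A_i|T$, hence $2\bigl(\log|A_i|(\frac1\eta+\frac1\gamma)+|A_i|T(2\eta+\gamma)\bigr)$ and constant $10$, not the stated bound and $8$. The ``$\lesssim$'' in your Step 2 hides exactly this factor.

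\textbf{Gap 2: the pathwise comparator, which you leave open.} The step you flag as the main obstacle closes in one line. Each $\mu^\tau$ lies in $\Delta(A_i)$, so $\max_\tau\sum_t\langle\mu^\tau,\widehat\ell_t-\bar\ell_t\rangle\le\max_a\sum_t(\widehat\ell_t(a)-\bar\ell_t(a))$, with no union over $|A_i|!$ rankings and no weighting by $q$. Iterating the one-sided bound $\mathbb E_t[e^{\gamma(\widehat\ell_t(a)-\bar\ell_t(a))}]\le 1$ and applying Jensen to the log-sum-exp gives $\mathbb E[\max_a\sum_t(\widehat\ell_t(a)-\bar\ell_t(a))]\le\frac{1}{\gamma}\log\sum_a\mathbb E[e^{\gamma\sum_t(\widehat\ell_t(a)-\bar\ell_t(a))}]\le\frac{\log|A_i|}{\gamma}$ exactly. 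Tail bounds plus a union bound, integrated to an expectation, would cost an extra $1/\gamma$.

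That one-sided bound must be derived, not just invoked. Take $s=\gamma/(1+\gamma)\le\log(1+\gamma)$ and $x=\mathbb 1\{a\in S_t\}\ell_t(a)\in[0,1]$, which is independent of $H$. Use $e^{sHx}\le 1+x(e^{sH}-1)$ and $\mathbb E[e^{sH}]-1=\frac{e^s-1}{1-(1-p)e^s}\le\frac{\gamma}{q(a)}$ (with your $p=\frac{q+\gamma}{1+\gamma}$). This gives $\mathbb E_t[e^{\gamma\widehat\ell_t(a)}]\le 1+\gamma\bar\ell_t(a)\le e^{\gamma\bar\ell_t(a)}$. The centering has to be the conditional mean $\bar\ell_t(a)=\mathbb E_t[\ell_t(a)]$, which is what ranking regret measures, not $\ell_t/(1+\gamma)$.

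\textbf{Bias accounting.} You place the bias on the wrong side. Since $\mathbb E_t[\widehat\ell_t]=\frac{q}{q+\gamma}\bar\ell_t\le\bar\ell_t$, moving the comparator from estimated to true losses only helps; that difference is covered by the deviation term above. The bias cost sits with the learner: $\mathbb E\sum_t\langle\mu_t,\bar\ell_t-\widehat\ell_t\rangle\le\gamma|A_i|T$ via $\mu_t\le q$. The decomposition (estimated ranking regret) $+$ (learner bias) $+$ ($\max_a$ deviation) then yields $\frac{\log|A_i|}{\eta}+\eta|A_i|T+\gamma|A_i|T+\frac{\log|A_i|}{\gamma}$, and doubling gives the statement.
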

We additionally derive a high-probability bound beyond expected ranking regret, establishing that every player using CAT-FTPL has sublinear ranking regret with high probability.
\begin{restatable}{proposition}{HighProbRegret}
\label{prop:cat-ftpl-high-prob-regret}
Suppose each player $i\in[n]$ runs CAT-FTPL with parameters as in \eqref{eq:cat-ftpl-parameters}. Then for any $p_1,\ldots,p_n\in(0,1)$ with $\sum_ip_i<1$,
\begin{equation}
\Pr\left(\bigcap^n_{i=1}\left\{
    R_{T,i}^{\mathrm{rank}}
    \le \frac{16}{p_i}
       \sqrt{|A_i|T\log |A_i|}
       \right\}
\right)
\ge 1-\sum_{i=1}^n p_i.
\label{eq:cat-ftpl-simultaneous-regret}
\end{equation}
\end{restatable}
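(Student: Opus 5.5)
The plan is to deduce \cref{prop:cat-ftpl-high-prob-regret} from the expected bound of \cref{thm:cat-ftpl} using Markov's inequality for each player, followed by a union bound over the $n$ players.

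The one subtlety is that Markov's inequality needs a nonnegative random variable, and $R_{T,i}^{\mathrm{rank}}$ can be negative. I will therefore look for a nonnegative random variable $W_i$ that dominates the regret pathwise and whose expectation is controlled:
\[
R_{T,i}^{\mathrm{rank}}\le W_i,\qquad \mathbb E[W_i]\le 16\sqrt{|A_i|T\log|A_i|}.
\]
The factor $16=2\cdot 8$ suggests $W_i$ should cost twice the expected bound.

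To build $W_i$, I will reopen the proof of \cref{thm:cat-ftpl}. I expect the pathwise regret against the best ranking $\tau_i$ to split as $R_{T,i}^{\mathrm{rank}}\le D_i+N_i$, where:
\begin{itemize}
\item $D_i\ge 0$ is the FTPL/Gumbel stability-plus-penalty term. This is the $\log|A_i|/\eta_i$ term plus $\eta_i\sum_t\langle\pi_i^t,\widehat\ell_{i,t}^2\rangle$-type terms, which are nonnegative because the CAT estimates satisfy $\widehat\ell_{i,t}\ge 0$.
\item $N_i$ collects the discrepancy between the estimated losses $\widehat L_i^T$ and the expected losses appearing in $U_i$. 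This includes the bias from the implicit-exploration offset $\gamma_i$ and the fluctuation of the geometric-resampling estimates.
\end{itemize}

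Setting $W_i:=D_i+|N_i|$ gives a nonnegative dominating variable. It then suffices to show that $\mathbb E[D_i]$ and $\mathbb E|N_i|$ are each at most $8\sqrt{|A_i|T\log|A_i|}$ under the parameters \eqref{eq:cat-ftpl-parameters}. The bound for $D_i$ is essentially the computation already carried out in \cref{thm:cat-ftpl}.

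The step I expect to be the main obstacle is bounding $\mathbb E|N_i|$, because the maximum over rankings sits inside the expectation. I would first try the exponential-moment control provided by the implicit-exploration offset: the $\gamma_i/(1+\gamma_i)$ stopping makes $\widehat\ell_{i,t}$ underestimate the loss in a way that gives $\mathbb E\exp(\gamma_i(\widehat L-L))\le 1$ for each fixed action. A union bound over the $|A_i|$ actions then yields a $\log|A_i|/\gamma_i$ term, which is exactly what appears in \eqref{eq:cat-ftpl-bound}. The same argument applied to $-N_i$ (or a crude bound by $T$ on a low-probability event) controls the negative part. If this proves too delicate, the fallback is to bound $\mathbb E[(R_{T,i}^{\mathrm{rank}})^+]$ directly, using $\mathbb E[R^+]\le\mathbb E[R]+\mathbb E[R^-]$ and bounding $\mathbb E[R^-]$ by the same constant.

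Once $W_i$ is in hand, Markov's inequality gives
\[
\Pr\Bigl(R_{T,i}^{\mathrm{rank}}>\tfrac{16}{p_i}\sqrt{|A_i|T\log|A_i|}\Bigr)\le\Pr\Bigl(W_i\ge \tfrac{\mathbb E[W_i]}{p_i}\Bigr)\le p_i .
\]
This holds regardless of how the opponents play, since \cref{thm:cat-ftpl} is stated against arbitrary opponent sequences and the players' internal randomness is independent. A union bound over $i\in[n]$ then bounds the probability that any player's event fails by $\sum_i p_i<1$, which yields \eqref{eq:cat-ftpl-simultaneous-regret}.
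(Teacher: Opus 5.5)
Your framework is the paper's: a nonnegative pathwise dominator built from the certificate in the proof of \cref{thm:cat-ftpl}, then Markov, then a union bound over players. However, your particular dominator $W_i:=D_i+|N_i|$ adds a step that your proposed tools do not deliver.

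Up to the factor $2$ from the loss normalization, $N_i=L_T+C_T$. Here $L_T=\sum_t\langle\mu_t,\bar\ell_t-\hat\ell_t\rangle$ and $C_T=\max_a Z_{T,a}$ with $Z_{T,a}=\sum_t(\hat\ell_t(a)-\bar\ell_t(a))$. The absolute value forces you to control the \emph{lower} tail of $N_i$, which is essentially the lower tail of $C_T$.

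\begin{itemize}
\item The inequality $\mathbb E_t[e^{\gamma(\hat\ell_t(a)-\bar\ell_t(a))}]\le1$ is one-sided precisely because the implicit-exploration estimator is biased downward, $\mathbb E_t[\hat\ell_t(a)]=\frac{q(a)}{q(a)+\gamma}\bar\ell_t(a)$. In the other direction Jensen gives $\mathbb E_t[e^{\gamma(\bar\ell_t(a)-\hat\ell_t(a))}]\ge e^{\gamma^2\bar\ell_t(a)/(q(a)+\gamma)}>1$ whenever $\bar\ell_t(a)>0$. So ``the same argument applied to $-N_i$'' does not go through.
\item The deficit is not confined to a low-probability event either. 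For an action with $q(a)\ll\gamma$, the expected shortfall $\sum_t\bar\ell_t(a)\gamma/(q(a)+\gamma)$ is of order $T$.
\item A separate argument can handle the lower tail: evaluate $-C_T=\min_a(-Z_{T,a})$ at an action with $q(a)\ge1/|A_i|$, which exists since $\sum_aq(a)=\mathbb E|S_i|\ge1$. But this pays the bias explicitly and adds about $2(|A_i|\gamma_iT+\sqrt{2|A_i|T})$ to $\mathbb E[W_i]$. Estimated the same way as the other terms, that pushes the constant above $16$ for small $|A_i|$ (e.g.\ $|A_i|=2$).
\item Your fallback through $\mathbb E[R^-]$ is unsupported. \cref{thm:cat-ftpl} is only an upper bound, and nothing in its proof limits how much an adaptive learner can outperform every fixed ranking.
\end{itemize}

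The missing observation is that the lower tail is never needed. Since $D_i\ge0$, the variable $D_i+[N_i]_+$ is already nonnegative and dominates $R^{\mathrm{rank}}_{T,i}$. This is what the paper does. Writing $d=|A_i|$ and dropping the index $i$, it bounds $\tfrac12[R^{\mathrm{rank}}_{T,i}]_+\le\frac{\log d}{\eta}+\frac{\eta}{2}V_T+[L_T]_++[C_T]_+$.

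\begin{itemize}
\item $\mathbb E[[L_T]_+]\le d\gamma T+\sqrt{2dT}$: each summand's conditional mean lies in $[0,d\gamma]$, and the martingale remainder has second moment at most $2dT$.
\item $\mathbb E[e^{\gamma[C_T]_+}]\le1+\sum_a\mathbb E[e^{\gamma Z_{T,a}}]\le d+1$, and Jensen then gives $\mathbb E[[C_T]_+]\le\log(d+1)/\gamma$. This is your union-over-actions idea, at the price of $\log(d+1)$ rather than $\log d$.
\end{itemize}

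Together with $\mathbb E[V_T]\le2dT$ and the parameters \eqref{eq:cat-ftpl-parameters}, this yields $\mathbb E[[R^{\mathrm{rank}}_{T,i}]_+]\le\bigl(8+\frac{2}{\log d}+2\sqrt{2/\log d}\bigr)\sqrt{dT\log d}<16\sqrt{dT\log d}$ for $d\ge2$. Your Markov and union-bound steps then go through unchanged.
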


Consequently, if the GSAS is 2p0s, Proposition \ref{prop:SINash} directly implies an
approximate-NE guarantee for the time-averaged induced marginals produced by CAT-FTPL.

\subsection{Computing compact equilibria in GSAS}
While the sublinear ranking regret guarantee of~\cref{alg:cat-ftpl} ensures that the time-averaged marginals converge to the set of NE, these iterates alone do not constitute an explicit solution. Indeed, the learning process converges to
an optimal \emph{marginal distribution} $\mu_i^*$, and naïvely averaging the iterates yields an exponentially large strategy $\pi^*_i$.
In light of~\cref{thm:compact}, we seek a method to compute approximate \emph{compact} vectors $w^*_i$ associated with $\pi^*_i$.


Suppose that the player selects a sequence of strategies $\{\pi_i^t\}_{t=1,\dots,T}$ such that 
$\pi_i^t(S^t_i)\to \mu_i^*$ as $T\to \infty$, where $\mu_i^*$ is a marginal distribution induced by some (approximate) Nash equilibrium $\pi^*$. 
Specifically, for a vector $w_i\in\mathbb{R}^{|A_i|}_{\ge 0}$, let $\hat\pi_i(a_i\mid S_i, w_i)$ be the probability of playing action $a_i$ given availability set $S_i$. Then, by~\cref{thm:compact} there exists a $\delta$-approximate $w_i$ so that $\forall S_i\in\mathcal{S}_i,\ a_i\in A_i$, $\hat\pi_i(a_i\mid S_i, w_i) = \frac{w_i(a_i)\bbone\{a_i\in S_i\}}{\sum_{a'_i\in A_i}w_i(a'_i)\bbone\{a'_i\in S_i\}}$.
Let $\hat \mu_i(w_i)$ be the corresponding marginal distribution where $\hat\mu_i(a_i|w_i) = \mathbb{E}_{S_i\sim \rho_i}[\hat\pi_i(a_i\mid S_i, w_i)]$ for all $a_i\in A_i$. The objective is to find a $w_i$ that approximately solves $\hat \mu_i(w_i) = \mu_i^*$. 
This can be done through a stochastic approximation (SA) procedure as outlined in Algorithm \ref{alg:compute_w}, where the update is done in the log-space of~$w_i$. 


\begin{algorithm}[ht]
\caption{Computing a compact vector for player $i$}
\label{alg:compute_w}
\begin{algorithmic}[1]
\renewcommand{\algorithmicrequire}{\textbf{Input:}}
\renewcommand{\algorithmicensure}{\textbf{Output:}}
\STATE \textbf{Input:} $\{\eta_t\}$
\STATE $\theta^1_i \gets \mathbf{1}_{|A_i|}$
\FOR{$t = 1,\ldots,T-1$}
    \STATE Observe $S_i^t$ and play $\pi_i^t(S_i^t)$
    \STATE $G^t_i(a_i) \gets \pi_i^t(a_i\mid S_i^t) - \frac{\bbone\{a_i\in S_i^t\}\exp({\theta_i^t(a_i)})}{\sum_{a'_i\in S_i^t}\exp({\theta_i^t(a'_i)})}$, $\forall a_i\in A_i$
    \STATE $\theta_i^{t+1}\gets \theta_i^t + \eta_t G^t_i$
\ENDFOR
\RETURN $w^T_i = {\exp(\theta_i^T)}/{\sum_{a_i\in A_i}\exp(\theta_i^T(a_i))}$
\end{algorithmic}
\end{algorithm}

In our setting, it turns out that the sequence of strategies that converges to the optimal marginal strategy is \emph{not} $\pi^t_i$ (Equation \eqref{eq:cat-ftpl-policy}) as generated by CAT-FTPL, but rather the \emph{time-averaged, empirical} marginal distributions $\tilde\mu^t_i \coloneqq \frac{1}{t}\sum^{t}_{s=1} \pi^s_i(S_i^s)$. Hence, we show a convergence result of~\cref{alg:compute_w} when utilizing $\tilde\mu^t_i$ \emph{in place of} $\pi^t_i(S_i^t)$.


\begin{restatable}{proposition}{stochasticapproximation}
    \label{prop:compute_w} Let $w^T_i$ be the weight vector produced by Algorithm \ref{alg:compute_w}.
    Assume that $\frac{1}{T}\sum_{t=1}^T\pi_i^t(S^t_i)\to \mu_i^*$ almost surely as $T\to \infty$, and that $\sum_{t=1}^{\infty}\eta_t = \infty$ and $\sum_{t=1}^{\infty}\eta^2_t < \infty$. Then the marginal induced by $w^T_i$ converges to $\mu^*_i$ almost surely; in particular, for every $\delta > 0$, $w^T_i$ is a $\delta$-approximate compact representation of $\mu^*_i$.
\end{restatable}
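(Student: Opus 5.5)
We analyse Algorithm~\ref{alg:compute_w}, run with $\tilde\mu_i^t$ in place of $\pi_i^t(S_i^t)$, as a Robbins--Monro recursion and use the ODE method. Write $\hat\mu_i(\theta)$ for the marginal induced by the softmax weights $w=\exp(\theta)/\sum\exp(\theta)$, i.e.\ $\hat\mu_i(a_i\mid\theta)=\mathbb{E}_{S_i\sim\rho_i}\bigl[\bbone\{a_i\in S_i\}e^{\theta(a_i)}/\sum_{b\in S_i}e^{\theta(b)}\bigr]$.

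\textbf{Step 1: decomposition.} We split the increment as
$G_i^t=\bigl(\mu_i^*-\hat\mu_i(\theta_i^t)\bigr)+\bigl(\tilde\mu_i^t-\mu_i^*\bigr)+M^{t+1}$. Here
\[
M^{t+1}=\hat\mu_i(\theta_i^t)-\hat\pi_i(\cdot\mid S_i^t,\theta_i^t).
\]
Because $S_i^t$ is drawn from $\rho_i$ independently of $\theta_i^t$ (which depends only on the past), $M^{t+1}$ is a bounded martingale difference. The term $\tilde\mu_i^t-\mu_i^*\to0$ almost surely by hypothesis, since $\tilde\mu^t_i$ is exactly the running average in the assumption. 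The step-size conditions then give $\sum_t\eta_tM^{t+1}$ convergent a.s., since its quadratic variation is bounded by $\sum\eta_t^2<\infty$. The vanishing bias is an asymptotically negligible perturbation. The Kushner--Clark / Benaïm theorem then says the interpolated iterates are an asymptotic pseudotrajectory of $\dot\theta=F(\theta):=\mu_i^*-\hat\mu_i(\theta)$.

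\textbf{Step 2: Lyapunov structure.} The field is a gradient field. Let
\[
\Phi(\theta)=\mathbb{E}_{S_i\sim\rho_i}\Bigl[\log\sum_{b\in S_i}e^{\theta(b)}\Bigr]-\langle\mu_i^*,\theta\rangle .
\]
Then $\nabla\Phi=\hat\mu_i(\theta)-\mu_i^*$, so $F=-\nabla\Phi$. $\Phi$ is convex, because it is an expectation of log-sum-exp functions. It is also invariant under $\theta\mapsto\theta+c\mathbf 1$, since $\mu_i^*$ and $\hat\mu_i$ both sum to one. Moreover $\sum_aG_i^t(a)=0$, so $\langle\mathbf 1,\theta_i^t\rangle$ is constant and we may restrict to the hyperplane $\{\langle\mathbf 1,\theta\rangle=|A_i|\}$. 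On it, $\Phi$ is a Lyapunov function for the ODE, because $\frac{d}{dt}\Phi=-\|\nabla\Phi\|^2$. The chain-recurrent set is contained in the critical set, and the critical points of $\Phi$ are exactly the $\theta$ with $\hat\mu_i(\theta)=\mu_i^*$ (or, when the infimum is not attained, the minimizing sequences). Benaïm's theorem then gives $\Phi(\theta_i^t)\to\inf\Phi$ and $\nabla\Phi(\theta_i^t)\to0$, i.e.\ $\hat\mu_i(\theta_i^t)\to\mu_i^*$. Since $w_i^T$ is exactly the softmax of $\theta_i^T$, the final claim follows: for every $\delta$ and large $T$, $d_{\mathrm{TV}}(\hat\mu_i(w_i^T),\mu_i^*)<\delta$.

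\textbf{Main obstacle: stability/boundedness.} The ODE theorem needs the iterates to stay bounded. When $\mu_i^*$ lies on the boundary of $M_i$ this fails: the approximation in Theorem~\ref{thm:compact} is only approximate, so $\inf\Phi$ need not be attained and $\theta$ may drift to infinity in some coordinates. My first attempt is to avoid boundedness and work directly with $\Phi$ along the iterates, as in the classical convex SGD argument. Because $\Phi$ has a Lipschitz gradient (log-sum-exp has Hessian bounded by $1$), a descent inequality holds:
\[
\Phi(\theta^{t+1})\le\Phi(\theta^t)-\eta_t\|\nabla\Phi(\theta^t)\|^2+\eta_t\langle\nabla\Phi(\theta^t),\tilde\mu_i^t-\mu_i^*+M^{t+1}\rangle+O(\eta_t^2).
\]
The bias term is $o(\eta_t)$ since $\|\nabla\Phi\|\le2$. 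Combining this with Robbins--Siegmund applied to $\Phi-\inf\Phi$ gives $\sum_t\eta_t\|\nabla\Phi(\theta_i^t)\|^2<\infty$; $\inf\Phi$ is finite, as $\Phi\ge-\max_\theta\langle\mu^*_i,\theta\rangle+\ldots$ with the existence of approximate $w$ from Theorem~\ref{thm:compact}. Together with $\sum\eta_t=\infty$ this yields $\liminf\|\nabla\Phi\|=0$. A standard upcrossing argument, using Lipschitzness of $\nabla\Phi$ and $\eta_t\to0$, upgrades this to $\lim\|\nabla\Phi\|=0$, which is the desired conclusion without needing bounded iterates. The remaining delicate point is controlling the bias term without a bounded-$\Phi$ assumption; I would handle it by bounding it by $2\eta_t\|\tilde\mu_i^t-\mu_i^*\|$ and noting it is summable relative to the descent term only in the $\liminf$ sense, so care is needed there.
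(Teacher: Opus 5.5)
Your route is the paper's: an ODE/Lyapunov argument in which your $\Phi$ equals the paper's $V(\theta)=D_{\mathrm{KL}}(P^*\|P(\theta))$ up to an additive constant, because $P^*$ has row sums $\rho_i$ and column sums $\mu_i^*$. Your separation of the bias $\tilde\mu_i^t-\mu_i^*$ from the genuine martingale difference is more accurate than the paper's, which folds the bias into $M^{t+1}$ and asserts zero conditional mean. But the proof does not close. The Bena\"{\i}m/Kushner--Clark theorem tolerates a vanishing bias, but it requires $\sup_t\|\theta_i^t\|<\infty$, which you never establish. Boundedness fails precisely when $\inf\Phi$ is not attained on the hyperplane. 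This is the boundary case that \cref{thm:compact} only approximates, and it already occurs in \cref{ex:matching_pennies}(ii): player~1's marginal $(1-\lambda,\lambda)$ forces $\theta(T)-\theta(H)\to\infty$. There $\Phi$ has no critical points, since ``minimizing sequences'' are not equilibria of the ODE, so the chain-recurrence step yields nothing. The paper sidesteps the same issue by positing a unique $\theta_i^*$ with $V(\theta_i^*)=0$.

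Your fallback fails at the Robbins--Siegmund step. That lemma needs the positive perturbation to be summable, but $\eta_t|\langle\nabla\Phi(\theta_i^t),\tilde\mu_i^t-\mu_i^*\rangle|$ is only $o(\eta_t)$ while $\sum_t\eta_t=\infty$. Absorbing it by Young's inequality would need $\sum_t\eta_t\|\tilde\mu_i^t-\mu_i^*\|^2<\infty$, which a.s.\ convergence does not give. In fact the claim $\sum_t\eta_t\|\nabla\Phi(\theta_i^t)\|^2<\infty$ is false in general, even in the interior case where the conclusion itself holds. Take $\eta_t=1/t$ and $\|\tilde\mu_i^t-\mu_i^*\|\asymp1/\sqrt{\log t}$: the iterates track the minimizer of $\Phi-\langle\tilde\mu_i^t-\mu_i^*,\theta\rangle$, where $\|\nabla\Phi\|=\|\tilde\mu_i^t-\mu_i^*\|$, so the series diverges. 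The upcrossing step uses that summability, so it fails too.

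You need one extra ingredient, and either of these works:
\begin{itemize}
\item A rate, such as $\sum_t\eta_t\|\tilde\mu_i^t-\mu_i^*\|<\infty$ a.s. Run your descent inequality pathwise after subtracting the a.s.\ convergent sum $\sum_s\eta_s\langle\nabla\Phi(\theta_i^s),M^{s+1}\rangle$ and $C\sum_s\eta_s^2$. This gives convergence of $\Phi(\theta_i^t)$ and the summability, and the upcrossing argument then finishes without boundedness.
\item Level-boundedness of $\Phi$ on the hyperplane. Then $\|\nabla\Phi\|\ge c>0$ off a compact set, the iterates eventually remain in a sublevel set, and Bena\"{\i}m's theorem applies.
\end{itemize}

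Separately, your lower bound on $\Phi$ is garbled. The correct argument uses $\mu_i^*\in M_i$: the a.s.\ limit lies in the closed convex set $M_i$, since $\pi_i^t(S_i^t)$ minus its conditional mean is a bounded martingale difference. If $\pi_i^*$ implements $\mu_i^*$, then $\langle\mu_i^*,\theta\rangle\le\mathbb{E}_{S_i}[\max_{b\in S_i}\theta(b)]\le\mathbb{E}_{S_i}[\log\sum_{b\in S_i}e^{\theta(b)}]$, so $\Phi\ge0$.
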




In addition to the result above, we also utilize the \emph{robust stochastic approximation} (RSA) approach introduced by~\citet{nemirovski1978cezari} and developed by \citet{nemirovski2009robust} to obtain finite-time convergence bounds by modifying the stepsize schedule of~\cref{alg:compute_w} and taking robust time-averages over the iterates (see~\cref{appsec:robustSA} for details). These results imply a simple procedure for computing \emph{compact} equilibria in GSAS: (i) run~\cref{alg:cat-ftpl} for rounds $t=1,\dots,T$, then (ii) compute the empirical marginal distributions $\tilde\mu^t_i$ for each $t$, and update a $\theta_i^t$ vector as described in Algorithm~\ref{alg:compute_w}.
In~\cref{appsec:exampleRSP}, we show an example of this procedure applied to~\cref{ex:rps}, approximately recovering $w_1$ and $w_2$ that implement an \(\epsilon\)-NE. Taken together, our analysis leads to a high-probability  \emph{finite-time} bound on the SPR of the computed compact vectors.



\begin{restatable}{theorem}{mainthmnormalform}
\label{thm:mainthm}
    Suppose CAT-FTPL is run for $T$ rounds with parameters as in~\eqref{eq:cat-ftpl-parameters} in a 2p0s-GSAS with payoffs $u_i:A\to[-1,1]$, and the empirical marginals \((\tilde \mu_1, \tilde \mu_2)\) are used in~\cref{alg:compute_w} with stepsizes $1/\sqrt{t}$ to obtain compact vectors $\tilde{w}_i$ using robust averaging.
    Then, for any $p\in (0,1)$, with probability at least $(1-p)^2$, the SPR of the strategies induced by $\tilde{w}_i$ satisfies $\mathrm{SPR}^2\leq O\left(\frac{1}{p\sqrt{T}} + \frac{1}{p^2T}
    \right)$, where $O(\cdot)$ hides polynomial factors inside $|A_i|$.
\end{restatable}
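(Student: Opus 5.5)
The proof chains three bounds: the regret-to-SPR link of \cref{prop:SINash}, a robust stochastic approximation (RSA) guarantee for \cref{alg:compute_w}, and a Lipschitz transfer of marginal errors to SPR.

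\textbf{Step 1 (equilibrium quality of the empirical marginals).} Apply \cref{prop:cat-ftpl-high-prob-regret} with $n=2$ and failure probability $p$ to get, with high probability, $R^{\mathrm{rank}}_{T,i}\le \frac{16}{p}\sqrt{|A_i|T\log|A_i|}$ for both players. \cref{prop:SINash} then gives $\mathrm{SPR}(\bar\mu_1^T,\bar\mu_2^T)\le O\bigl(\frac{1}{p\sqrt{T}}\bigr)$.

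To pass to the empirical marginals $\tilde\mu_i^T$, note that $\tilde\mu_i^T-\bar\mu_i^T=\frac1T\sum_t\bigl(\pi_i^t(S_i^t)-\mu_i^t\bigr)$ is an average of bounded martingale differences. This holds because $\pi_i^t$ is determined by the past and $S_i^t$ is fresh and independent. Azuma--Hoeffding, or simply Chebyshev to match the $1/p^2$ form, gives $\|\tilde\mu_i^T-\bar\mu_i^T\|_1\le O\bigl(\sqrt{|A_i|/(pT)}\bigr)$ per player with probability at least $1-p$.

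Since $U$ is bilinear with payoffs in $[-1,1]$, $\mathrm{SPR}$ is $2$-Lipschitz in each argument with respect to $\|\cdot\|_1$, because $|U(\mu_1,\mu_2)-U(\mu_1',\mu_2)|\le\|\mu_1-\mu_1'\|_1$. Hence the SPR at the target $(\tilde\mu_1,\tilde\mu_2)$ is $O\bigl(\frac{1}{p\sqrt T}+\frac{1}{\sqrt{pT}}\bigr)$.

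\textbf{Step 2 (RSA bound for the compact vector).} View \cref{alg:compute_w} as stochastic gradient descent on the convex potential
\[
F(\theta)=\mathbb{E}_{S_i\sim\rho_i}\Bigl[\log\textstyle\sum_{a\in S_i}e^{\theta(a)}\Bigr]-\langle\tilde\mu_i,\theta\rangle .
\]
Its gradient is $\hat\mu_i(\mathrm{softmax}(\theta))-\tilde\mu_i$, and $G_i^t$ is an unbiased estimate of $-\nabla F$ up to the drift from $\tilde\mu_i^t\neq\tilde\mu_i^T$. The gradient estimates are bounded by $2$ in $\ell_1$. Standard RSA analysis with stepsizes $1/\sqrt t$ and robust averaging (\cref{appsec:robustSA}) gives, in expectation, $F(\tilde\theta)-\inf F\le O\bigl(\mathrm{poly}(|A_i|)/\sqrt T\bigr)$. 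Markov's inequality converts this into probability $1-p$, costing the factor $1/p$. The drift term is controlled because $\tilde\mu^t_i$ moves by $O(1/t)$ per step, so its accumulated effect on the averaged iterate is $O(\log T/\sqrt T)$ or smaller.

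\textbf{Step 3 (from optimization gap to marginal error).} This is the main obstacle. We need $\|\hat\mu_i(\tilde w_i)-\tilde\mu_i\|^2\le O\bigl(F(\tilde\theta)-\inf F\bigr)$, which explains why the theorem bounds $\mathrm{SPR}^2$ rather than $\mathrm{SPR}$.

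The difficulty is that $\inf F$ may not be attained: the marginal can lie on the boundary of $M_i$, which is why \cref{thm:compact} is only approximate. I would first try the smoothness descent inequality $\|\nabla F(\theta)\|^2\le 2L\bigl(F(\theta)-\inf F\bigr)$, which holds for any $L$-smooth function bounded below. Here $L\le 1$, since the Hessian is a mixture of softmax covariances.

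For $\inf F$ to be finite, $\tilde\mu_i$ must be implementable, and it is, since it is averaged from $\pi_i^t$. The catch is that the Markov bound in Step 2 must use a comparator $\theta^\star$ with bounded norm. I would use \cref{thm:compact} to pick $\theta^\star$ with $\|\nabla F(\theta^\star)\|$ small and absorb $\|\theta^\star\|$ into the polynomial factors. It is unclear whether this norm stays polynomial in $|A_i|$ uniformly; if it does not, I would restrict $\theta$ to a ball and accept an additional bias term.

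\textbf{Step 4 (combining).} Combine via the Lipschitz bound: $\mathrm{SPR}(\hat\mu(\tilde w))\le \mathrm{SPR}(\tilde\mu)+2\sum_i\|\hat\mu_i(\tilde w_i)-\tilde\mu_i\|_1$. Squaring and applying $(a+b)^2\le 2a^2+2b^2$ gives $\mathrm{SPR}^2\le O\bigl(\frac{1}{p^2T}+\frac{1}{p\sqrt T}\bigr)$. The first term comes from Step 1 (both $\frac{1}{p^2T}$ and $\frac{1}{pT}$ fold into it) and the second from the Step 3 squared-gradient bound. Taking the Step 1 and Step 2 events as the two probability-$(1-p)$ events and intersecting them (treating them as independent, or via the chain rule) gives the stated probability $(1-p)^2$.
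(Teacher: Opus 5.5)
\paragraph{Overall.} Your chain matches the paper's. \cref{prop:cat-ftpl-high-prob-regret} and \cref{prop:SINash} make the marginals a $\zeta$-NE with $\zeta=O(1/(p\sqrt T))$. An RSA-plus-Markov bound then controls the squared marginal error of the compact vector (\cref{lem:rsaconvergence}), and a Lipschitz transfer, squaring and independence finish. Your Step~1 is more careful than the paper's proof, which applies \cref{prop:SINash} directly to $\tilde\mu$. Your concentration step is \cref{cor:empirical-marginal-spr}, and the extra event is paid for by rescaling $p$.

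\paragraph{The gap: Step~3.} You assert that $\tilde\mu_i$ is implementable ``since it is averaged from $\pi_i^t$''. It is not. $\tilde\mu_i=\frac1T\sum_t\pi_i^t(\cdot\mid S_i^t)$ weights availability sets by their empirical frequencies rather than by $\rho_i$, so only $\bar\mu_i$ is implementable; the paper says this in \cref{appsubsec:exp_spr_calculation}. For example, with $\mathcal S_i=\{\{a\},\{b\}\}$ and $\rho_i$ uniform, $M_i=\{(\tfrac12,\tfrac12)\}$. Meanwhile $\tilde\mu_i$ is the empirical set frequency, and $F(\theta)=(\tfrac12-\tilde\mu_i(a))\theta(a)+(\tfrac12-\tilde\mu_i(b))\theta(b)$ is linear. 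In general, whenever $\tilde\mu_i\notin M_i$, a direction $d$ strictly separating $\tilde\mu_i$ from $M_i$ gives $F(sd)\to-\infty$. Then $\inf F=-\infty$ and $\|\nabla F\|^2\le 2L(F-\inf F)$ is vacuous. Even when $\inf F$ is finite, a comparator with small gradient does not by itself control $F(\theta^\star)-\inf F$, which is what that inequality needs.

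The repair is the paper's decomposition, which never refers to $\inf F$. With $g=-\nabla F$, bound $\|g(\tilde\theta)\|_2^2\le 2\|g(\tilde\theta)-g(\theta^\star)\|_2^2+2\|g(\theta^\star)\|_2^2$. Take $\theta^\star$ from \cref{thm:compact} applied to the implementable $\bar\mu_i$, so that $\|g(\theta^\star)\|_2\le\|\tilde\mu_i-\bar\mu_i\|_2+2\delta$ is covered by your Step~1 bound. The first term is what \eqref{eqn:cesaro_projection_bound} bounds. The paper states that bound as a consequence of the RSA analysis without showing how a function-value bound becomes a gradient bound. Your smoothness idea is what supplies this, in co-coercive form:
$\|\nabla F(\tilde\theta)-\nabla F(\theta^\star)\|_2^2\le 2L\bigl(F(\tilde\theta)-F(\theta^\star)-\langle\nabla F(\theta^\star),\tilde\theta-\theta^\star\rangle\bigr)$.
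This holds for convex $L$-smooth $F$ with no lower bound, and you combine it with the RSA bound on $F(\tilde\theta)-F(\theta^\star)$.

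\paragraph{The drift argument in Step~2.} It fails. An $O(1/t)$ per-step movement only gives $\|\tilde\mu_i^t-\tilde\mu_i^T\|_1\le 2(1-t/T)$, which is not small for $t\le T/2$. The robust average with weights $\propto t^{-1/2}$ puts about $1/\sqrt2$ of its mass on those rounds, so no vanishing bias bound follows. The paper avoids drift altogether: in \cref{appsec:robustSA}, \cref{alg:compute_w} is run after CAT-FTPL with the fixed target $\tilde\mu_i^T$ and fresh draws $S\sim\rho$. Those fresh draws are also what justify $(1-p)^2$. Conditionally on the CAT-FTPL run, the RSA event then has probability at least $1-p$, provided the RSA constants are uniform in $\tilde\mu^T$. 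With reused $S_i^t$, neither your independence claim nor this conditioning goes through.

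\paragraph{The comparator norm.} Your worry about $\|\theta^\star\|$ is legitimate. It enters both the cross term $\langle\nabla F(\theta^\star),\tilde\theta-\theta^\star\rangle$ above and the RSA constant $D$. The paper does not really resolve it. It sets $D=\sqrt2$ by normalizing $\theta$ to the simplex, but the induced strategy is invariant only under additive shifts of $\theta$, so that argument bounds $w$, not $\theta$. When the target marginal lies on the boundary of $M_i$, as in \cref{ex:matching_pennies}(ii), making $\|g(\theta^\star)\|\le\epsilon$ forces $\theta^\star(T)-\theta^\star(H)$ to grow like $\log(1/\epsilon)$. Bounding $D$ therefore needs an argument that neither you nor the paper supplies.
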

CAT-FTPL contributes \(O(1/(p^2T))\), which is dominated by the \(O(1/(p\sqrt T))\) term from the RSA procedure whenever \(\sqrt{T} \gg 1/p\).
Note that the bound on SPR is relatively loose as the RSA analysis bounds the \emph{squared} $\ell_2$-norm from the optimal compact vector, whereas the relationship between $w$ and $\mu$ is established for the $\ell_2$-norm. Nevertheless, asymptotic convergence in terms of SPR still holds, and we leave as an open problem the design of SA procedures that improve this bound. Moreover, we show  in~\cref{sec:experiments} that our procedure empirically computes $\tilde{w}$ with low SPR in large 2p0s-GSAS.

\section{Experimental Results}\label{sec:experiments}
\paragraph{Experiment 1: Comparison with LP solver.}
In light of our compact-representation results, we investigate the \emph{scalability} of CAT-FTPL compared to existing solvers. In~\cref{fig:runtime_experiment}, we compare the wallclock convergence time of CAT-FTPL and Gurobi's linear program solver on randomly generated GSAS (full definition in~\cref{def:randomgsas}). Fixing a time budget of 360 seconds, we found that Gurobi's LP solver, when applied to the sequence-form representation~\citep{von1996efficient} of the GSAS that encodes all action subsets, could only solve games of up to size \(11 \times 11\). Comparatively, CAT-FTPL was able to obtain low marginal SPR, i.e., \(\mathrm{SPR}(\tilde \mu_1, \tilde \mu_2) \leq 0.05\), for much larger games ($200\times 200$) within the same time budget, demonstrating the scalability of our approach.
Additional experimental details are given in \cref{appsubsec:exp1_additional_details}.

\begin{figure}[!t]
    \centering
    \includegraphics{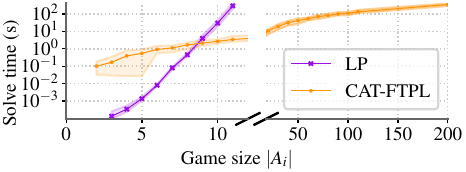}
    \caption{Wallclock time to solve randomly generated GSAS by CAT-FTPL and Gurobi LP solver. Plot shows the average over 20 runs with shaded region showing the range (max and min of wallclock time) of values across runs.}
    \label{fig:runtime_experiment}
\end{figure}

\begin{figure}[!t]
    \centering
    \includegraphics{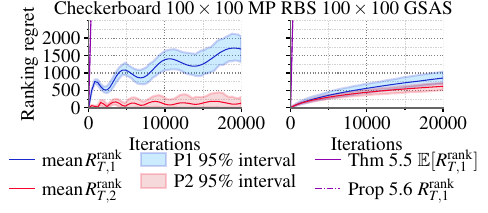}
    \caption{Ranking regret of CAT-FTPL for two 2p0s-GSAS. 100 runs were performed for each game. The average, central 95\% interval and theoretical expected/high-probability bounds on maximum ranking regret are shown.}
    \label{fig:regret_plots}
\end{figure}

\begin{figure}[!t]
    \centering
    \begin{subfigure}[h]{0.45\linewidth}
        \includegraphics[width=\linewidth]{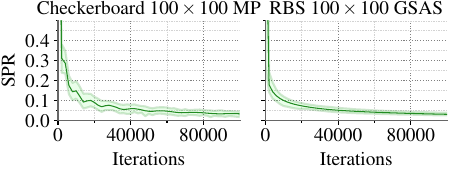}
        \caption{SPR of the time-averaged empirical marginals induced by CAT-FTPL.}
        \label{fig:spr_plots_marginals}
    \end{subfigure}
    \hfill
    \begin{subfigure}[h]{0.45\linewidth}
        \centering
        \includegraphics[width=\linewidth]{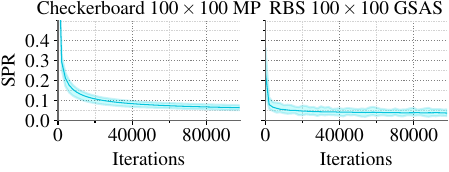}
        \caption{SPR of the strategies induced by the $w_i^t$ vectors computed by \cref{alg:compute_w}.}
        \label{fig:spr_plots_w}
    \end{subfigure}
    \caption{SPR for two 2p0s-GSAS. For each game, we repeated each experiment 100 times and plotted the average and range (max and min) over the runs.}
    \label{fig:spr_plots}
\end{figure}

\paragraph{Experiment 2: Convergence in large GSAS.}

We seek to corroborate our convergence results in~\cref{sec:compute} in terms of ranking regret and SPR.
We focus on two large GSAS: (i) $100\times100$ Random Biased Support GSAS (`RBS') (cf.~\cref{def:RBSgsas}) and (ii) `Checkerboard \(100 \times 100\) matching pennies' (cf.~\cref{def:checkermp}), which are designed to be challenging due to the payoff structure and stochastic action availabilities. 
Additional experimental setup details, including estimating \(R^{\mathrm{rank}}_{T,i}\) and SPR, as well as further experiments on other GSAS, are given in ~\cref{appsec:experimentaldetails}.
In Figure \ref{fig:regret_plots}, we compare the observed \(R^{\mathrm{rank}}_{T,i}\) of CAT-FTPL against the theoretical bounds,
showing that it achieves sublinear regret with high probability. In Figures~\ref{fig:spr_plots_marginals} and~\ref{fig:spr_plots_w}, we show the SPR of the time-averaged marginal strategies played by CAT-FTPL and the subsequently computed $w$ vector obtained by~\cref{alg:compute_w}, respectively.
The experiments show convergence to low SPR strategies
even in challenging games, indicating that our proposed method is effective in computing compact strategies that represent \(\epsilon\)-NE in 2p0s-GSAS.

\section{Discussion and Future Work}\label{sec:discussion}
In this paper we have taken the first step toward characterizing and computing \emph{compact} Nash equilibria in games with stochastic action sets. Our analysis leaves open several promising future research directions. These include (i) characterizing and studying convergence to appropriate notions of \emph{correlated} equilibria in general-sum GSAS, (ii) modeling and solving extensive-form and Markovian variations of GSAS, (iii) exploring efficient strategy representations even after relaxing the independence assumption, and (iv) designing more computationally efficient, parameter-free ranking-regret algorithms, particularly methods that avoid repeated geometric resampling.

\newpage
\section*{Acknowledgements}
This project is supported by the Ministry of Education, Singapore, under the Academic Research Fund Tier 1 (FY2025) and by the National University of Singapore, under the Start-Up Grant Scheme. The authors thank Cuong Le for his contributions during the initial stage of the project.

\section*{Impact Statement}
This paper presents work whose goal is to advance the field of machine
learning and game theory. There are many potential societal consequences of our work, none
which we feel must be specifically highlighted here.

\bibliographystyle{plainnat}
\bibliography{ref}

\newpage
\appendix


\section*{Appendix}
This supplementary material contains an overview of additional related work in Appendix~\ref{appsecs:related}, proofs omitted from the main paper for space considerations in Appendix~\ref{appsecs:proofs}, 
and further experimental results and details in Appendix~\ref{appsecs:experiments}. 


\section{Additional Related Work}\label{appsecs:related}

\paragraph{Games with Action Set Restrictions.} \citet{benaim2010class} and \citet{bravo2015reinforcement} studied an algorithm called Markovian Fictitious Play (MFP) in repeated two-player normal-form games where the action sets are restricted at each round. Unlike our setting, the action restrictions are dependent on the players' previous actions, and encoded via an \emph{exploration matrix}. MFP requires players to compute a best response at each round and is shown to converge almost surely to NE in two-player zero-sum and potential games.

\paragraph{Learning in Games with Exponential Action Sets.} 
\citet{panageas2023semi} and 
\citet{dong2023taming} studied semi-bandit learning in congestion games which admit exponentially large action sets, leading to slow convergence of standard methods. While their approach also utilizes learning over a compact set of `facilities', their setting focuses on computing NE of the original congestion game (i.e., NE which are implementable in a sense that is introduced in~\cref{thm:ne-implementable}). Comparatively, in GSAS, the set of NE can change drastically, and we seek to compute the NE of the game induced by the stochastic action sets, not $\calG_{\mathrm{orig}}$ itself. Moreover, our emphasis is also on computing compact representations of NE, which allow players to efficiently play the game optimally.

The study of regret minimization in games with structured strategy sets beyond congestion games has an extensive literature, primarily thanks to the canonical connection between no-regret learning and game-theoretic equilibria~\citep{freund1999adaptive,cesa2006prediction}. \citet{farina2022kernelized} and \citet{kontogiannis2025efficient} studied regret minimization in \emph{polyhedral} games (i.e., games with combinatorial structure which might have exponential action sets), and~\citet{farina2020stochastic} studied regret minimization under stochasticity (i.e., Monte-Carlo sampling of regrets). These papers focus on deriving regret bounds, whereas we additionally study efficient methods for equilibrium representation.




\subsection{On Connections to Bayesian Games}
\label{appsec:bayesiangame}
Since a GSAS can be viewed as a variant of Bayesian games, it is natural to discuss existing methods for online learning in Bayesian games. There are two primary settings of access to $\rho$ where algorithms for Bayesian games can be applied to GSAS. 

First, we consider the case where players only have sample or oracle access to $\rho$. In this setting,~\citet{hartline2015no} studied no-regret learning in finite Bayesian games, requiring only sample access from $\rho$. However, there are two key distinctions: (i) the equilibrium concept of concern in~\cite{hartline2015no} is coarse correlated equilibria (CCE) in general-sum Bayesian games, and so they focus on no-\emph{external}-regret algorithms, and (ii) their convergence result is an almost sure convergence statement and does not admit an explicit finite convergence rate. In comparison, we obtain finite-time convergence guarantees via a more fine-grained analysis.

Similarly, recent methods can approximately solve for correlated equilibria in general-sum Bayesian games~\citep{dagan2024external,peng2024complexity} using only sample access to $\rho$, but requiring knowledge of $K$, the number of types. Their algorithm, called multi-scale MWU, can in principle be applied to the 2p0s-GSAS setting, but the runtime grows as $O(\vert A\vert K \log(\vert A\vert K))$ (note that the runtime per iteration depends superlinearly on $K$). Comparatively, we exploit the additional structure of GSAS to design a procedure that has an improved rate and runtime that does \emph{not} depend on the number of types. In~\cref{fig:multiscale-MWU}, we compare the wallclock runtimes of multi-scale MWU and our method in randomly generated GSAS, showing that our method can solve significantly larger games than multi-scale MWU in a given time budget.


The second setting is in the case that $\rho$ is explicitly known, which makes equilibrium computation easier, since GSAS can be explicitly written as a Bayesian game where action subsets are modeled as types. Then, one could expand the GSAS into an expanded normal-form game, similar to how one might expand a Bayesian game into agent-form/induced normal-form, and solve it directly in this space. To improve scalability, one can also avoid the full-blown normal-form LP by converting the game into sequence-form, which models Nature as a chance node in a game tree that determines the players’ types. This representation is well-known to be computationally more efficient than the full normal-form representation, and solving this can be done with an off-the-shelf game solver which obtains a Bayesian Nash equilibrium corresponding to a Nash equilibrium in the GSAS. However, these methods have a dependence on the size of the sequence-form representation, which again can be linear in the number of types. Conversely, our procedure is designed to exploit the additional structure afforded by GSAS to avoid this blow-up. This observation is corroborated in our experiments (cf.~\cref{fig:runtime_experiment}), where our method is compared explicitly to an LP solver as applied to the sequence-form of the game.

\section{Omitted Proofs}\label{appsecs:proofs}
\subsection{Derivation of Equation (\ref{eq:depend-only-marginals})}\label{appsec:eqn4deriv}
\[
\begin{aligned}
U_i(\pi)
&= \mathbb{E}_{S\sim\rho}
   \mathbb{E}_{a\sim\pi(S)}[u_i(a)]\\
&= \sum_{S\in\mathcal S}
   \left(\prod_{j\in[n]}\rho_j(S_j)\right)
   \sum_{a\in S}
   u_i(a)\prod_{j\in[n]}\pi_j(a_j\mid S_j)\\
&= \sum_{a\in A}u_i(a)
   \sum_{\substack{S\in\mathcal S\\a\in S}}
   \prod_{j\in[n]}\rho_j(S_j)\pi_j(a_j\mid S_j)\\
&= \sum_{a\in A}u_i(a)
   \sum_{S\in\mathcal S}
   \prod_{j\in[n]}
   \rho_j(S_j)\pi_j(a_j\mid S_j)
   \mathbf{1}\{a_j\in S_j\}\\
&= \sum_{a\in A}u_i(a)
   \prod_{j\in[n]}
   \sum_{S_j\in\mathcal S_j}
   \rho_j(S_j)\pi_j(a_j\mid S_j)
   \mathbf{1}\{a_j\in S_j\}\\
&= \sum_{a\in A}u_i(a)
   \prod_{j\in[n]}P[a_j;\rho_j,\pi_j].
\end{aligned}
\]

The first equality expands the expectations, using \cref{assumption:product} to factor $\rho$ and the definition of the joint strategy to factor the conditional action distribution. The second equality exchanges two finite sums. The third replaces the constraint $a\in S$ with action-availability indicators. The fourth factors the sum over the Cartesian product. The final equality follows from the definition.

\subsection{Proof of Proposition~\ref{thm:ne-implementable}}
\neImplementable*
\begin{proof}
    ($\impliedby$) Consider a strategy $\pi$ which implements $\mu$. We have that $U_i(\mu) \geq \max_{\mu_i' \in M_i} U_i(\mu_i', \mu_{-i}) - \epsilon$. Expanding the expression for the expected payoff of $\pi$, we get
    \begin{align*}
        U_i(\pi) &= \sum_{a\in A} u_i(a) \prod_{j\in [n]} \mathbb{P}[a_j ; \rho_j, \pi_j]\\
        &= U_i(\mu)\\ 
        &\ge \max_{\mu_i' \in M_i} U_i(\mu_i', \mu_{-i}) - \epsilon\\
        &= \sum_{a_i\in A_i} u_i(a_i) \mathbb{P}[a_i; \rho_i,\pi_i] \cdot \sum_{a_{-i}\in A_{-i}} u_{-i}(a_{-i}) \prod_{-i} \mathbb{P}[a_{-i}; \rho_{-i},\pi_{-i}] -\epsilon\\
        &= \max_{\pi_i'} U_i(\pi_i', \pi_{-i}) - \epsilon,
    \end{align*}
    
    where we utilize the fact that $\mu_i(a_i) = \mathbb{P}[a_i; \rho_i, \pi_i]$.

    ($\implies$)
    Conversely, suppose that \(\pi\) is an \(\epsilon\)-Nash
    equilibrium. Fix \(i\) and an arbitrary \(\mu_i'\in M_i\), and
    choose a strategy \(\pi_i'\) that implements \(\mu_i'\). Then
    \[
    U_i(\mu)
    =U_i(\pi)
    \ge U_i(\pi_i',\pi_{-i})-\epsilon
    =U_i(\mu_i',\mu_{-i})-\epsilon.
    \]
    
    Taking the maximum over \(\mu_i'\in M_i\) proves the claim.
\end{proof}
\subsection{Proof of Proposition~\ref{thm:ne-2p0s}}
\neZeroSum*
\begin{proof}
    The proof follows directly from von Neumann's minimax theorem~\citep{v1928theorie} and the fact that $U(\pi) = U(\mu)$ if $\pi$ implements $\mu$. 
\end{proof}
\subsection{Proof of Proposition~\ref{thm:gsas-contains-ne}}
\gsasContainsNe*
\begin{proof}
Let \(\pi^\ast\) implement \(\mu^\ast=x^\ast\). Any unilateral deviation \(\pi_i'\) in the GSAS induces an implementable marginal \(\mu_i'\in M_i\subseteq\Delta(A_i)\). Since \(x^\ast\) is an \(\epsilon\)-Nash equilibrium of \(G_{\mathrm{orig}}\),
\[
u_i(x_i^\ast,x_{-i}^\ast)
\ge
u_i(\mu_i',x_{-i}^\ast)-\epsilon.
\]

As \(\pi^\ast\) implements \(x^\ast\),
\[
U_i(\pi^\ast)=u_i(x_i^\ast,x_{-i}^\ast),\quad
U_i(\pi_i',\pi_{-i}^\ast)
=u_i(\mu_i',x_{-i}^\ast).
\]

Therefore,
\[
U_i(\pi^\ast)
\ge
U_i(\pi_i',\pi_{-i}^\ast)-\epsilon
\]

for every player \(i\) and every unilateral deviation
\(\pi_i'\). Hence, \(\pi^\ast\) is an \(\epsilon\)-Nash
equilibrium of \(G\).
\end{proof}
    
\subsection{Proof of Theorem~\ref{thm:compact}}\label{proof:compact}
\gsasCompact*
\begin{proof}
	The proof will require the following linear algebraic result.
	\begin{theorem}[\cite{menon1969spectrum}]
		\label{theorem:matrix_scaling}
		Let $X \in \mathbb{R}_{\ge 0}^{n \times m}$ be a nonnegative matrix and $r \in \mathbb{R}_{>0}^n$, $c \in \mathbb{R}_{>0}^m$. Then there exist $u \in \mathbb{R}_{>0}^n$ and $v \in \mathbb{R}_{>0}^m$ such that $P = \operatorname{diag}(u)X\operatorname{diag}(v)$ has row sums $r$ and column sums $c$ if and only if there exists a nonnegative matrix  $Y \in \mathbb{R}_{\geq0}^{n \times m}$ with row sums $r$, column sums $c$, and $\operatorname{supp}(Y) = \operatorname{supp}(X)$.
	\end{theorem}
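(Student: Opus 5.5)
The necessity direction is immediate, and the substance is sufficiency, which I would prove variationally: find the scalings as the minimizer of a convex potential whose stationarity conditions are exactly the prescribed row and column sums.

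\emph{Necessity.} Suppose $u,v>0$ and $P=\operatorname{diag}(u)X\operatorname{diag}(v)$ has row sums $r$ and column sums $c$. Then $Y:=P$ is nonnegative with those sums. Moreover $P_{ij}=u_iX_{ij}v_j$ is nonzero exactly when $X_{ij}\neq 0$, so $\operatorname{supp}(Y)=\operatorname{supp}(X)$.

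\emph{Sufficiency: the potential.} Let $Y$ be as in the hypothesis and write $E:=\operatorname{supp}(X)=\operatorname{supp}(Y)$. For $(x,y)\in\mathbb{R}^n\times\mathbb{R}^m$ define
\[
f(x,y)=\sum_{(i,j)\in E} X_{ij}e^{x_i+y_j}-\sum_i r_ix_i-\sum_j c_jy_j .
\]
This $f$ is smooth and convex. Its partial derivatives are $\partial f/\partial x_i=\sum_j X_{ij}e^{x_i+y_j}-r_i$ and $\partial f/\partial y_j=\sum_i X_{ij}e^{x_i+y_j}-c_j$. Hence any critical point gives $u=e^{x}$, $v=e^{y}$ with the required row and column sums. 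It therefore suffices to show that $f$ attains its minimum.

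\emph{Rewriting via $Y$.} Set $s_{ij}=x_i+y_j$. Since $Y$ has row sums $r$, column sums $c$, and is supported on $E$, we have $\sum_i r_ix_i+\sum_j c_jy_j=\sum_{(i,j)\in E}Y_{ij}s_{ij}$. Therefore
\[
f(x,y)=\sum_{(i,j)\in E}\bigl(X_{ij}e^{s_{ij}}-Y_{ij}s_{ij}\bigr).
\]
Every coefficient $X_{ij}$ and $Y_{ij}$ in this sum is strictly positive. Each summand is minimized at $e^{s_{ij}}=Y_{ij}/X_{ij}$, so $f$ is bounded below. Each summand also tends to $+\infty$ as $|s_{ij}|\to\infty$. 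Consequently, on any sublevel set $\{f\le \alpha\}$ every $s_{ij}$ with $(i,j)\in E$ lies in a bounded interval.

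\emph{Coercivity modulo the kernel (the main obstacle).} The remaining task is to pass from bounded $s$ to bounded $(x,y)$. The obstruction is the kernel of the linear map $L:(x,y)\mapsto (x_i+y_j)_{(i,j)\in E}$.

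Consider the bipartite graph on rows and columns with edge set $E$. Because $r,c>0$ and $Y$ is supported on $E$, every row and every column has at least one edge. A short path argument then identifies the kernel: on each connected component $K$, $L(x,y)=0$ forces $x_i=t_K$ and $y_j=-t_K$ for some constant $t_K$. Indeed, along any edge $y_j=-x_i$, so values propagate through the component.

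Both terms of $f$ depend on $(x,y)$ only through $s=L(x,y)$, so $f$ is constant along $\ker L$. I would therefore restrict $f$ to a complementary subspace $V$, for instance by imposing $x_{i_K}=0$ for one chosen row $i_K$ in each component $K$. On $V$ the map $L$ is injective, hence bounded below in norm. Sublevel sets of $f|_V$ are therefore bounded, and they are closed by continuity, so they are compact. Thus $f|_V$ attains its minimum, which is a global minimum of $f$. A global minimizer of a smooth function is a critical point, so the derivative formulas above yield the desired positive scalings $u=e^x$ and $v=e^y$.

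If a constructive route were preferred, one could alternatively run Sinkhorn alternating normalization; these are coordinate-minimization steps on the same $f$, so the same compactness argument would underlie its convergence. I would not pursue this, since the variational proof already establishes existence.
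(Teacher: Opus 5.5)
Your proof is correct and complete. The paper gives no proof of this statement: it quotes it from \cite{menon1969spectrum} and uses it as a black box inside the proof of Theorem~\ref{thm:compact}, so there is no in-paper argument to compare against step by step.

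Your variational route is self-contained. After rewriting $f$ through $Y$ as $\sum_{E}(X_{ij}e^{s_{ij}}-Y_{ij}s_{ij})$, coercivity in $s$ follows from $X_{ij},Y_{ij}>0$ on $E$. Your description of $\ker L$ via connected components of the bipartite support graph correctly isolates the only flat directions. Beyond replacing the citation, this route buys three things that matter for the paper.

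\emph{Where the hypothesis enters.} It shows exactly where $\operatorname{supp}(Y)=\operatorname{supp}(X)$ is used: coercivity as $s_{ij}\to-\infty$. With only $\operatorname{supp}(Y)\subseteq\operatorname{supp}(X)$ the infimum can fail to be attained. This is precisely why the paper perturbs $\pi_i$ to the full-support $\tilde\pi_i^{(\epsilon)}$ and obtains only a $\delta$-approximation.

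\emph{Uniqueness.} Since $f$ is strictly convex as a function of $s$, the scaled matrix $P=\operatorname{diag}(u)X\operatorname{diag}(v)$ is unique. The pair $(u,v)$ itself is unique only up to one scalar per connected component. This is the uniqueness that the maximum-entropy part of Remark~\ref{rem:compact-interesting} asserts without proof.

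\emph{Link to Algorithm~\ref{alg:compute_w}.} Minimize $f$ over the row variables in closed form, with rows indexed by $S\in\mathcal{S}_i$, $r=\rho_i$ and $c=\mu_i$. Up to an additive constant this leaves $\sum_{S}\rho_i(S)\log\sum_{a\in S}e^{y_a}-\langle\mu_i,y\rangle$, which is exactly the KL potential $V$ used in the proof of Proposition~\ref{prop:compute_w}. Your potential therefore exhibits Algorithm~\ref{alg:compute_w} as stochastic gradient descent on a convex function. Your compactness argument guarantees that this function has a minimizer whenever the target marginal admits a full-support implementation.

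The citation buys only brevity.
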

	
	Consider any player $i\in [n]$ and overloading notation for the purposes of this proof, fix some \(\epsilon \in (0, \frac{\delta}{\lvert A_i \rvert})\).
	We will use Theorem \ref{theorem:matrix_scaling} to show the existence of $w_i$. For each availability set $S_i\in\mathcal{S}_i$, let us define a conditional distribution $\tilde \pi^{(\epsilon)}_i(S_i)$ as
	\begin{equation}\label{eqn:conditional}
		\tilde \pi^{(\epsilon)}_i(a_i\mid S_i) := 
		\begin{cases}
			(1-\epsilon)\pi_i(a_i\mid S_i) + \frac{\epsilon}{|S_i|}, &\text{if } a_i \in \mathcal S_i\\
			0, &\text{otherwise}.
		\end{cases}
	\end{equation}
    
	This perturbed distribution ensures \(\tilde \pi_i^{(\epsilon)}\) has full support for all \(S_i\).
	
	Let $\tilde \mu_i^{(\epsilon)}$ be the corresponding marginal distribution over $A_i$, i.e., $\tilde \mu_i^{(\epsilon)}(a_i) := \sum_{S_i\in\mathcal{S}_i}\rho_i(S_i)\tilde \pi^{(\epsilon)}_i({a_i\mid S_i})$.
	Let us also define a matrix $X_i\in\{0,1\}^{|\calS_i|\times|A_i|}$ as the indicator of action availability, i.e., $X_i(S_i,a_i):=\bbone\{a_i\in S_i\}$, and a nonnegative matrix $Y_i^{(\epsilon)}\in\mathbb{R}^{|\calS_i|\times|A_i|}_{\geq 0}$ with
	\begin{equation}
	 	Y^{(\epsilon)}_i({S_i,a_i}) := \rho_i(S_i)\tilde \pi_i^{(\epsilon)}(a_i\mid S_i). \label{eqn:Ydef}
	 \end{equation}
	 
	The row sums of \(Y^{(\epsilon)}_i\) are given by
	\begin{align}
		\sum_{a_i \in A_i} Y^{(\epsilon)}_i(S_i, a_i) &= \sum_{a_i \in A_i} \rho_i(S_i)\tilde \pi_i^{(\epsilon)}(a_i\mid S_i)\\
		&= \rho_i(S_i) \left[ \sum_{a_i \in A_i}\bbone\{a_i \in \calS_i\}\left((1-\epsilon)\pi_i(a_i\mid S_i) + \frac{\epsilon}{|S_i|} \right) \right]\\
		&= \rho_i(S_i) \left[ (1-\epsilon) \sum_{a_i \in S_i} [ \pi_i(a_i\mid S_i)] + \epsilon \left(\sum_{a_i \in S_i}\frac{1}{|S_i|} \right)\right]\\
	&= \rho_i(S_i).\quad\text{(as both inner sums equal 1)}
	\end{align}
	
	The column sums of \(Y^{(\epsilon)}_i\) are given by 
	\(\sum_{S_i \in \calS_i} Y^{(\epsilon)}_i(S_i, a_i) = \sum_{S_i \in \calS_i} \rho_i(S_i)\tilde \pi_i^{(\epsilon)}(a_i\mid S_i) =: \tilde \mu^{(\epsilon)}_i(a_i) \),
	which is the marginal distribution of \(\tilde \pi^{(\epsilon)}_i\).
	
	Finally, note that \(Y_i^{(\epsilon)}(S_i, a_i) \neq 0\) if and only if \(a_i \in S_i\). This follows from the definition of \(\tilde \pi^{(\epsilon)}_i(a_i\mid S_i)\) (\cref{eqn:conditional}). Thus, \(\supp(Y_i^{(\epsilon)}) = \{(S_i,a_i) \mid a_i \in S_i\} = \supp(X_i)\).
	
	Since the matrix \(Y_i^{(\epsilon)}\) has row sums \(\rho_i(S_i)\), column sums \(\tilde \mu_i^{(\epsilon)}\) and \(\supp(Y_i^{(\epsilon)}) = \supp(X_i)\), the hypothesis of~\cref{theorem:matrix_scaling} holds for the triple $(X_i,\rho_i,\mu_i^{(\epsilon)})$. There thus exist vectors $u_i^{(\epsilon)}\in\mathbb R_{\ge0}^{|\mathcal{S}_i|}$ and $v_i^{(\epsilon)}\in\mathbb R_{\ge0}^{|A_i|}$ such that
	\[P_i^{(\epsilon)} = \operatorname{diag}\big(u_i^{(\epsilon)}\big) X_i \operatorname{diag}\big(v_i^{(\epsilon)}\big)\]
    
	has row sums $\rho_i$ and column sums $\tilde \mu^{(\epsilon)}_i$.
	 
	
	Now, we can define a probability distribution \(\pi'_i(S_i)\) over \(A_i\) for every availability set \(S_i\) as
	\begin{align}
		\pi'_i(a_i \mid S_i) &:= \frac{P^{(\epsilon)}_i(S_i,a_i)}{\sum_{a'_i \in A_i} P^{(\epsilon)}_i(S_i, a'_i)} \label{eqn:piprimedef}\\
		&= \frac{u_i^{(\epsilon)}(S_i) X_i(S_i, a_i) v_i^{(\epsilon)}(a_i)}{\sum_{a'_i \in A_i} u_i^{(\epsilon)}(S_i) X_i(S_i, a'_i) v_i^{(\epsilon)}(a'_i)}\\
		&= \frac{X_i(S_i, a_i) v_i^{(\epsilon)}(a_i)}{\sum_{a'_i \in A_i} X_i(S_i, a'_i) v_i^{(\epsilon)}(a'_i)}\\
		&=\frac{v_i^{(\epsilon)}(a_i)\bbone\{a_i\in S_i\}}{\sum_{a'_i\in A_i} v_i^{(\epsilon)}(a'_i)\bbone\{a'_i\in S_i\}}\\
		&=\frac{\frac{v_i^{(\epsilon)}(a_i)}{\| v_i^{(\epsilon)} \|_1}\bbone\{a_i\in S_i\}}{\sum_{a'_i\in A_i} \frac{v_i^{(\epsilon)}(a'_i)}{\| v_i^{(\epsilon)} \|_1}\bbone\{a'_i\in S_i\}}\\
		&=\frac{w_i(a_i)\bbone\{a_i\in S_i\}}{\sum_{a'_i\in A_i} w_i(a'_i)\bbone\{a'_i\in S_i\}},
	\end{align}
    
	where we define \(w_i(a_i) := \frac{v_i^{(\epsilon)}(a_i)}{\| v_i^{(\epsilon)} \|_1}\) and note that due to the rescaling, \(w_i \in \Delta(A_i)\).
	Next, the marginal probability distribution \(\mu'_i \in M_i\) induced by \(\pi'_i\) is
	\begin{align}
		\mu'_i(a_i) &= \sum_{S_i \in \calS_i} \rho_i(S_i) \pi'_i(a_i \mid S_i)\\
	&= \sum_{S_i \in \calS_i} \rho_i(S_i) \frac{P_i^{(\epsilon)}(S_i,a_i)}{\sum_{a'_i \in A_i} P_i^{(\epsilon)}(S_i, a'_i)},
	\end{align}
    
    where the latter equality follows by~\cref{eqn:piprimedef}.
	As the row sums of \(P^{(\epsilon)}_i\) are \(\rho_i\), we have \(\sum_{a'_i \in A_i} P^{(\epsilon)}_i(S_i, a'_i) = \rho_i(S_i)\) and so
	\begin{align}
		\mu'_i(a_i) &= \sum_{S_i \in \calS_i} P^{(\epsilon)}_i(S_i,a_i).
	\end{align}
    
	By~\cref{theorem:matrix_scaling}, \(P^{(\epsilon)}_i\) and \(Y_i^{(\epsilon)}\) have equal column sums and thus
		\begin{align}
		\mu'_i(a_i) &= \sum_{S_i \in \calS_i} Y^{(\epsilon)}_i(S_i,a_i)\\
		 &= \sum_{S_i \in \calS_i} \rho_i(S_i) \tilde \pi^{(\epsilon)}_i(a_i\mid S_i)\label{129}\\
		&= \sum_{S_i \in \calS_i} \rho_i(S_i) \bbone\{a_i \in S_i\}\left[(1-\epsilon)\pi_i(a_i\mid S_i) + \frac{\epsilon}{|S_i|}\right],
	\end{align}
    
    where Equation \eqref{129} follows by the definition of \(Y\) (\cref{eqn:Ydef}).
	Rearranging and noting \(\pi_i(a_i \mid S_i) = 0\) when \(a_i \notin S_i\),
	\begin{align}
		\mu'_i(a_i) 	&= (1-\epsilon) \sum_{S_i \in \calS_i} \rho_i(S_i) \pi_i(a_i \mid S_i) + \sum_{S_i \in \calS_i} \frac{\epsilon \rho_i(S_i) \bbone\{a_i \in S_i\}}{\lvert S_i \rvert}\\
			&= (1-\epsilon) \mu_i(a_i) + \sum_{S_i \in \calS_i} \frac{\epsilon \rho_i(S_i) \bbone\{a_i \in S_i\}}{\lvert S_i \rvert}\\
			&= \mu_i(a_i) + \epsilon \left(-\mu_i(a_i) + \sum_{S_i \in \calS_i} \frac{\rho_i(S_i) \bbone\{a_i \in S_i\}}{\lvert S_i \rvert}\right). \label{eqn:marginalrange}
	\end{align}
    
	We can now bound \(\mu'_i(a_i)\) on both sides. Using  \cref{eqn:marginalrange} and since \(\mu_i(a_i) \in [0,1]\),
	\begin{align}
		\mu'_i(a_i) &\geq \mu_i(a_i) - \epsilon \mu_i(a_i)\\
		 &\geq \mu_i(a_i) - \epsilon.
	\end{align}
    
	Similarly, from \cref{eqn:marginalrange} and utilizing the fact that \(\lvert S_i \rvert \geq 1\),	
	\begin{align}
		\mu'_i(a_i) & \leq \mu_i(a_i) + \epsilon \sum_{S_i \in \calS_i} \frac{\rho_i(S_i) \bbone\{a_i \in S_i\}}{\lvert S_i \rvert}\\
	&\leq \mu_i(a_i) + \epsilon \sum_{S_i \in \calS_i} \rho_i(S_i)\\
&\leq \mu_i(a_i) + \epsilon.
	\end{align}
    
	Thus, for all $a_i\in A_i$, \(\lvert \mu'_i(a_i) - \mu_i(a_i) \rvert \leq \epsilon\).
    The total variation distance between \(\mu_i\) and \(\mu'_i\) can then be bounded by
    \begin{align}
        d_{\mathrm{TV}}(\mu_i, \mu'_i) &= \frac{1}{2} \sum_{a_i \in A_i} \lvert \mu_i(a_i) - \mu'_i(a_i) \rvert\\
        &\leq \frac{1}{2} \sum_{a_i \in A_i} \epsilon\\
        &\leq \frac{1}{2} \lvert A_i \rvert \frac{\delta}{\lvert A_i \rvert}\\
        &< \delta,
    \end{align}
    
    where the last step uses $\epsilon<\delta/|A_i|$. This completes the proof.
\end{proof}

    \subsection{Illustration of proof via Example~\ref{ex:rps}}\label{appsec:analyticalRPS}
    We illustrate the proof of~\cref{thm:compact} using  \cref{ex:rps} from the main text.
    Utilizing ideas from the proof of~\cref{thm:compact} and the statement of Theorem~\ref{theorem:matrix_scaling}, we define a matrix $X_i\in\{0,1\}^{|\mathcal{S}_i|\times|A_i|}$ as the indicator of action availability, i.e., $X_i(S_i,a_i):=\bbone\{a_i\in S_i\}$, and a nonnegative matrix $Y_i^{(\epsilon)}\in\mathbb{R}^{|\mathcal{S}_i|\times|A_i|}_{\geq 0}$ with $Y^{(\epsilon)}_i({S_i,a_i}) := \rho_i(S_i)\tilde \pi_i^{(\epsilon)}(a_i\mid S_i)$ (a conditional distribution over actions, defined in Equation~\eqref{eqn:conditional}). Then for every $\epsilon\in(0,1)$, the matrix $Y_i^{(\epsilon)}$ has row sums $\rho_i$, column sums $\tilde \mu_i^{(\epsilon)}$, and $\operatorname{supp}(Y_i^{(\epsilon)})=\operatorname{supp}(X_i)$.     
    For the row player, the above construction gives us matrices $X_1$ and $Y^{(\epsilon)}_1$ given by 
    \begin{equation*}
        X_1 = \begin{pmatrix}
              1 & 1 & 1 \\
              0 & 1 & 1
        \end{pmatrix}
        \text{ and }
        Y^{(\epsilon)}_1 = 
        \begin{pmatrix}
            \frac{3-\epsilon}{12} & \frac{\epsilon}{6} & \frac{3-\epsilon}{12} \\
            0 & \frac{2-\epsilon}{4} & \frac{\epsilon}{4}
        \end{pmatrix}.
    \end{equation*}
    
    Theorem~\ref{theorem:matrix_scaling} guarantees that there exist vectors $u_i^{(\epsilon)}\in\mathbb R_{\ge0}^{|\mathcal{S}_i|}$ and $v_i^{(\epsilon)}\in\mathbb R_{\ge0}^{|A_i|}$ such that
    $P_i^{(\epsilon)} = \operatorname{diag}\big(u_i^{(\epsilon)}\big) X_i \operatorname{diag}\big(v_i^{(\epsilon)}\big)$
    has row sums $\rho_i$ and column sums $\tilde \mu^{(\epsilon)}_i$. In our case, by solving the system
    \begin{equation*}
    \label{eqn:quadraticsystem}
        \begin{cases}
            \operatorname{diag}(u^{(\epsilon)}_1)X_1\operatorname{diag}(v^{(\epsilon)}_1)\mathbf{1} = Y^{(\epsilon)}_1\mathbf{1}\\
            \operatorname{diag}(v^{(\epsilon)}_1)X_1^\top\operatorname{diag}(u^{(\epsilon)}_1)\mathbf{1} = {Y^{(\epsilon)}_1}^\top\mathbf{1},
        \end{cases}
    \end{equation*}
    
    we get 
    \begin{equation*}
    u^{(\epsilon)}_1 = \lambda\left(1, \frac{6}{3 + \epsilon}\right) \text{ and }
    v^{(\epsilon)}_1 = \frac{1}{12\lambda}\left(3-\epsilon, \frac{(3 + 2\epsilon)(3 + \epsilon)}{\epsilon + 9}, \frac{(6 - \epsilon)(3 + \epsilon)}{\epsilon + 9}\right),
    \end{equation*}
    
    for any constant $\lambda > 0$. 
    Setting $\lambda = \frac{1}{2}$ and letting $\epsilon\to 0$,
    we get $v^{(\epsilon)}_1\to w_1 = \left(\frac{1}{2}, \frac{1}{3}, \frac{1}{6}\right)$. With similar calculations, we obtain $w_2 = \left(\frac{2}{5}, \frac{2}{5}, \frac{1}{5}\right)$. With $w_1$ and $w_2$, we have $\pi_1(S_{11}) = \left(\frac{1}{2}, \frac{1}{3}, \frac{1}{6}\right)$, $\pi_1(S_{12}) = \left(0, \frac{2}{3}, \frac{1}{3}\right)$, $\pi_2(S_{21}) = \left(\frac{2}{3}, 0, \frac{1}{3}\right)$, and $\pi_2(S_{22}) = \left(0, \frac{2}{3}, \frac{1}{3}\right)$, which are consistent with the marginal distribution $\mu^*$ and also yield expected payoffs of 0 for each player. Therefore, $\pi$ is also a Nash equilibrium.

\subsection{Further properties of compact representation (Remark~\ref{rem:compact-interesting})}\label{appsec:furtherprops}

\begin{enumerate}[nosep, topsep=0pt, partopsep=0pt, parsep=0pt, itemsep=0pt, leftmargin=*, label=(\roman*)]
    \item Independence of irrelevant alternatives (IIA): for any $S_i, S_i' \in\mathcal S_i$ and any $a_i, a_i'\in S_i\cap S_i'$,
    \begin{equation*}
        \frac{\pi_i(a_i\mid S_i)}{\pi_i(a_i'\mid S_i)} =\frac{w_i(a_i)}{w_i(a'_i)} =\frac{\pi_i(a_i\mid S_i')}{\pi_i(a_i'\mid S_i')}.
    \end{equation*}
    
    This can be interpreted as the player being consistent in their choices no matter the subset of actions seen, an intuitive consequence of Assumption~\ref{assumption:product}.
    The IIA property follows directly from the relationship between $w_i$ and its corresponding $\pi_i$.
    \item Maximum-entropy characterization: given a marginal distribution $\mu_i^*$ induced by some equilibrium profile $\pi^*$, let $\Pi_i$ be the set of all strategies that implement $\mu_i^*$, and let  $\mathcal P_i$ be the set of all joint distributions $Q_i(S_i,a_i)$ of $(S_i,a_i)$ induced  by $\rho_i$ and $\Pi_i$. Then the matrix $P_i$ constructed in the proof of~\cref{thm:compact} is the unique maximizer of the Shannon entropy:
    \[\begin{split} P_i = \underset{Q_i\in\mathcal P_i}{\arg\max}\left\{H(Q_i)=-\sum_{S_i\in\mathcal{S}_i,a_i\in A_i} Q_i(S_i,a_i)\log Q_i(S_i,a_i)\right\}. \end{split}\]
    
    This follows directly from the fact that $P_i$ is the unique solution to an entropic optimal transport problem.
    Consequently, the strategy $\pi_i$ induced by $P_i$ with $\delta\to0$ is the \emph{unique} conditional distribution that maximizes the Shannon entropy among all strategies implementing $\mu_i^*$.
\end{enumerate}

\section{Proofs from Section~\ref{sec:compute}}\label{appsecs:proofs5}
\subsection{Proof of Proposition \ref{prop:SINash}}
\label{app:proof-ranking-regret-ne}
\SINashConvergence*
\begin{proof}
We first show that in GSAS, rankings contain a best fixed response in hindsight. Fix a player $i$ and define
\[
    G_i(a_i)
    :=
    \sum_{t=1}^T U_i(a_i;\pi_{-i}^t),
    \quad a_i\in A_i.
\]

For every $\pi_i$,
\[
\begin{aligned}
    \sum_{t=1}^T U_i(\pi_i,\pi_{-i}^t)
    &=
    \sum_{S_i\in\mathcal S_i}\rho_i(S_i)
    \sum_{a_i\in S_i}
    \pi_i(a_i\mid S_i)G_i(a_i)                                      \\
    &\le
    \sum_{S_i\in\mathcal S_i}\rho_i(S_i)
    \max_{a_i\in S_i}G_i(a_i).
\end{aligned}
\]

A ranking that orders the actions in decreasing order of $G_i$ attains the last expression for every $S_i$, with arbitrary tie-breaking. Therefore,
\begin{equation*}
    \max_{\tau_i\in\mathcal T_i}
    \sum_{t=1}^T U_i(\pi_i^{\tau_i},\pi_{-i}^t)
    =
    \max_{\pi_i}
    \sum_{t=1}^T U_i(\pi_i,\pi_{-i}^t).
    \label{eq:rankings-suffice}
\end{equation*}

Let $U:=U_1=-U_2$. The marginal representation of payoffs gives
\begin{align*}
    R_{T,1}^{\mathrm{rank}}+R_{T,2}^{\mathrm{rank}}
    &=
    \max_{\mu_1\in M_1}
    \sum_{t=1}^T U(\mu_1,\mu_2^t)
    -
    \min_{\mu_2\in M_2}
    \sum_{t=1}^T U(\mu_1^t,\mu_2)                                  \\
    &=
    T\left[
        \max_{\mu_1\in M_1}
        U(\mu_1,\bar\mu_2^T)
        -
        \min_{\mu_2\in M_2}
        U(\bar\mu_1^T,\mu_2)
    \right],
\end{align*}

where the second equality follows from bilinearity. The gap is the sum of the two unilateral-deviation gains at $(\bar\mu_1^T,\bar\mu_2^T)$, so each gain is at most $\epsilon_T$. Combining the above with \cref{thm:ne-implementable} then gives the result for every strategy profile implementing these marginals.
\end{proof}
\begin{cor}[Empirical-marginal residual]
\label{cor:empirical-marginal-spr}
For \(x_i\in\Delta(A_i)\), define
\[
\widehat{\operatorname{SPR}}(x_1,x_2)
:=
\max_{\mu_1\in M_1}U(\mu_1,x_2)
-
\min_{\mu_2\in M_2}U(x_1,\mu_2).
\]

Then, for every \(p\in(0,1)\), with probability at least
\(1-p\),
\[
\begin{aligned}
\widehat{\operatorname{SPR}}
(\tilde\mu_1^T,\tilde\mu_2^T)
\le
\frac{
R_{T,1}^{\mathrm{rank}}
+
R_{T,2}^{\mathrm{rank}}
}{T}+
\sqrt{\frac{2|A_1|}{pT}}
+
\sqrt{\frac{2|A_2|}{pT}}.
\end{aligned}
\]
\end{cor}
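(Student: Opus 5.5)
The plan is to compare $\widehat{\operatorname{SPR}}$ at the empirical marginals with $\operatorname{SPR}$ at the induced averages $(\bar\mu_1^T,\bar\mu_2^T)$. The latter equals $(R_{T,1}^{\mathrm{rank}}+R_{T,2}^{\mathrm{rank}})/T$ exactly by \cref{prop:SINash}. It therefore suffices to control the two perturbation terms coming from replacing $\bar\mu_i^T$ by $\tilde\mu_i^T$.

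The first step is a Lipschitz bound. Since $U(x_1,x_2)=\sum_{a}u_1(a)x_1(a_1)x_2(a_2)$ with $|u_1|\le1$, for any $\mu_1\in M_1$ we have $|U(\mu_1,x_2)-U(\mu_1,y_2)|\le\|x_2-y_2\|_1$. Taking the maximum over $\mu_1$ shows that $\max_{\mu_1\in M_1}U(\mu_1,\cdot)$ is $1$-Lipschitz in $\ell_1$, and likewise for the inner minimum. Hence
\[
\widehat{\operatorname{SPR}}(\tilde\mu_1^T,\tilde\mu_2^T)\le \operatorname{SPR}(\bar\mu_1^T,\bar\mu_2^T)+\|\tilde\mu_1^T-\bar\mu_1^T\|_1+\|\tilde\mu_2^T-\bar\mu_2^T\|_1 .
\]
Using $\|\cdot\|_1\le\sqrt{|A_i|}\,\|\cdot\|_2$, it then remains to bound each $\|\tilde\mu_i^T-\bar\mu_i^T\|_2$ with high probability.

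The second step is a martingale second-moment bound. Let $D_t:=\pi_i^t(\cdot\mid S_i^t)-\mu_i^t$. Here $\pi_i^t$ is determined by the history up to $t-1$, and $S_i^t\sim\rho_i$ is drawn independently of that history, so $\mathbb E[D_t\mid\mathcal F_{t-1}]=\mu_i^t-\mu_i^t=0$. The $D_t$ are therefore vector martingale differences, and orthogonality of increments gives
\[
\mathbb E\Bigl\|\sum_t D_t\Bigr\|_2^2=\sum_t\mathbb E\|D_t\|_2^2 .
\]
Each term satisfies $\|D_t\|_2^2\le\|D_t\|_1\cdot\|D_t\|_\infty\le 2$, so $\mathbb E\|\tilde\mu_i^T-\bar\mu_i^T\|_2^2\le 2/T$.

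The third step applies Markov's inequality to the squared norm: $\Pr\bigl(\|\tilde\mu_i^T-\bar\mu_i^T\|_2^2>2/(p'T)\bigr)\le p'$. This gives $\|\tilde\mu_i^T-\bar\mu_i^T\|_1\le\sqrt{2|A_i|/(p'T)}$ with probability at least $1-p'$.

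The main obstacle is the bookkeeping of the failure probability. A union bound over the two players with $p'=p/2$ gives total failure probability $p$, but it inflates the constants to $\sqrt{4|A_i|/(pT)}$. To recover the stated constant I would instead apply Markov once, to the sum $\|\tilde\mu_1^T-\bar\mu_1^T\|_2^2+\|\tilde\mu_2^T-\bar\mu_2^T\|_2^2$, whose expectation is at most $4/T$. Alternatively, one can check whether the tighter bound $\|D_t\|_2^2\le 2(1-\|\pi\|_2^2)\le 2$ can be sharpened to $1$ in expectation, which would absorb the factor of $2$. Either way the structure is the same: Lipschitz continuity, a martingale second moment, and Markov's inequality. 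If the exact constant resists, I would accept the union-bound version, since the claimed form differs only by a constant inside the square root.
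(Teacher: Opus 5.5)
Your argument has the same structure as the paper's proof: $1$-Lipschitz continuity in $\ell_1$ of the two best-response values, concentration of $\tilde\mu_i^T$ around $\bar\mu_i^T$, and Proposition~\ref{prop:SINash}. The paper only asserts ``concentration of the empirical marginals and a union bound.'' Your martingale second-moment argument followed by Markov is a correct way to supply that step.

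What needs fixing is the constant, and your two remedies are not equally good. Markov applied once to the unweighted sum $\|\tilde\mu_1^T-\bar\mu_1^T\|_2^2+\|\tilde\mu_2^T-\bar\mu_2^T\|_2^2$ (expectation at most $4/T$) only gives, via Cauchy--Schwarz, $\sqrt{|A_1|}\,\|\tilde\mu_1^T-\bar\mu_1^T\|_2+\sqrt{|A_2|}\,\|\tilde\mu_2^T-\bar\mu_2^T\|_2\le 2\sqrt{(|A_1|+|A_2|)/(pT)}$. This is strictly larger than $\sqrt{2|A_1|/(pT)}+\sqrt{2|A_2|/(pT)}$ whenever $|A_1|\neq|A_2|$. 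Your fallback constant $\sqrt{4|A_i|/(pT)}$ proves only a weaker statement.

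The variance sharpening is the right fix, but not through $\|D_t\|_2^2\le 2(1-\|\pi\|_2^2)$, which fails pointwise. Take two actions with $S_i^t=\{a\}$ w.p.\ $1-\epsilon$ and $S_i^t=\{b\}$ w.p.\ $\epsilon$. On $S_i^t=\{b\}$ you get $\|D_t\|_2^2=2(1-\epsilon)^2$, while the right-hand side is at most $4\epsilon$ whether evaluated at $\pi_i^t(\cdot\mid S_i^t)$ or at $\mu_i^t$. Use the conditional variance identity instead: $\mathbb E[\|D_t\|_2^2\mid\mathcal F_{t-1}]=\mathbb E[\|\pi_i^t(\cdot\mid S_i^t)\|_2^2\mid\mathcal F_{t-1}]-\|\mu_i^t\|_2^2\le 1$, since a probability vector has squared $\ell_2$-norm at most $1$. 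Then:
\begin{itemize}
\item $\mathbb E\|\tilde\mu_i^T-\bar\mu_i^T\|_2^2\le 1/T$;
\item Markov at level $p/2$ gives $\|\tilde\mu_i^T-\bar\mu_i^T\|_2^2\le 2/(pT)$ for each player;
\item a union bound gives $\|\tilde\mu_i^T-\bar\mu_i^T\|_1\le\sqrt{2|A_i|/(pT)}$ for both $i$ with probability at least $1-p$.
\end{itemize}
This is exactly the paper's constant and matches its ``union bound'' phrasing.

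Alternatively, keep your bound $2/T$ and use Minkowski's inequality in $L^2$. With $X_i:=\|\tilde\mu_i^T-\bar\mu_i^T\|_2$, you get $\mathbb E[(\sqrt{|A_1|}X_1+\sqrt{|A_2|}X_2)^2]\le (2/T)(\sqrt{|A_1|}+\sqrt{|A_2|})^2$. A single Markov bound on this square then gives the stated inequality directly, without a union bound.
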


\begin{proof}
By concentration of the empirical marginals and a union
bound, with probability at least \(1-p\),
\[
\|\tilde\mu_i^T-\tilde\mu_i^T\|_1
\le
\sqrt{\frac{2|A_i|}{pT}},
\quad i\in\{1,2\}.
\]

Because \(u_i\in[-1,1]\), expected payoffs are
\(1\)-Lipschitz in each marginal under the \(\ell_1\)-norm.
Hence,
\[
\widehat{\operatorname{SPR}}
(\tilde\mu_1^T,\tilde\mu_2^T)
\le
\operatorname{SPR}
(\bar\mu_1^T,\bar\mu_2^T)
+
\sum_{i=1}^2
\|\tilde\mu_i^T-\bar\mu_i^T\|_1.
\]

The result follows from Proposition 5.3.
\end{proof}

\subsection{Counterexamples for Sleeping External and Internal Regret (Remark~\ref{rem:externalfails})}\label{appsec:counter}

\subsubsection{Sleeping External Regret}
We first define a notion of Sleeping External Regret (SE-regret) which was introduced in~\citep{blum2007external,kleinberg2010regret}. 
\begin{definition}[Sleeping External Regret]
For any action $a'_i\in A_i$, the sleeping external regret for player $i$ is defined as:
\[
R^{\mathsf{EXT}}_{T,i}(a'_i) \coloneqq 
\mathbb{E}_{S \sim \rho}\left[\mathbb{E}_{a \sim \pi(S)}\left[\sum_{t=1}^T \bbone\{a'_i\in S_i^t\} (u_i(a'_i,a^t_{-i}) - u_i(a^t_i,a^t_{-i})) \right]\right].
\]
\end{definition}
In other words, the sleeping external regret captures the amount that player $i$ benefits if they always swapped to action $a'_i$ for all $t \in [T]$ where possible, regardless of the original (distribution over) action $a_i^t$ taken. 

\paragraph{Counterexample for SE-regret} We construct a single-player GSAS $\calG=\left(\calG_{\mathrm{orig}},\mathcal{S},\rho\right)$ where $\calG_{\mathrm{orig}}$ has three actions $a_1, a_2, a_3$. Let $\mathcal{S} = \{S_{1}=\{a_1,a_2\}, S_2=\{a_2,a_3\} \}$ with $\rho(S_1) = \rho(S_2) = 0.5$. The payoff function is given as $u(a_1) = 0.01$, $u(a_2) = 0.02$ and $u(a_3) = 1$. Suppose the player plays uniformly in $S_1$ and plays $a_3$ with probability 1 in $S_2$. The SE regret per round for each action is as follows:
\begin{align*}
    \frac{R^{\mathsf{EXT}}_T(a_1)}{T} &= \frac12(0.01-0.015) = -0.0025,\\
    \frac{R^{\mathsf{EXT}}_T(a_2)}{T} &= \frac12\left((0.02-1) + (0.02-0.015)\right) = -0.4875,\\
    \frac{R^{\mathsf{EXT}}_T(a_3)}{T} &= 0.
\end{align*}

Thus for each action $a$, $[R^{\mathsf{EXT}}_T(a)]_+=0$; hence it has no SE regret. But the strategy is not a Nash equilibrium, since a profitable deviation would be to play $a_2$ with probability 1 in $S_1$ instead; its gain is $\frac12(0.02-0.015)=0.0025$ per round. Therefore, no-SE-regret does not suffice to guarantee convergence to Nash equilibrium in GSAS.


\subsubsection{Sleeping Internal Regret}
Sleeping internal regret (SI-regret) is a stronger notion of sleeping regret introduced in~\cite{gaillard2023one} to study the sleeping bandit problem~\citep{kleinberg2010regret}.


\begin{definition}[Sleeping Internal Regret]\label{def:internalregret}
    For any pair of actions $\hat{a}_i\in A_i$ and $\hat{a}_i'\in A_i$, the sleeping internal regret (SI-regret) for player $i$ in $T$ rounds is 
    \begin{align*}
        R^{\mathsf{INT}}_{T,i}(\hat{a}_i\to \hat{a}_i') \coloneqq
        \mathbb{E}_{S \sim \rho}\left[\mathbb{E}_{a \sim \pi(S)}\left[\sum_{t=1}^T\mathbb{1}\{a_i^t = \hat{a}_i, \hat{a}_i'\in S_i^t\}\cdot
        \left(u_i(\hat{a}_i', a_{-i}^t) - u_i(a^t_i, a_{-i}^t)\right)\right]\right].
    \end{align*}
\end{definition}

In the case where a player's SI-regret grows sublinearly for each action pair, i.e., $\max_{\hat{a}_i, \hat{a}'_i} R^{\mathsf{INT}}_{T,i} = o(T)$ as $T\to\infty$, they are said to have \emph{no-SI-regret}.
The intuition is that player $i$ does not regret not playing action $\hat{a}_i'$  (if $\hat{a}_i'$ was available) every time they played $\hat{a}_i$, for any $\hat{a}_i$, $\hat{a}'_i$. With respect to the counterexample for SE-regret above, one can easily see that the SI-regret is \emph{not} sublinear in $T$, since the $R^{\mathsf{INT}}(a_1\to a_2) = (2-1.5) = 0.5$ at each round.

Analogous to the relationship between the non-sleeping variants of internal and external regret, no-SI-regret implies no-SE-regret (though the converse does not hold). This follows from the definitions of SI-regret and SE-regret:
\[
R^{\mathsf{EXT}}_{T,i}(a_i') = \sum_{a_i\in A_i} R^{\mathsf{INT}}_{T,i}(a_i\to a_i'). 
\]

Nevertheless, even though SI-regret is stronger than SE-regret, minimizing SI-regret does \emph{not} suffice for Nash convergence.

\paragraph{Counterexample for SI-regret} We construct a 2p0s-GSAS $\mathcal{G}=\left(\mathcal{G}_{\mathrm{orig}},\mathcal{S},\rho\right)$. In $\mathcal{G}_{\mathrm{orig}}$, player 1 has actions $x,y,z$ while player 2 has actions $L,R$. The payoff matrix is given as
\[\begin{pmatrix}
  & L & R \\
x & 0 & 1 \\
y & 1 & 0 \\
z & 1 & 1
\end{pmatrix}.\]

Let $\mathcal{S}_1=\{S_{11},S_{12}\}$, where $S_{11}=\{x,y\}$ and $S_{12}=\{x,y,z\}$, with $\rho_1(S_{11})=\rho_1(S_{12})=1/2$. Let $\mathcal{S}_2=\{S_{21}\}$, where $S_{21}=\{L,R\}$. Thus action $z$ is available to player 1 with probability $1/2$, while both actions are available for player 2.

Consider the strategy $\pi_i^t$ defined as follows:
\begin{itemize}
\item when $t$ is odd, set $\pi_1^t(S_{11})=x,\pi_1^t(S_{12})=y,\pi_2^t(S_{21})=L$;
\item when $t$ is even, set $\pi_1^t(S_{11})=y,\pi_1^t(S_{12})=x,\pi_2^t(S_{21})=R$.
\end{itemize}

Thus, on odd rounds player 1 plays $x$ from $S_{11}$ and $y$ from $S_{12}$, while player 2 plays $L$; on even rounds player 1 reverses $x$ and $y$, while player 2 plays $R$.
We now show that both players have no-SI-regret.

For player 1, consider the deviation $x\to y$. On an odd round, player 2 plays $L$ and this deviation is triggered when $S_1^t=S_{11}$, which has probability $1/2$. Its gain is $u_1(y,L)-u_1(x,L)=1-0=1$. Similarly, on an even round, player 2 plays $R$ and this deviation is triggered when $S_1^t=S_{12}$, which has probability $1/2$. Its gain is $u_1(y,R)-u_1(x,R)=0-1=-1$. Hence they cancel out and $R^{\mathsf{INT}}_{T,1}(x\to y)=0$. Similarly $R^{\mathsf{INT}}_{T,1}(y\to x)=0$.

Now consider deviations into $z$. As $z$ is only available with $S_{12}$, we have $u_1(z,L)-u_1(y,L)=u_1(z,R)-u_1(x,R)=0$. Therefore we also have $R^{\mathsf{INT}}_{T,1}(x\to z)=R^{\mathsf{INT}}_{T,1}(y\to z)=0$.

For player 2, it is easy to show by symmetry that $R^{\mathsf{INT}}_{T,2}(L\to R)=R^{\mathsf{INT}}_{T,2}(R\to L)=0$. Therefore, we have $R^{\mathsf{INT}}_{T,1}=R^{\mathsf{INT}}_{T,2}=0$ as required.
By definition, our strategy $\pi_i$ induces \[\bar{\mu}_1=\left(\frac12, \frac12, 0\right),\quad\bar{\mu}_2=\left(\frac12, \frac12\right).\]

We can calculate the corresponding payoff as $U_1(\bar{\mu}_1,\bar{\mu}_2)=1/2$.
Now, consider the alternative strategy $\pi_1'$ for player 1 given as $\pi_1'(S_{11})=y,\pi_1'(S_{12})=z$, which induces \[\mu_1'=\left(0,\frac12,\frac12\right),\]

with payoff
\[\begin{aligned}U_1(\mu_1',\bar{\mu}_2)&=\frac{1}{2}\left(\frac{u_1(y,L)+u_1(y,R)}{2}\right)+\frac{1}{2}\left(\frac{u_1(z,L)+u_1(z,R)}{2}\right) \\&=\frac{1}{2}\left(\frac{1}{2}\right)+\frac{1}{2}(1)=\frac{3}{4}.\end{aligned}\]

Hence we have that
\[U_1(\mu_1',\bar{\mu}_2)-U_1(\bar{\mu}_1,\bar{\mu}_2)=\frac34-\frac12=\frac14>0,\]

and therefore $(\bar{\mu}_1,\bar{\mu}_2)$ is not a Nash equilibrium. Hence, no-SI-regret does not guarantee convergence to NE in GSAS.

\subsection{Proof of Theorem \ref{thm:cat-ftpl}}
\label{app:cat-ftpl-proof}
\CATFTPLRegret*
\begin{proof}
Let $\mathcal F_{t-1}$ be the history before round $t$, i.e., the information generated by all play and algorithmic randomness revealed through the end of round $t-1$. Then under simultaneous play and Assumption 3.3, we have $S^t_i\perp a^t_{-i}\mid\mathcal{F}_{t-1}$. Now fix a player $i$ and omit its index. Write $d:=|A|$, $q(a):=\Pr(a\in S_t)$, and $\mathbb E_t[\cdot]:=\mathbb E[\cdot\mid\mathcal F_{t-1}]$. Let
\[
    \bar\ell_t(a):=\mathbb E_t[\ell_t(a)]
    =\frac{1-U_i(a;\pi_{-i}^t)}{2},
\]

and let $\mu_t$ and $\mu^\tau$ be the marginals induced by $\pi^t$ and ranking $\tau$, respectively. As action $a$ can be selected only when it is available, every implementable marginal obeys
\begin{equation}
    \mu(a)\le q(a).
    \label{eq:cat-ftpl-mu-q}
\end{equation}

We first record the properties of the loss estimator. Conditional on $\mathcal F_{t-1}$, $H_t(a)$ is geometric on $\{1,2,\ldots\}$ with parameter $p_a=(q(a)+\gamma)/(1+\gamma)$ and is independent of $\mathbb 1\{a\in S_t\}\ell_t(a)$. Therefore
\begin{align}
    \mathbb E_t[\hat\ell_t(a)]
      &=\frac{q(a)}{q(a)+\gamma}\bar\ell_t(a),\quad
\mathbb E_t[\hat\ell_t(a)^2]
      \le\frac{2q(a)}{(q(a)+\gamma)^2}.
    \label{eq:cat-ftpl-moments}
\end{align}

We also have the one-sided exponential-moment bound
\begin{equation}
    \mathbb E_t\left[
      e^{\gamma(\hat\ell_t(a)-\bar\ell_t(a))}
    \right]\le1.
    \label{eq:cat-ftpl-mgf}
\end{equation}

To see why this holds, let $s:=\gamma/(1+\gamma)$ and $M_a(s):=\mathbb E[e^{sH_t(a)}]$. Since $e^{sHx}\le1+x(e^{sH}-1)$ for $x\in[0,1]$,
\[
    \mathbb E_t[e^{\gamma\hat\ell_t(a)}]
    \le1+q(a)\bar\ell_t(a)(M_a(s)-1).
\]

For $q(a)>0$,
\[
    M_a(s)-1
    =\frac{e^s-1}{1-(1-p_a)e^s}
    \le\frac{\gamma}{q(a)},
\]

because $s\le\log(1+\gamma)$. The case $q(a)=0$ is immediate. Hence the above display is at most $1+\gamma\bar\ell_t(a)\le e^{\gamma\bar\ell_t(a)}$, proving \eqref{eq:cat-ftpl-mgf}.

Let $\hat L_t:=\sum_{s=1}^t\hat\ell_s$ and define the analysis potential
\[
    \Phi_t:=\mathbb E_{S\sim\rho_i}\left[
      \log\sum_{a\in S}e^{-\eta\hat L_{t-1}(a)}
    \right].
\]

By the Gumbel-max identity, the strategy induced by \eqref{eq:cat-ftpl-choice} is \eqref{eq:cat-ftpl-policy}. Thus $e^{-x}\le1-x+x^2/2$ for $x\ge0$ gives
\begin{equation}
    \Phi_{t+1}-\Phi_t
    \le-\eta\langle\mu_t,\hat\ell_t\rangle
       +\frac{\eta^2}{2}
        \sum_a\mu_t(a)\hat\ell_t(a)^2.
    \label{eq:cat-ftpl-potential-step}
\end{equation}

Moreover, $\Phi_1\le\log d$, and a ranking that orders actions by increasing $\hat L_T(a)$ minimizes the cumulative estimated loss in every availability set. Hence
\[
    \Phi_{T+1}\ge
    -\eta\min_\tau\sum_{t=1}^T
      \langle\mu^\tau,\hat\ell_t\rangle.
\]

Summing \eqref{eq:cat-ftpl-potential-step} yields, pathwise,
\begin{equation}
    \sum_t\langle\mu_t,\hat\ell_t\rangle
      -\min_\tau\sum_t\langle\mu^\tau,\hat\ell_t\rangle\le\frac{\log d}{\eta}
      +\frac{\eta}{2}\sum_{t,a}
       \mu_t(a)\hat\ell_t(a)^2.
    \label{eq:cat-ftpl-estimated-regret}
\end{equation}

By \eqref{eq:cat-ftpl-mu-q} and \eqref{eq:cat-ftpl-moments},
\begin{equation}
    \mathbb E\left[
      \sum_{t,a}\mu_t(a)\hat\ell_t(a)^2
    \right]\le2dT.
    \label{eq:cat-ftpl-variance}
\end{equation}

Now define the pathwise ranking regret in terms of conditional mean losses,
\[
    \mathcal R_T^\ell
    :=\sum_t\langle\mu_t,\bar\ell_t\rangle
      -\min_\tau\sum_t\langle\mu^\tau,\bar\ell_t\rangle.
\]

Adding and subtracting the estimated losses gives
\begin{align*}
    \mathcal R_T^\ell
    \le{}&
    \left[\sum_t\langle\mu_t,\hat\ell_t\rangle
      -\min_\tau\sum_t\langle\mu^\tau,\hat\ell_t\rangle\right]\\
    &+\sum_t\langle\mu_t,\bar\ell_t-\hat\ell_t\rangle
      +\max_a\sum_t
       (\hat\ell_t(a)-\bar\ell_t(a)).
\end{align*}

The expected second term is at most $d\gamma T$ by \eqref{eq:cat-ftpl-mu-q}--\eqref{eq:cat-ftpl-moments}. Iterating \eqref{eq:cat-ftpl-mgf}, followed by log-sum-exp and Jensen's inequality, gives
\begin{equation}
    \mathbb E\left[
      \max_a\sum_t(\hat\ell_t(a)-\bar\ell_t(a))
    \right]\le\frac{\log d}{\gamma}.
    \label{eq:cat-ftpl-uniform-error}
\end{equation}

The maximum over actions is sufficient because every ranking marginal $\mu^\tau$ belongs to $\Delta(A_i)$. Combining \eqref{eq:cat-ftpl-estimated-regret}, \eqref{eq:cat-ftpl-variance}, and \eqref{eq:cat-ftpl-uniform-error} gives
\[
    \mathbb E[\mathcal R_T^\ell]
    \le\frac{\log d}{\eta}+\eta dT+d\gamma T+\frac{\log d}{\gamma}.
\]

Finally, the normalization for loss implies $R_{T,i}^{\mathrm{rank}}=2\mathcal R_T^\ell$. This proves \eqref{eq:cat-ftpl-bound}; substituting \eqref{eq:cat-ftpl-parameters} completes the proof.
\end{proof}

\subsection{Proof of Proposition \ref{prop:cat-ftpl-high-prob-regret}}
\HighProbRegret*
\begin{proof}
First, note that ranking regret could be negative: an adaptive learner may outperform every fixed ranking in hindsight. We nevertheless only need to seek an upper-tail bound. Indeed, write $[x]_+:=\max\{x,0\}$, and for every $r\ge0$,
\[
    \{R_T^{\mathrm{rank}}>r\}
    =
    \{[R_T^{\mathrm{rank}}]_+>r\}.
\]

It therefore suffices to control the nonnegative random variable $[R_T^{\mathrm{rank}}]_+$, to which Markov's inequality can be applied. Fix a player and omit its index, and write $d:=|A|$. The proof of Theorem \ref{thm:cat-ftpl} gives, pathwise,
\begin{equation}
\frac{1}{2}R_T^{\mathrm{rank}}
\le
\frac{\log d}{\eta}
+\frac{\eta}{2}V_T+L_T+C_T,
\label{eq:cat-ftpl-proof-certificate}
\end{equation}

where
\begin{align*}
V_T&:=\sum_{t,a}\mu_t(a)\hat\ell_t(a)^2,\\
L_T&:=\sum_t\langle\mu_t,\bar\ell_t-\hat\ell_t\rangle,\\
C_T&:=\max_a\sum_t
      \bigl(\hat\ell_t(a)-\bar\ell_t(a)\bigr),\quad\bar\ell_t(a):=\mathbb E_t[\ell_t(a)].
\end{align*}

Since the first two terms on the right are nonnegative, taking positive parts gives
\begin{equation}
\frac{1}{2}[R_T^{\mathrm{rank}}]_+
\le
\frac{\log d}{\eta}
+\frac{\eta}{2}V_T+[L_T]_+ +[C_T]_+.
\label{eq:cat-ftpl-positive-certificate}
\end{equation}

We first bound $\mathbb E[V_T]$. From \eqref{eq:cat-ftpl-moments}, we have
\[
    \mathbb E_t[\hat\ell_t(a)^2]
    \le
    \frac{2q(a)}{(q(a)+\gamma)^2},
\]

Consequently,
\begin{align*}
    \mathbb E_t\left[
       \sum_{a\in A}\mu_t(a)\hat\ell_t(a)^2
    \right]
    &\le
    2\sum_{a\in A}
       \frac{\mu_t(a)q(a)}{(q(a)+\gamma)^2} \\
    &\le
    2\sum_{a\in A}
       \frac{q(a)^2}{(q(a)+\gamma)^2}
    \le 2n.
\end{align*}

Summing over $t$ and applying the tower rule therefore gives
\begin{equation}
    \mathbb E[V_T]\le 2dT.
    \label{eq:cat-ftpl-vt}
\end{equation}

To bound $\mathbb E[[L_T]_+]$, let $Y_t:=\langle\mu_t,\bar\ell_t-\hat\ell_t\rangle.$ Using \eqref{eq:cat-ftpl-mu-q} and \eqref{eq:cat-ftpl-moments},
\[
    0\le \mathbb E_t[Y_t]
    =\sum_a\mu_t(a)\frac{\gamma}{q(a)+\gamma}\bar\ell_t(a)
    \le d\gamma.
\]

Consider the martingale-difference sequence \(D_t:=Y_t-\mathbb E_t[Y_t]\). Jensen's inequality gives
\[
\begin{aligned}
    \mathbb E_t[D_t^2]
    &\le
    \sum_a\mu_t(a)\mathbb E_t[\hat\ell_t(a)^2]  \\
    &\le
    2\sum_a\frac{\mu_t(a)q(a)}{(q(a)+\gamma)^2}
    \le 2n.
\end{aligned}
\]

Therefore,
\begin{align}
    \mathbb E[[L_T]_+]
    &\le d\gamma T
      +\mathbb E\left[\left(\sum_tD_t\right)_+\right] \notag\\
    &\le d\gamma T
      +\sqrt{\sum_t\mathbb E[D_t^2]}
    \le d\gamma T+\sqrt{2dT}.
    \label{eq:cat-ftpl-positive-lt}
\end{align}

For $\mathbb E[[C_T]_+]$, define
\[
    Z_{T,a}
    :=
    \sum_{t}
    \bigl(\hat\ell_t(a)-\bar\ell_t(a)\bigr).
\]

Recall from \eqref{eq:cat-ftpl-mgf} that we have
\[
    \mathbb E_t\left[
      e^{\gamma(\hat\ell_t(a)-\bar\ell_t(a))}
    \right]\le1,
\]

and the tower rule implies
\[
    \mathbb E[e^{\gamma Z_{T,a}}]\le1.
\]

Therefore,
\[
\begin{aligned}
    \mathbb E\left[e^{\gamma[C_T]_+}\right]
    &\le
    1+\mathbb E[e^{\gamma C_T}]\\
    &\le
    1+\sum_{a\in A}\mathbb E[e^{\gamma Z_{T,a}}]
    \le d+1.
\end{aligned}
\]

Applying Jensen's inequality gives
\begin{equation}
    \mathbb E[[C_T]_+]
    \le\frac{\log (d+1)}{\gamma}
    \le\frac{\log d+1}{\gamma}.
    \label{eq:cat-ftpl-positive-ct}
\end{equation}

Combining these estimates \eqref{eq:cat-ftpl-vt}, \eqref{eq:cat-ftpl-positive-lt}, \eqref{eq:cat-ftpl-positive-ct} with \eqref{eq:cat-ftpl-positive-certificate} and setting $\eta=\gamma=\sqrt{\log d/(dT)}$ gives
\[
    \mathbb E[[R_T^{\mathrm{rank}}]_+]
    \le\left(8+\frac{2}{\log d}+2\sqrt{\frac{2}{\log d}}\right)\sqrt{dT\log d}.
\]

For $d\ge2$, we have
\[8+\frac{2}{\log d}+2\sqrt{\frac{2}{\log d}}\le8+\frac{2}{\log 2}+2\sqrt{\frac{2}{\log 2}}<16,\]

so \begin{equation*}
    \mathbb E[[R_T^{\mathrm{rank}}]_+]
    \le16\sqrt{dT\log d}.
\end{equation*}

Finally, Markov's inequality and a union bound over the players prove \eqref{eq:cat-ftpl-simultaneous-regret}.
\end{proof}

\subsection{Proof of Proposition~\ref{prop:compute_w}}
\stochasticapproximation*
\begin{proof}
Let us rewrite the update of $\theta_i^t$ as a stochastic approximation (SA) procedure in the sense of~\cite{robbins1951stochastic}. It is well known that the asymptotic behavior of the SA iterates can be characterized by the stability of a limiting ODE~\citep{kushner2012stochastic,borkar2008stochastic}.

Let $\hat \pi_i(a_i | S^t_i, \theta_i^t) = \frac{\bbone\{a_i\in S_i^t\}\exp({\theta_i^t(a_i)})}{\sum_{a'_i\in S_i^t}\exp({\theta_i^t(a'_i)})}$ and write $\hat \pi_i(S^t_i, \theta_i^t)$ as a distribution over $A_i$ given $S_i$ and $\theta_i^t$.
Moreover, let ${\mu}^t_i (a_i)$ denote the empirical marginal distribution obtained by taking the time-average of the strategies up to time $t$, i.e., ${\mu}^t_i (a_i) = \frac{1}{t}\sum_{t=1}^t\pi_i^t(a_i\vert S_i^t)$. 

In our setting, we can rewrite the update of $\theta^t_i$ as
\[\theta_i^{t+1} = \theta_i^t + \eta_t(g(\theta_i^t) + M^{t+1}),\]

where $g(\theta_i^t)$ is the mean-field given by 
\begin{equation*}
    g(\theta_i^t) = \mu_i^* - \mathbb{E}_{S_i\sim\rho_i}[\hat \pi_i(S_i, \theta_i^t)],
\end{equation*}

and $M^{t+1}$ is the martingale difference given by
\[M^{t+1} = \left(
\hat{\mu}^t_i
- \mu_i^*\right) + \left(\mathbb{E}_{S_i\sim\rho_i}[\hat \pi_i(S_i, \theta_i^t)] - \hat\pi_i(S_i^t, \theta_i^t)\right).\]

Thus Algorithm \ref{alg:compute_w} is a stochastic approximation seeking a root of $g(\theta_i) = 0$. 

By construction, we have that $M^{t+1}$ is bounded and $\mathbb{E}[M^{t+1}\mid\mathcal{F}_t] = 0$ where $\mathcal{F}_t = \sigma(\theta_i^{\tau}, S^{\tau}_i, \tau\leq t)$ is the filtration. Moreover, it is easy to check that $g(\theta_i)$ is a Lipschitz function. Therefore, the iterates $\theta_i^t$ are expected to track the limiting
ODE \[\dot \theta_i(t) = g(\theta_i(t)), t\geq 0.\]

Since $\sum_{a_i\in A_i}G_i^t(a_i) = 0$ for every $t$, it follows that $\theta_i^t$ lies on the hyperplane $\sum_{a_i\in A_i}\theta_i^t(a_i) = 1$ for every $t$. By~\cref{thm:compact}, we know that there exists a $\delta$-approximate $\tilde\theta_i^*$ on this hyperplane such that \(\lvert g(\tilde{\theta}_i^*)\rvert < \delta\).
Thus, it suffices to show that $\theta_i^*$ is a globally asymptotically stable equilibrium of our limiting ODE. To this end, let us define a function $V(\theta_i)$ as the KL divergence between the target joint distribution $P^*$ with respect to $S_i$ and $a_i$ and the joint distribution $P(\theta_i)$ induced by $\theta_i$ (i.e., $P(\theta_i)_{S_i, a_i} = \rho_i(S_i)\hat\pi_i(a_i\mid S_i,\theta_i)$) as
\[V(\theta_i) := D_{\mathrm{KL}}(P^*||P(\theta_i)).\]

Observe that $V(\theta_i)\geq 0$ for all $\theta_i$ and $V(\theta_i) =0$ if and only if $\theta_i = \theta_i^*$. Moreover, $V(\theta_i)$ is continuously differentiable in $\theta_i$ and 
\[\dot V(\theta_i(t)) = \langle\nabla_{\theta_i}V(\theta_i), \dot\theta_i(t)\rangle = \langle-(\mu_i^* - \mu_i(\theta)), \mu_i^* - \mu_i(\theta)\rangle = -||\mu_i^* - \mu_i(\theta_i)||^2.\]

That is, $\dot V(\theta_i(t)) \leq 0$ and $\dot V(\theta_i(t)) = 0$ if and only if $\mu_i^* = \mu_i(\theta_i)$ (or $\theta_i = \theta_i^*$ by the uniqueness of $\theta_i^*$). This implies that $V$ is a strict Lyapunov function, and thus the limiting ODE of Algorithm \ref{alg:compute_w} is globally asymptotically stable.
Therefore, almost surely, \(g(\theta_i^t) \to 0\), and the marginal induced by \(w_i^T\) converges to \(\mu_i^*\).
\end{proof}


    

\subsection{Finite Time Analysis of Stochastic Approximation}\label{appsec:robustSA}

The almost sure convergence above is established using the limiting ODE method~\citep{borkar2008stochastic}. However, for algorithmic purposes it is also useful to obtain explicit finite convergence rates~\citep{moulines2011non}.
As applied to our setting,~\cref{alg:compute_w} is an instantiation of the well-known Robbins-Monro algorithm~\citep{robbins1951stochastic}. While asymptotic convergence to the optimal value $w^*$ is established in~\cref{prop:compute_w}, the objective is convex but not strongly convex everywhere in the domain. As such, the finite convergence rate is sensitive to the stepsize schedule (see, e.g., Section 2.1 of~\cite{nemirovski2009robust}). 

In light of this,~\cite{nemirovski1978cezari} initially proposed the use of Cesàro means to avoid non-convergence or slow convergence for Lipschitz, convex functions, a method they referred to as \emph{robust stochastic approximation}. A simple modification to~\cref{alg:compute_w} can be described as follows: For any rounds $1\leq i\leq j$, let $\nu^t = \frac{\eta_t}{\sum_{t=i}^j \eta_t}$, use the time-averaged marginal from \(t=1\) to \(T\) as input, and use independent \(S_i \sim \rho\) observations from CAT-FTPL. We can still utilize decreasing stepsizes $\eta_t$, though the analysis holds even with constant stepsizes. Consider the points
\begin{equation*}
    \tilde{\theta}^j_i = \sum_{t=i}^j \nu^t\theta^t.
\end{equation*}

Then, following the analysis of~\citep{nemirovski1978cezari,nemirovski2009robust} we can select the stepsize schedule
\begin{equation*}
    \eta_t = \frac{D}{M\sqrt{t}},
\end{equation*}

where $D \coloneqq \max_\theta \|\theta-\theta^1\|_2$ and $M$ is a positive constant such that $\mathbb{E}[\|g(\theta^t)\|^2_2] \leq M^2$. In our setting, $M$ is $\sqrt{2}$ since it is a difference between probability distributions. $D$ is the maximal difference (in terms of $\ell_2$-norm) of some $\theta$ from the initial condition $\theta^1$. As \(\theta^1\) is initialized on the simplex, and any \(\theta\) can be normalized to the simplex while maintaining its marginal distribution, giving \(D = \sqrt{2}\).

As a direct consequence, by setting $i=1$, $j=T$ we get 

\begin{equation}
    \mathbb{E}[\| g(\tilde{\theta}^T_1) - g(\theta^*) \|_2^2] \leq \frac{DM}{\sqrt{T}} = O\left(\frac{1}{\sqrt{T}}\right).\label{eqn:cesaro_projection_bound}
\end{equation}




\subsection{Proof of Theorem~\ref{thm:mainthm}}
\mainthmnormalform*
First, we use the following lemma.
\begin{lemma}\label{lem:rsaconvergence}
	Suppose~\cref{alg:compute_w} using robust averaging is run with input marginals \(\tilde \mu_1, \tilde \mu_2\), independent \(S \sim \rho\) observations, and stepsizes \(1/\sqrt{t}\) to obtain compact vectors \(w_i\).
	If \((\tilde \mu_1, \tilde \mu_2)\) is  a \(\zeta\)-NE then,
		for any \(p \in (0,1)\), with probability at least \(1 - p\), 
		the SPR of the \(w_i\) vectors satisfy \(\mathrm{SPR}(w)^2 \leq O\left(\frac{1}{p \sqrt{T}} + \zeta^2\right)\). 
\end{lemma}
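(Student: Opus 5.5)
The plan is to split the SPR of the compact strategies into two pieces. One is the SPR of the input marginals $(\tilde\mu_1,\tilde\mu_2)$, which is at most $\zeta$ by hypothesis via \cref{thm:ne-implementable}. The other is the distortion from replacing each $\tilde\mu_i$ by the marginal $\hat\mu_i(w_i)$ induced by the compact vector. Since $u_i\in[-1,1]$, the payoff $U$ is bilinear and $1$-Lipschitz in each marginal under $\ell_1$. Hence for $\hat\mu_i:=\hat\mu_i(w_i)$,
\[
\mathrm{SPR}(\hat\mu_1,\hat\mu_2)\le \mathrm{SPR}(\tilde\mu_1,\tilde\mu_2)+\|\hat\mu_1-\tilde\mu_1\|_1+\|\hat\mu_2-\tilde\mu_2\|_1,
\]
because each max/min term in \cref{def:SPR} moves by at most the $\ell_1$ perturbation of the opposing marginal. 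Squaring and using $(x+y+z)^2\le 3(x^2+y^2+z^2)$ gives $\mathrm{SPR}(w)^2\le 3\zeta^2+3\sum_i\|\hat\mu_i-\tilde\mu_i\|_1^2$. It therefore suffices to show that each $\|\hat\mu_i-\tilde\mu_i\|_1^2$ is $O(1/(p\sqrt T))$ with high probability.

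For the second term, apply the robust stochastic approximation bound \eqref{eqn:cesaro_projection_bound}. The input target is the fixed marginal $\tilde\mu_i$, and the observations $S\sim\rho_i$ are independent, so the noise $M^{t+1}$ is a genuine bounded martingale difference with mean field $g(\theta)=\tilde\mu_i-\hat\mu_i(\theta)$. Since $\tilde\mu_i$ is implementable (it is an average of implementable strategies), \cref{thm:compact} supplies $\theta^*$ with $g(\theta^*)$ arbitrarily small. Letting $\delta\to 0$ (or absorbing $\delta$ into the constant), \eqref{eqn:cesaro_projection_bound} gives $\mathbb E\|\tilde\mu_i-\hat\mu_i(\tilde\theta^T)\|_2^2\le DM/\sqrt T=O(1/\sqrt T)$ with $D=M=\sqrt2$. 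To convert this to $\ell_1$, use $\|\cdot\|_1^2\le |A_i|\,\|\cdot\|_2^2$; this factor is one of the polynomial factors in $|A_i|$ hidden by $O(\cdot)$.

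To go from expectation to probability, apply Markov's inequality to the nonnegative variable $\sum_i\|\tilde\mu_i-\hat\mu_i\|_2^2$. Its expectation is $O(1/\sqrt T)$, so it exceeds $O(1/(p\sqrt T))$ with probability at most $p$. Combining with the deterministic decomposition above yields, with probability at least $1-p$,
\[
\mathrm{SPR}(w)^2\le O\!\left(\frac{1}{p\sqrt T}+\zeta^2\right).
\]

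The main obstacle is justifying that \eqref{eqn:cesaro_projection_bound} controls the marginal error $\|g\|_2^2$ at the averaged iterate, rather than an objective gap or a parameter distance. I would first argue that $g$ is the negative gradient of the convex function $F(\theta)=\mathbb E_{S}[\log\sum_{a\in S}e^{\theta(a)}]-\langle\tilde\mu_i,\theta\rangle$, which is the KL-Lyapunov function from the proof of \cref{prop:compute_w} up to a constant. I would then use the standard RSA guarantee $\mathbb E[F(\tilde\theta^T)-F^*]=O(DM/\sqrt T)$. Finally, since $F$ is smooth with $\nabla^2F\preceq I$ (a covariance of bounded indicators), I would apply $\|\nabla F(\theta)\|^2\le 2(F(\theta)-F^*)$. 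This converts the objective gap into the squared marginal error, and it also explains why the final bound is on $\mathrm{SPR}^2$ rather than on $\mathrm{SPR}$. If $F^*$ is not attained (boundary cases), I would apply the argument to the $\epsilon$-perturbed target from \cref{thm:compact} and let $\epsilon\to0$.
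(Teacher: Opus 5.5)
Your proposal is correct and follows the paper's proof: the same decomposition of $\mathrm{SPR}(w)$ into the input's NE gap plus the $\ell_1$ marginal errors $\|g(\tilde\theta_i)\|_1$, the same $\ell_1\to\ell_2$ conversion costing a factor $|A_i|$, the same appeal to the robust-SA bound \eqref{eqn:cesaro_projection_bound}, and the same Markov step. One slip is harmless: a $\zeta$-NE has SPR at most $2\zeta$, not $\zeta$, which only changes constants.

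The one difference is an improvement. The paper asserts \eqref{eqn:cesaro_projection_bound} directly for $\|g(\tilde\theta)-g(\theta^*)\|_2^2$ and adds $\|g(\theta^*)\|_2^2$ via \cref{thm:compact}. You instead derive the squared marginal error from the RSA objective-gap guarantee, using $g=-\nabla F$ and $\|\nabla F\|_2^2\le 2(F-\inf F)$ for the $1$-smooth convex $F$; this is exactly the step the paper leaves implicit. Note that implementability of $\tilde\mu_i$ should be taken from the $\zeta$-NE hypothesis. It does not follow from $\tilde\mu_i$ being "an average of implementable strategies", because an empirical average of $\pi_i^t(\cdot\mid S_i^t)$ need not lie in $M_i$.
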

\begin{proof}
	Let \(\mu_i\) be the marginal of the strategy output by~\cref{alg:compute_w} for player \(i\) and compactly represented by \(\tilde{w_i}\).
	Let \(\delta\) be the SPR of \((\mu_1, \mu_2)\). Using~\cref{eq:SPR},
	\begin{align}
		\mathrm{SPR} &= \max_{\mu_1' \in M_1 } {\mu'_1}^{\top} A \mu_2 - \min_{\mu_2' \in M_2} \mu_1^\top A\mu_2' =: \delta (\mu_1,\mu_2) .
	\end{align}

	Let \(\mu'_1 \in \argmax_{\mu''_1} {\mu''_1}^\top A \mu_2\) and \(\mu_2' \in \argmin_{\mu_2''} \mu_1^\top A \mu_2''\). Then, we have
	\begin{align}
		\delta (\mu_1,\mu_2) &= {\mu'}_1^\top A \mu_2  -  \mu_1^\top A\mu_2'\\
		&= {\mu'}_1^\top A \mu_2 - \tilde{\mu}_1^\top A \tilde{\mu}_2 + \tilde{\mu}_1^{\top}A \tilde{\mu}_2 -  \mu_1^\top A\mu_2'.
	\end{align}

	Since \((\tilde \mu_1, \tilde \mu_2)\) is a \(\zeta\)-NE, it holds that \(\forall \mu''_1, - \tilde \mu_1^\top A \tilde \mu_2 \leq -{\mu''_1}^\top A \tilde \mu_2 + \zeta \) and \(\forall \mu''_2,  \tilde{\mu}_1^\top A  \tilde{\mu}_2 \leq \tilde{\mu}_1^\top A \mu''_2 + \zeta\). Thus,
	\begin{align}
		\delta (\mu_1,\mu_2) &\leq {\mu'_1}^\top A \mu_2 - {\mu'_1}^\top A \tilde{\mu}_2 + {\tilde{\mu}_1}^\top A \mu'_2 -  \mu_1^\top A\mu'_2 + 2\zeta\\
		&= {\mu'_1}^\top A (\mu_2 - \tilde{\mu}_2) + ({\tilde{\mu}_1}^\top - \mu_1^\top) A \mu'_2 + 2\zeta\\
		&\leq D(\|\mu_1 - \tilde{\mu}_1 \|_1 + \|\mu_2 - \tilde{\mu}_2 \|_1) + 2\zeta, \label{eqn:mainthm_payoff_matrix}
	\end{align}
    
	with $D_1$ and $D_2$ the maximal absolute element of each player's payoff matrix, and $D = \max\{D_1,D_2\}$.
	
	Recall that by definition \(\mu_i = \mathbb{E}_{S_i\sim\rho_i}[\hat \pi_i(S_i, \tilde{\theta}_i^t)]\) and \(g(\tilde{\theta}_1^T) = \tilde{\mu}_i - \mathbb{E}_{S_i\sim\rho_i}[\hat \pi_i(S_i, \tilde{\theta}_i^t)]\).
	In the remainder of the proof, we drop the sub- and superscripts $1$ and $T$ since we consider only the Cesàro mean of the \(\theta\) iterates from \(1\) to \(T\). Furthermore, we use $\tilde{\theta}_i$ to describe the  Cesàro mean  of the ${\theta}$ iterates belonging to player $i$. Continuing from~\cref{eqn:mainthm_payoff_matrix}, we thus get 
	\begin{align}
		\delta (\mu_1,\mu_2)&\leq D(\|\mu_1 - \tilde{\mu}_1 \|_1 + \|\mu_2 - \tilde{\mu}_2 \|_1) + 2\zeta\\
		&= D( \|\tilde{\mu}_1 -\mathbb{E}_{S_1 \sim \rho_1}[\hat{\pi}_1(S_1, \tilde{\theta}_1)]\|_1 + \|\tilde{\mu_2} -\mathbb{E}_{S_2 \sim \rho_2}[\hat{\pi}_2(S_2, \tilde{\theta}_2)]\|_1) + 2\zeta\\
		&= D( \|g(\tilde{\theta}_1)\|_1 + \|g(\tilde{\theta}_2)\|_1) + 2\zeta.
	\end{align}
    
	Then, since \(g(\tilde{\theta}_i)\) lives in a compact set, we can apply Cauchy-Schwarz to obtain the inequality $\|g(\tilde{\theta}_i)\|_1\leq \sqrt{\lvert A_i \rvert} \|g(\tilde{\theta}_i)\|_2$. We therefore have
	\begin{align}
		\delta(\mu_1, \mu_2) &\leq D\left(\sqrt{\lvert A_1 \rvert} \|g(\tilde{\theta}_1)\|_2 + \sqrt{\lvert A_2 \rvert}\|g(\tilde{\theta}_2)\|_2\right) + 2\zeta\\
		&\leq D \sqrt{n} \left(\|g(\tilde{\theta}_1)\|_2 + \|g(\tilde{\theta}_2)\|_2\right) + 2\zeta,
	\end{align}
    
	where \(n \coloneqq \max\{\lvert A_1 \rvert, \lvert A_2 \rvert\}\). As \((a+b)^2 \leq 2(a^2 + b^2)\) with \(a,b	 \geq 0\),
	\begin{align}
		\delta^2(\mu_1, \mu_2) &\leq 2D^2n \left(\|g(\tilde{\theta}_1)\|_2 + \|g(\tilde{\theta}_2)\|_2\right)^2 + 8 \zeta^2\\
		&\leq 4 D^2 n \left( \|g(\tilde{\theta}_1)\|_2^2 + \|g(\tilde{\theta}_2)\|_2^2 \right) + 8\zeta^2.\label{eqn:spr_bound_before_split}
	\end{align}
	
	Let \(G := \|g(\tilde{\theta}_1)\|_2^2 + \|g(\tilde{\theta}_2)\|_2^2\) be a random variable over the RSA procedure's stochastic observations of \(S \sim \rho\). Therefore,
	\begin{align}
		G &= \|g(\tilde{\theta}_1)\|_2^2 + \|g(\tilde{\theta}_2)\|_2^2\\
		&\leq 2(\|g(\tilde{\theta}_1) - g(\theta_1^*)\|_2^2 + \|g(\tilde{\theta}_2) - g(\theta_2^*)\|_2^2 + \|g(\theta_1^*)\|_2^2 + \|g(\theta_2^*)\|_2^2).
	\end{align}
    
	By~\cref{thm:compact}, \(\|g(\theta_i^*)\|_2^2 = \|\tilde{\mu}_i - \mathbb{E}_{S \sim \rho}[\hat{\pi}_i(S_i, \theta^*_i)]\|_2< \frac{1}{\sqrt{T}}\), noting \(\theta^*_i\) is the minimizer of \(g\).
	Taking the expectation over the RSA procedure's observed \(S \sim \rho\), and by linearity of expectation, we have
	\begin{align}
		\mathbb{E}_{S \sim \rho}[G] &\leq 2\left( \mathbb{E}_{S \sim \rho}[\|g(\tilde{\theta}_1)- g(\theta_1^*)\|_2^2] + \mathbb{E}_{S \sim \rho}[\|g(\tilde{\theta}_2)- g(\theta_2^*)\|_2^2] \right) + \frac{4}{\sqrt{T}}.
	\end{align}
    
	Recalling the bound from~\cref{eqn:cesaro_projection_bound},
	\begin{equation}
		\mathbb{E}_{S \sim \rho}[G] \leq O \left(\frac{1}{\sqrt{T}}\right).
	\end{equation}
    
	We apply Markov's inequality to obtain that for all \(p \in (0,1)\), we have with probability at least \(1-p\),
	\begin{equation}
		G \leq O \left(\frac{1}{p\sqrt{T}}\right).\label{eqn:G_high_prob}
	\end{equation}
    
	Abusing notation, let \(\delta(\tilde{w})=\delta(\tilde{\theta}_1, \tilde{\theta}_2)=\delta(\mathbb{E}_{S_1\ \sim \rho_1}[\hat{\pi}_1(S_1, \tilde{\theta}_1) ], \mathbb{E}_{S_2\ \sim \rho_2}[\hat{\pi}_2(S_2, \tilde{\theta}_2)])\), noting that \(\tilde{W}\) is the normalized \((\tilde{\theta}_1, \tilde{\theta}_2)\) and so also implements \((\mu_1, \mu_2)\) with equivalent duality gap.
	Continuing from~\cref{eqn:spr_bound_before_split}, we thus have with probability \(1-p\) that
	\begin{align}
		\delta^2(\tilde{w}) &\leq D^2n O \left( \frac{1}{p \sqrt{T}}\right) +  8\zeta^2\\
		&\leq O\left(\frac{D^2 n}{p \sqrt{T}} + \zeta^2 \right),
	\end{align}
    
	completing the proof of~\cref{lem:rsaconvergence}.
\end{proof}

We can now proceed to prove~\cref{thm:mainthm}.
\begin{proof}
	By~\cref{prop:cat-ftpl-high-prob-regret}, with \(p_1 = p_2 = \frac{p}{2}\), with probability \(1-p\) all players have
	\begin{equation}
		R^\mathrm{rank}_{T,i} \leq \frac{32}{p} \sqrt{ \lvert A_i \rvert T \log \lvert A_i \rvert}.
	\end{equation}
    
	Therefore, by~\cref{prop:SINash} the time-averaged marginal distribution \((\tilde{\mu}_1, \tilde{\mu}_2)\) is a \(\frac{64 \sqrt{\lvert A_i \rvert \log \lvert A_i \rvert}}{p \sqrt{T}}\)-approximate NE.
	Running~\cref{alg:compute_w} with the theorem conditions on \((\tilde{\mu}_1, \tilde{\mu}_2)\) to obtain compact vectors \(\tilde{w_i}\), \cref{lem:rsaconvergence} then implies that, with probability \(1-p\), we have
	\begin{equation}
		\spr(w)^2 \leq O \left(\frac{1}{p \sqrt{T}} + \frac{1}{p^2 T} \right),
	\end{equation}
    
	as required. By independently sampling \(S \sim \rho\) in CAT-FTPL and~\cref{alg:compute_w}, the high probability events in ~\cref{prop:cat-ftpl-high-prob-regret} and~\cref{lem:rsaconvergence} are independent, and so we succeed with overall probability at least \((1-p)^2\), completing the proof.
\end{proof}

\section{Additional Experiments}\label{appsecs:experiments}
\subsection{Game definitions}\label{appsec:experimentalgamedefinitions}
In this section we formally define the games used in our experiments.

\begin{definition}[Random 2p0s-GSAS]\label{def:randomgsas}
A random GSAS has \(\lvert A_i \rvert = n\) actions for each player, with payoffs given by \[A_{i,j} \sim \mathrm{Uniform}[-1, 1],\]

and for each action we sample \(p_{a_i} \sim \mathrm{Uniform}[3/10, 5/10]\) when generating the game. Sampling \(S \sim\rho\) then, independently for each player, selects actions \(a_i\) to be available independently with probability \(p_{a_i}\) and redraws if no action is present.
\end{definition}

\begin{definition}[\(n \times n\) Random Biased Support (`RBS') GSAS]\label{def:RBSgsas}
An \(n \times n\) RBS GSAS has \(\lvert A_i \rvert = n\) actions for each player, with payoffs given by \[A_{i,j} \sim \mathrm{Uniform}[-1, 1],\]

except for the first two actions which are always available and have a matching pennies structure given by \(A_{1,1}=A_{2,2}=1\) and \(A_{1,2}=A_{2,1}=0\). For the other actions we sample \(p_{a_i} \sim \mathrm{Uniform}[1/100, 2/100]\). Sampling \(S \sim \rho\) then, independently for each player, selects actions \(a_i\) to be available independently with probability \(p_{a_i}\) and redraws if no action is present.
\end{definition}

\begin{definition}[Checkerboard \(n \times n\) matching pennies (`Checkerboard MP')]\label{def:checkermp}
    With \(n\) even, we define Checkerboard \(n \times n\) Matching Pennies (CMP) as a 2p0s-GSAS with \(\lvert A_i \rvert = n\) actions for each player with payoffs \[A_{i,j} := \begin{cases}
1 & \text{if } i = j \pmod 2,\\
-1 & \text{otherwise},\end{cases}\]

and action availabilities for player 1 drawn uniformly at random from\[\{\{i\} \cup \{2,4,\dots,n\} \mid i \in\{1,3,\dots,n-1\}\},\]

and player 2 always having access to all actions.
\end{definition}
Intuitively, player 1 wins if they pick an action with the same parity as player 2, and player 1 has access to all even actions and a single odd action chosen uniformly at random, while player 2 has access to all actions.

\begin{definition}[Biased \(n \times n\) Rock-Paper-Scissors (`Biased RPS')]\label{def:biased-rps}
Biased \(n \times n\) RPS has \(\lvert A_i \rvert=n\) actions for each player, with payoffs given by
\[A_{i,j} := \begin{cases}
-1 &\text{if }j = i+1 \pmod n,\\
1 &\text{if }j = i-1 \pmod n,\\
0 &\text{otherwise}.
\end{cases}\]

Player 1 only has their first action available with probability \(\frac{2}{n}\), and otherwise has all their actions available. Player 2 always has all actions available.
\end{definition}
This is a special case of the generalized \(n \times n\) Rock-Paper-Scissors defined by~\citet{lazarsfeld2025fast}.

\begin{definition}[Biased \(n \times n\) matching pennies (`Biased MP')]\label{def:biased-mp}
Biased \(n \times n\) MP has \(\lvert A_i \rvert = n\) actions for both players and payoffs given by
\[A_{i,j} := \begin{cases}
1 &\text{if }i = j,\\
0 &\text{otherwise}.
\end{cases}\]

Player 1 has w.p. \(4/5\) only actions \{\(a_1, \ldots, a_{\lfloor \frac{3n}{5} \rfloor}\}\) available, and w.p. \(1/5\) only actions \(\{a_{\lfloor \frac{2n}{5} \rfloor}, \ldots,  a_n\}\) available. Player 2 always has all actions available.
\end{definition}

\subsection{Additional Experimental Details}\label{appsec:experimentaldetails}
All experiments were run on a 2021 MacBook Pro with 32 GB of RAM and an 8-score Apple M1 Pro chip. The Gurobi commercial solver was allowed to use any number of threads. Gurobi optimizer `version 13.0.1 build v13.0.1rc0 (mac64[arm] - Darwin 24.6.0 24G90)' was used. Python version 3.13.3 with NumPy version 2.4.1 and SciPy version 1.16.3 were used. Central 95\% intervals, where reported, were computed using `numpy.quantile'. To ensure reproducibility, all randomness used in the algorithms utilized a numpy random generator seeded with the current run number i.e. if an experiment performed 100 runs, the \(r\)th run seeded the random number generator with \(r\).

\subsubsection{Experiment 1: Comparison with LP solver}\label{appsubsec:exp1_additional_details}
To provide a suitable comparison between CAT-FTPL and Gurobi, we first construct the sequence-form representation~\citep{von1996efficient} of the GSAS, which encodes all possible action subsets in the game, and apply Gurobi's LP solver to this expanded form, recording the wallclock convergence time. 
For CAT-FTPL, we calculate the SPR of the time-averaged marginals every 5000 iterations and record the wallclock time of~\cref{alg:cat-ftpl} until we measure \(\mathrm{SPR}(\tilde{\mu_1}, \tilde{\mu_2}) \leq 0.05\).
We repeat the experiment 20 times and show the average and range across all runs in \cref{fig:runtime_experiment}.
As described in~\cref{sec:experiments}, with a 360 second time budget, Gurobi could solve a GSAS of size \(11 \times 11\), via an LP with 13313 variables and a linear constraint matrix of size \(13313 \times 13313\) with 126904320 nonzero entries, taking $\approx293$ seconds to solve on average.
CAT-FTPL was able to obtain low SPR for much larger games (\(200 \times 200\)) within the same time budget.

Some additional remarks are in order. In \cref{fig:runtime_experiment}, we plot the \emph{wallclock} runtime reported by Gurobi, which does \emph{not} include the time to construct the sequence-form representation or the time to build the Gurobi model, which in practice can also be slow for large games. In other words, CAT-FTPL significantly outperforms Gurobi, even without considering the additional preprocessing required for Gurobi. In line with our theoretical results, we run CAT-FTPL with \(\eta_t = \sqrt{\frac{\log\lvert A_i\rvert}{\lvert A_i \rvert t}}\).

The sequence-form linear program has \(2^n + n2^{n-1}+1\) many variables, and a constraint matrix with as many rows and columns, which quickly becomes infeasible to solve as observed.

\subsubsection{Experiment 2: Convergence in large GSAS}\label{appsubsec:exp2_additional_details}
We find that using higher \(\eta\) values in \cref{alg:compute_w} induces faster convergence in practice, and so run \cref{alg:compute_w} with \(\eta_t = \frac{K}{\sqrt{t}}\) where \(K\) is given for each game in \cref{tab:exp_2_eta_coefficients}. We run CAT-FTPL with the parameters specified in~\cref{eq:cat-ftpl-parameters} i.e. using \(\eta_t =\gamma_t = \sqrt{\frac{\log \lvert A_i \rvert}{\lvert A_i \rvert t}}\).
Nevertheless, convergence is not particularly sensitive to these parameters, as shown in \cref{tab:exp_compute_w_coefficient_sensitivity} which shows the impact of different coefficient choices for \(100 \times 100\) Checkerboard Matching Pennies and \(100 \times 100\) RBS.

\begin{table}[t]
  \begin{center}
      \begin{sc}
        \begin{tabular}{lcccr}
          \toprule
          Game & \cref{alg:compute_w} \(K\) coefficient \\
          \midrule
          \(100 \times 100\) Checkerboard MP & 10 \\
          \(100 \times 100\) RBS GSAS & 10 \\
          \cref{ex:rps} game & \(\sqrt{2}\) \\
          \(100 \times 100\) Biased RPS & 10 \\
          \(100 \times 100\) biased MP & 10\\
          \bottomrule
        \end{tabular}
      \end{sc}
      \end{center}
    \caption{\(\eta\) coefficients used in Experiment 2, with CAT-FTPL using \(\eta_t =\gamma_t = \sqrt{\frac{\log \lvert A_i \rvert}{\lvert A_i \rvert t}}\) from~\cref{eq:cat-ftpl-parameters} and \cref{alg:compute_w} using \(\eta_t = \frac{K}{\sqrt{t}}\)}
  \label{tab:exp_2_eta_coefficients}
\end{table}

\begin{table}[t]
  \begin{center}
      \begin{sc}
        \begin{tabular}{l c r@{ $-$}l c r@{ $-$}l}
          \toprule
          & \multicolumn{3}{c}{\(100 \times 100\) CMP} & \multicolumn{3}{c}{\(100 \times 100\) RBS}\\ \cmidrule(r){2-4} \cmidrule(l){5-7}
           ALG \ref{alg:compute_w} \(K\) & MEAN SPR & \multicolumn{2}{c}{SPR RANGE} & MEAN SPR & \multicolumn{2}{c}{SPR RANGE}\\
          \midrule
1 & 0.4107 & 0.40301 & 0.41939 & 0.08925 & 0.07023 & 0.10999\\
5 & 0.10489 & 0.09792 & 0.11209 & 0.04669 & 0.03623 & 0.06061\\
10 & 0.06409 & 0.05658 & 0.07082 & 0.03781 & 0.02769 & 0.05004\\
15 & 0.04807 & 0.03911 & 0.05565 & 0.03406 & 0.0241 & 0.04536\\
20 & 0.0379 & 0.02765 & 0.0467 & 0.03175 & 0.02179 & 0.04229\\
          \bottomrule
        \end{tabular}
      \end{sc}
  \end{center}
  \caption{Saddle point residual of the compact \(w\) learned by Algorithm \ref{alg:compute_w} when using different step sizes, after 100,000 iterations in \(100 \times 100\) Checkerboard Matching Pennies and \(100 \times 100\) RBS. Algorithm \ref{alg:compute_w} is run with the corresponding \(\eta_t = \frac{K}{\sqrt{t}}\), and we run CAT-FTPL with the same parameters as in Experiment 2. For each choice of \(K\), we repeat 20 times (with a newly generated game) and show the mean, minimum and maximum SPR across these repetitions.}
  \label{tab:exp_compute_w_coefficient_sensitivity}
\end{table}

In~\cref{fig:regret_plots_app}, we show further regret plots on several additional 2p0s-GSAS: \cref{ex:rps}, biased \(100 \times 100\) Rock-Paper-Scissors (\cref{def:biased-rps}) and biased \(100 \times 100\) matching pennies (\cref{def:biased-mp}). We show the saddle point residual on these games for the CAT-FTPL marginals in~\cref{fig:spr_plots_marginals_app} and of the robust averaging \(w_i^t\) in~\cref{fig:spr_plots_w_app}. All results match our theoretical bounds as in~\cref{sec:experiments}.

\begin{figure}[ht]
    \centering
    \includegraphics{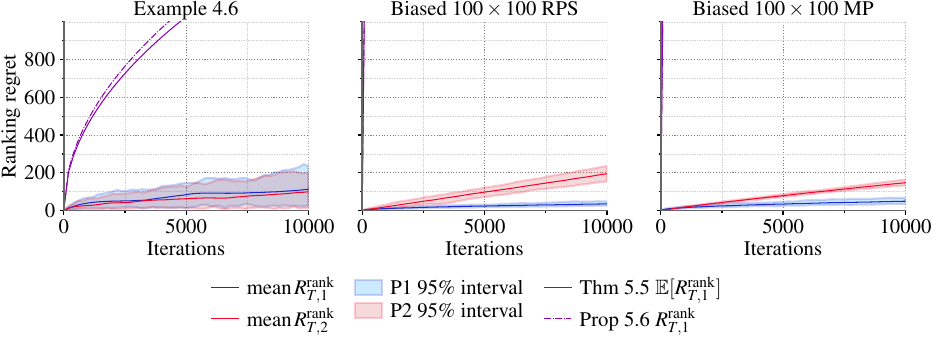}
    \caption{Ranking regret from CAT-FTPL for three 2p0s-GSAS. We repeated each experiment 100 times and graph for both players the average \(R_{T,i}^{\mathrm{rank}}\) over the repetitions, bounded by Theorem \ref{thm:cat-ftpl} and the central 95\% interval of \(R_{T,i}^{\mathrm{rank}}\) over the repetitions, bounded by Proposition \ref{prop:cat-ftpl-high-prob-regret}, with \(p_i = \frac{0.95}{2}\).}
    \label{fig:regret_plots_app}
\end{figure}

\begin{figure}[ht]
    \centering
    \includegraphics{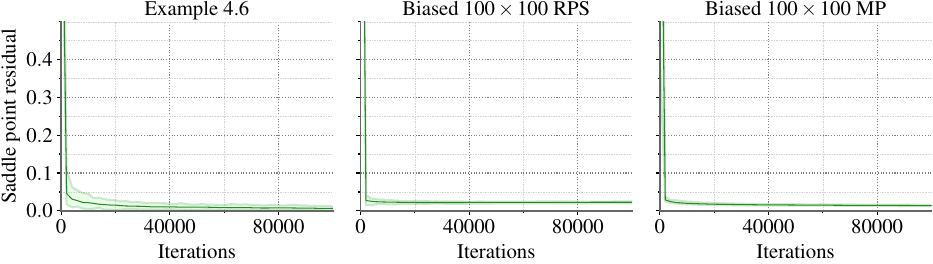}
    \caption{Saddle point residuals of the marginal distribution played by CAT-FTPL for three 2p0s-GSAS. We repeated each experiment 100 times and graph for both players the average and the range.}
    \label{fig:spr_plots_marginals_app}
\end{figure}

\begin{figure}[ht]
    \centering
    \includegraphics{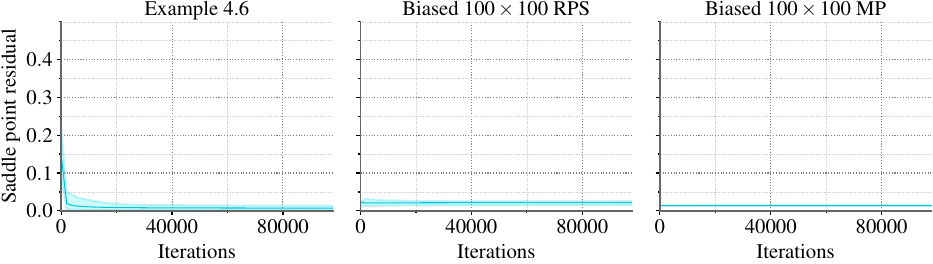}
    \caption{Saddle point residuals of the \(w_i^t\) learned by~\cref{alg:compute_w} on the marginal produced by CAT-FTPL for three 2p0s-GSAS. We repeated each experiment 100 times and graph for both players the average and the range of the saddle point residual.}
    \label{fig:spr_plots_w_app}
\end{figure}

\subsubsection{Estimating the saddle point residual in experiments}\label{appsubsec:exp_spr_calculation}
Since we focus on the 2p0s
case, we can consider the \emph{saddle point residual} of a strategy pair $(\pi_1,\pi_2)$. In particular, let $U = U_1 = -U_2$. The SPR in strategies (as opposed to implementable strategies) is given as
\begin{definition}[Saddle point residual (SPR) of a strategy]
\label{def:spr_strategy}
    \begin{align*}
    \mathrm{SPR}(\pi_1,\pi_2) &= [U(\pi_1,\pi_{2}) - \min_{\pi'_2} U(\pi_1, \pi'_2)] + [\max_{\pi'_1} U(\pi'_1,\pi_{2}) - U(\pi_1, \pi_2)]\\
    &= \max_{\pi'_1} U(\pi'_1,\pi_{2}) - \min_{\pi'_2} U(\pi_1, \pi'_2).
\end{align*}
\end{definition}

It is easy to see that the saddle point residual of $(\pi_1,\pi_2)$ is equal to the saddle point residual of their marginals (cf.~\cref{def:SPR}) and is zero if and only if it is a Nash equilibrium.

We wish to calculate the saddle point residual of strategies \(\pi_i\) produced by \cref{alg:compute_w} and by marginal distributions produced by CAT-FTPL, however as \(U(\pi_1, \pi_2) = \mathbb{E}_{S \sim \rho}\left[\mathbb{E}_{a \sim \pi(S)}[u_i(a)]\right]\) it is infeasible to calculate when we only have sampling access to \(\rho\) or when the support of \(\rho\) is too large to enumerate. We thus wish to calculate where possible, and estimate when not, the saddle point residual of the empirical marginals in three different regimes: (i) \(\rho\) is known and has small support, (ii) only sampling access to \(\rho\) is available, and (iii) when \(\rho\) is such that each action is available independently across the action set. For each regime, we describe the calculation only for \(\max_{\pi'_1} U(\pi'_1, \pi_2)\), noting the procedure for \(\min_{\pi'_2} U(\pi_1, \pi'_2)\) is similar. Although the empirical marginals may not be implementable, the chance that an empirical marginal is significantly different from an implementable marginal is negligible for the large number of time-steps used in our experiments.

\paragraph{(i) Known \(\rho\) with small support regime.}
If we know \(\rho = (\rho_1,\rho_2)\) we can directly calculate \(\max_{\pi'_1} U(\pi'_1, \pi_2)\). For a marginal distribution \(\mu_2\) induced by \(\pi_2\) we compute 
\begin{equation*}
\max_{\pi'_1} U(\pi'_1, \pi_2) = \sum_{S_1 \in \supp(\rho_1)} \mathbb{P}[S_1]\max_{a_1 \in S_1}\left[\sum_{a_i \in A_2} \mu_2(a_2) U(a_1, a_2)\right].
\end{equation*}

If we instead have a strategy \(\pi_2\), we can first calculate the expected payoff for each action \(a_1 \in A_1\) as 
\begin{equation*}
    \mathbb{E}_{S_2 \in\supp(\rho_2)}[U(a_1, \pi_2)] = \sum_{S_2 \in\supp(\rho_2)}\mathbb{P}[S_2]U(a_1;\pi_2(S_2)),
    \label{eqn:exp_computing_spr_action_expected_payoff}
\end{equation*}

and can then proceed the same as for the marginal. We use this procedure for calculating the SPR for \cref{ex:rps} and for small (\(n<10\)) instances of `Checkerboard MP', `Biased RPS', and `Biased MP'.

\paragraph{(ii) Sample \(\rho\) access regime.}
If we only have sample access to \(\rho\), for example when enumerating all possibly observed \(S_i \in\supp(\rho_i)\) is infeasible, we repeat the same process as in the known \(\rho\) regime but sampling from \(\rho\) as needed, instead of enumerating all possible action availabilities, to estimate the SPR.

We use this regime to estimate the SPR for large (\(n>10\)) instances of `Checkerboard MP', `Biased RPS', and `Biased MP'.

\paragraph{(iii) Actions available independently regime.}
If a player's actions are available independently, and we know the probability that an action is available, after estimating the expected payoff for our opponent \(E(a_i) := \mathbb{E}_{S_2 \in\supp(\rho_2)}[U(a_i, \pi_2)]\) for each action \(a_i \in A_1\) the same as in the \emph{Sample \(\rho\) access regime} we can then directly calculate
\(\mathbb{E}_{S_1 \sim \rho_1}[\max _{\pi'_1} E(a_i)]\). Let $p_{a_i}\coloneqq\mathbb{P}[a_i \in S_i]$.

Assume without loss of generality that \(a_i\) is sorted such that \(E(a_j) \geq E(a_{j+1})\). Ignoring the case no action is present and we need to resample \(S_1\), action \(a_j\) contributes \[\prod_{i<j}\mathbb{P}[a_i \notin S_1 \mid S_1 \neq\varnothing]\mathbb{P}[a_j \in S_1 \mid S_1 \neq \varnothing]E(a_j)\]

to \(\max_{\pi'_1}U(\pi'_1, \pi_2)\). We account for the case we need to resample by solving:
\begin{align*}
    \max_{\pi'_1}U(\pi'_1, \pi_2) &= \sum_{j=1}^{\lvert A_1 \rvert} \left\{\prod_{i<j}\mathbb{P}[a_i \notin S_1 \mid S_1 \neq\varnothing]\cdot\mathbb{P}[a_j \in S_1 \mid S_1 \neq \varnothing]E(a_j)\right\} + \mathbb{P}[S_1 = \varnothing]\left(\max_{\pi'_1}U(\pi'_1, \pi_2)\right)\\
    &= \frac{1}{1-\prod_{a_i \in S_1}(1-p_{a_i})} \sum_{j=1}^{\lvert A_1 \rvert} \left\{\prod_{i<j}\mathbb{P}[a_i \notin S_1 \mid S_1 \neq\varnothing]\mathbb{P}[a_j \in S_1 \mid S_1 \neq \varnothing]E(a_j)\right\}.
\end{align*}

We use this to more accurately estimate the saddle point residual of a random 2p0s-GSAS (\cref{def:randomgsas}) and RBS GSAS (\cref{def:RBSgsas}).

\subsubsection{Calculating the ranking regret in experiments}\label{appsubsec:exp_regret_calculation}
Calculating the ranking regret directly by the definition (cf.~\cref{eq:ranking-regret}) would require tracking \(\lvert A_i \rvert!\) many values (the payoff from each ranking) which is infeasible for large GSAS. Thus, to compute \(R^\mathrm{rank}_{T,i}\) in practice, we:
\begin{enumerate}
    \item Calculate the expected payoff from each action given the opponents marginals using~\cref{eqn:exp_computing_spr_action_expected_payoff}.
    \item Calculate the optimal ranking \(\pi_i^{\tau_i}\) with hindsight.
    \item Estimate \(\sum_{t=1}^T U_i(\pi_i^{\tau_i}, \pi_{-i}^t)\) using the same three calculation regimes as for estimating the SPR (the `known \(\rho\) with small support' regime, the `sample \(\rho\) access' regime or the `actions available independently' regime).
    \item Subtract \(\sum_{t=1}^T U_i(\pi^t)\).
\end{enumerate}
Note similarly as for estimating the saddle point residual, the realized payoff may differ from the expected payoff of the strategy, but this difference is negligible for the large values of \(T\) used in our experiments.

\subsection{Solving Example~\ref{ex:rps} Computationally}\label{appsec:exampleRSP}



We apply the procedure outlined in~\cref{sec:compute}. In particular, we run CAT-FTPL and run~\cref{alg:compute_w} with robust averaging on the empirical marginal. We show the learned weights per iterate of~\cref{alg:compute_w}  in~\cref{fig:examplegame}. We find that the weights learned by \cref{alg:compute_w} converge to the analytically computed compact Nash equilibrium vector, which is $(\frac{1}{2},\frac{1}{3},\frac{1}{6})$ for player 1 and $(\frac{2}{5},\frac{2}{5},\frac{1}{5})$ for player 2.


\begin{figure}[!htb]
    \centering
    \includegraphics{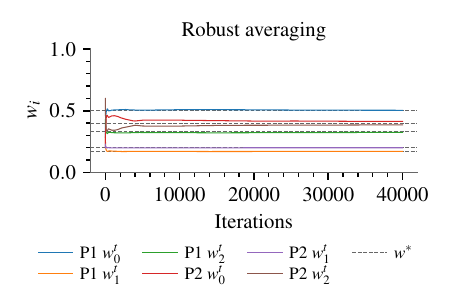}
    \caption{Learned weights by \cref{alg:compute_w} using robust averaging on CAT-FTPL output in~\cref{ex:rps}.}
    \label{fig:examplegame}
\end{figure}

\subsection{Comparison with Multi-scale MWU for Bayesian game}
In \cref{fig:multiscale-MWU}, we show the wallclock running time for multi-scale MWU~\citep{peng2024complexity} as applied to random 2p0s-GSAS (as defined in~\cref{def:randomgsas} and used in Experiment 1) with \(\epsilon=0.3\). Recall that multi-scale MWU was developed for Bayesian games by~\citep{peng2024complexity,dagan2024external}, and one can reduce a GSAS of size $n$ to a Bayesian game with \(2^n - 1\) types (a type for every availability set). In \cref{fig:multiscale-MWU}, we observe exponential growth in runtime for multi-scale MWU, similar to the LP solver method in \cref{fig:runtime_experiment}. This observation matches the bounds given in~\citep{peng2024complexity}: although the bound on the iterations needed is independent of the number of types in the Bayesian game, the bound on the total runtime is still linear in the number of types.


\begin{figure}[ht]
    \centering
    \includegraphics{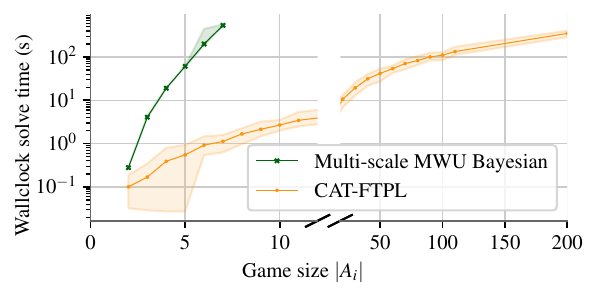}
    \caption{Wallclock runtime for multi-scale MWU on random GSAS (\cref{def:randomgsas}). In a time budget of 360s the algorithm can solve a $6 \times 6$ game, with 609s on average needed to solve a $7 \times 7$ random GSAS. Comparative results from \cref{fig:runtime_experiment} for our method are shown, scaling to $200 \times 200$ random GSAS within the 360s time budget. }
    \label{fig:multiscale-MWU}
\end{figure}



\section{On the Bit Complexity of GSAS Solutions}\label{appsec:bitcomplexity}
Before proceeding, we make two important points.
First, we are \textit{not} interested in bit representations that are as large as the number of entries of $\rho$ (which could be exponential in \(\lvert A_i \rvert\)) --- if we allowed that, then one may as well do away with the compact representation altogether and work with the naïve distribution mapping action sets to distributions over actions, an approach that is neither practical nor enlightening.
Second, this section is only concerned with \textit{representing} (approximate) equilibrium, and has little to do with explicit algorithms to compute the representations.


\subsection{Bit-Compact Representations}
Fix some family \(\Gamma\) of 2p0s-GSAS, for example, those meeting~\cref{assumption:product}.
We want to be able to represent an \(\epsilon\)-NE strategy for any GSAS from \(\Gamma\). Informally, this representation should allow us to play an \(\epsilon\)-NE when given an availability set, without needing to perform further game-specific calculations.\footnote{For example, we cannot allow the naïve strategy of solving the GSAS from scratch when presented with an availability set.} As a concrete example, for \(\Gamma = \{\mathcal{G} \mid \mathcal{G} \text{ a 2p0s-GSAS meeting~\cref{assumption:product}}\}\), we know from~\cref{thm:compact} that for every \(\mathcal{G} \in \Gamma\) there exists a vector \(w_i \in \Delta(A_i)\) and the following algorithm: given \(w_i\) and an availability set \(S_i\), renormalize \(w_i\) over the available actions, will play an \(\epsilon\)-NE for \(\mathcal{G}\). Note to strictly meet the definition of bit-compact representation would require showing we can approximate the numbers in \(w_i\) with polynomial many bits.

While we do not want to unnecessarily restrict how such bit-compact  \(\epsilon\)-NE representations work, we do require that they admit a \emph{playing algorithm} which is independent of any particular GSAS in \(\Gamma\). For any GSAS in \(\Gamma\), given the right bit string and an availability set, the \emph{playing algorithm} should play a valid \(\epsilon\)-NE. To this end,~\cref{def:compactrep} formally captures `bit-compactness' of $\epsilon$-NE in 2p0s-GSAS:

\begin{definition}[Bit-compact representable \(\epsilon\)-NE in 2p0s-GSAS]
	\label{def:compactrep}
    Let \(\Gamma \subseteq \{\mathcal{G} \mid \mathcal{G} \text{ is a 2p0s-GSAS}\}\) be a family of GSAS, with \(\mathcal{G} \in \Gamma\) a particular GSAS from the family.
    We denote \(\mathcal{G}\)'s actions for player \(i\) by \(A_{\mathcal{G}, i}\) and availability sets by \(\mathcal{S}_{\mathcal{G}, i}\).

	We say that \(\Gamma\) has bit-compactly representable \(\epsilon\)-NE if there exists a polynomial \(p\) and an algorithm \(R\), taking as input a bit string \(x_{\mathcal{G},i}\) and an action availability \(S_{\mathcal{G},i}\), such that \(\forall \epsilon > 0, \forall \mathcal{G} \in \Gamma, \forall i \in \{1,2\},\)
	\begin{equation*}
		\exists x_{\mathcal{G},i} \in \{0,1\}^{p\left(\lvert A_{\mathcal{G},i} \rvert, \log\left(\frac{1}{\epsilon}\right)\right)} 
		\text{ such that } \forall S_i \in \mathcal{S}_{\mathcal{G},i}, R(x_{\mathcal{G},i}, S_i) \in \Delta(S_i)
		\text{ and the strategy played by R is an } \epsilon \text{-NE}.
	\end{equation*}
\end{definition}
We call such an \(x_{\mathcal{G},i}\) for a GSAS \(\mathcal{G}\) a \textit{bit-compact representation} and let \(k := \lvert x_{\mathcal{G},i} \rvert\) be the length of the bit string.

Crucially, \(R\) is fixed for \(\Gamma\) and only takes in \(x_{\mathcal{G}, i}\) and \(S_i\), but not the entire GSAS \(\mathcal{G}\). This means that it can only operate on the compact representation afforded by \(x_{\mathcal{G}, i}\), and cannot rely on external quantities like \(\rho\) unless they are stored either implicitly or explicitly in \(x_{\mathcal{G}, i}\). Furthermore, note that \(k\) is required to be polynomial in \(\lvert A_i \rvert\) and \(\log\left( \frac{1}{\epsilon} \right)\), while in general \(\rho\) could require \(\Omega(2^{\lvert A_i \rvert})\) bits to encode. The omission of \(\mathcal{G}\) was explicitly chosen to ensure any bit-compact representation contains all the information needed to play an \(\epsilon\)-NE, but without specifying how that information is represented.

Note that having a bit-compact representation of a GSAS family is necessary, but not sufficient, to have a practical method to solve that family: one additionally needs a practical method to compute the bit string, and for the playing algorithm to run in polynomial time in its input length. However, our procedure in~\cref{sec:compute} meets these additional requirements, with CAT-FTPL and~\cref{alg:compute_w} providing a practical method to compute \(w\) and the playing algorithm used by~\cref{thm:compact} having a polynomial runtime.

Finally, while~\cref{def:compactrep} is specified for 2p0s-GSAS, it can be broadened to general GSAS by ensuring that the joint strategy played by \(R\) is an \(\epsilon\)-NE. The definition can also be easily generalized to cover solution concepts such as (coarse) correlated equilibria, or indeed any set of strategies meeting some desired criteria.

\subsection{Non-Existence of Compact Nash Equilibria in all GSAS}
\label{appsec:nocompact}


In this section, we provide a result that establishes non-existence of bit-compact approximate NE strategies in 2p0s-GSAS that do not meet~\cref{assumption:product}.

\begin{theorem}[No \(\epsilon\)-NE bit-compact representation]
	\label{thm:nocompact}
	The set
    \[\Gamma = \{\mathcal{G} \mid \mathcal{G} \text{ is a 2p0s-GSAS}\}\]
    
    does not have a bit-compact representation of \(\epsilon\)-NE.
\end{theorem}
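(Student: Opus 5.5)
We argue by counting: we exhibit a family $\Gamma' \subseteq \Gamma$ of correlated (non-product) 2p0s-GSAS that need too many distinct equilibrium behaviours to be encoded in polynomially many bits. Suppose, for contradiction, that a fixed playing algorithm $R$ and a polynomial $p$ witness \cref{def:compactrep}. Fix $n=|A_1|$ and a constant $\epsilon$ (say $\epsilon=1/8$). Then at most $2^{p(n,\log(1/\epsilon))}$ bit strings $x$ exist. Each $x$ determines, through $R$, a single strategy $S_1\mapsto R(x,S_1)$ for player 1, and this strategy does not depend on which GSAS we are in. We will build $2^{\Omega(2^n)}$ games, indexed by subsets, such that no single player-1 strategy is an $\epsilon$-NE strategy in two different games of the family. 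By pigeonhole this contradicts the assumption.

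\textbf{Construction.} Player 2's "availability" will act as a hidden type correlated with player 1's set, so that player 1's optimal play on $S_1$ must encode a bit. Let $A_1=\{1,\dots,n\}\cup\{h_0,h_1\}$ (two extra hedge actions), and take $\mathcal S_1$ to be a family of $N=\binom{n}{n/2}$ subsets $T_j$, each augmented with $h_0,h_1$. Under $\rho$, $S_1=T_j$ is drawn uniformly. For each $b\in\{0,1\}^N$, player 2's set is then deterministically $\{c_{b_j}\}$, where $c_0$ and $c_1$ are two player-2 actions. Payoffs are chosen so that $u_1(h_k,c_{k'})=1$ if $k=k'$ and $-1$ otherwise, while the remaining actions give $0$. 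Since player 2 is then a dummy whose action is $c_{b_j}$, player 1's unique best response on $T_j$ is $h_{b_j}$. Any $\epsilon$-NE must therefore place probability at least $1-O(\epsilon)$ on $h_{b_j}$ for all but an $O(\epsilon)$-fraction of $j$. This uses the averaging
\[
\text{loss} \;=\; \tfrac1N\textstyle\sum_j 2\,\pi_1(h_{1-b_j}\mid T_j)\;\le\;\epsilon .
\]
Consequently, from the strategy $R(x,\cdot)$ we can decode $b$ up to Hamming distance $O(\epsilon N)$.

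\textbf{Counting.} Choose a code $C\subseteq\{0,1\}^N$ with minimum distance greater than $4c\epsilon N$ and $|C|=2^{\Omega(N)}$; the Gilbert–Varshamov bound gives such a code for small constant $\epsilon$. Distinct codewords then require distinct strings $x$, so $2^{p(n+2,\log 8)}\ge 2^{\Omega(N)}$. This is impossible because $N=\binom{n}{n/2}$ is exponential in $n$.

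\textbf{Main obstacle.} The delicate step is making sure $R$ really has no access to $b$ except through $x$; in particular, $S_1$ alone must not reveal $b_j$. This is why the same collection $\{T_j\}$, with the same uniform marginal $\rho_1$, is used across all games in the family, and the games differ only in the correlation with player 2's set. We also need to confirm that player 2's strategy is irrelevant (it is forced), so that the $\epsilon$-NE condition reduces to player 1's best-response condition. This correlation is exactly what violates \cref{assumption:product}, which is consistent with \cref{thm:compact}.
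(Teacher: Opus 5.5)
This is essentially the paper's proof. In both, the availability sets are correlated so that player~2 is forced onto a singleton given by a codeword bit indexed by player~1's set, and $\rho_1$ is the same in every game. Both then use a Gilbert--Varshamov code and pigeonhole over the $2^{\mathrm{poly}}$ bit strings; your $\pm1$ hedge payoffs, zero-payoff signal actions and central-binomial family of sets are cosmetic changes.

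One slip needs repair. Your displayed loss omits the unit cost of putting mass on signal actions. As written, a strategy that only plays signal actions satisfies it for every $b$, so nothing can be decoded from it. The correct per-set loss is $1-\pi_1(h_{b_j}\mid T_j)+\pi_1(h_{1-b_j}\mid T_j)$.

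With the correct loss, the cleanest finish is the paper-style pairwise comparison. On every index with $b_j\neq b'_j$, the losses in games $b$ and $b'$ sum to exactly $2$. So a single strategy that is an $\epsilon$-best response in both forces $d_{\text{Hamming}}(b,b')\le\epsilon N$, and a code of relative distance greater than $\epsilon$ suffices. Watch your constants here: decoding by thresholding at $1/2$ loses a factor $4$. At your $\epsilon=1/8$ it would need relative distance above $1/2$, which the Plotkin bound rules out for codes of size $2^{\Omega(N)}$.
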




 
 \cref{thm:nocompact} further justifies the use of~\cref{assumption:product} in our analysis: indeed, once the assumption is relaxed, there does not exist an algorithm that can play a bit-compact representation of $\epsilon$-NE in the family of 2p0s-GSAS.

\paragraph{Proof sketch.} Our proof of \cref{thm:nocompact} is conceptually similar to counting arguments, such as those in \cite[Chapter 6]{Arora_Barak_2009}, to show circuit lower bounds: we construct an exponentially large family of games, all of which must have different bit-compact representations in order to play an \(\epsilon\)-NE.
The family of interest arises from a mapping from a binary function \(f\) to a 2p0s-GSAS, with a matching pennies setup and \(m\) additional `signal' actions for player 1 which are strictly dominated by Heads. The availability sets are such that player 1 always has Heads and Tails available, while the availability of the signal actions is used as input to \(f\), with player 2 having exactly one of Heads or Tails available, as determined by \(f\). Since player 2 can never deviate from their single valid strategy, a Nash Equilibrium is for player 1 to use the availability of its signal actions to also play Heads or Tails according to \(f\). 

We use the Gilbert-Varshamov bound~\citep{gilbert1952comparison,varshamov1957estimate} from coding theory, which is written explicitly in~\cref{thm:gilvert-varshamov} following~\cite[Section 4.2]{guruswami2025essential} and~\cite{druk2014lineartime}. We use this bound 
to construct an exponentially large family of games where for all games in the family, and for a constant fraction of the sampled signal availabilities, player 2 receives a different available action. We use a counting argument to establish that if there exists an \(\epsilon\)-NE bit-compact representation, when \(\epsilon < \frac{1}{4}\) for at least two games in the family player 1 must receive the same bit string and must play the same way on both games. 
This collision allows us to establish a contradiction, since by construction the same strategy cannot be an \(\epsilon\)-NE for both games.

\begin{theorem}[Gilbert-Varshamov Bound for \(q=2\)]
\label{thm:gilvert-varshamov}
For every \(0 < \delta < \frac{1}{2}\) there exists a family of binary linear codes \(\mathcal{C}\) with rate \(R(\mathcal{C}) \geq 1 - H_2(\delta)\) and relative distance \(\delta(\mathcal{C}) \geq \delta.\)
\end{theorem}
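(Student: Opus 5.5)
The plan is the standard probabilistic (Varshamov) argument with random linear codes. Throughout, $H_2$ is the binary entropy function. I will read ``family of codes with rate $\ge 1-H_2(\delta)$'' in the usual asymptotic sense: a sequence of codes $\mathcal{C}_n\subseteq\mathbb{F}_2^n$, $n\to\infty$, with $\liminf_n k_n/n \ge 1-H_2(\delta)$ and $\min_n \delta(\mathcal{C}_n)\ge\delta$. This is the sense in which the bound is stated in~\cite[Section 4.2]{guruswami2025essential}, and it is all the counting argument for \cref{thm:nocompact} needs, since only an exponentially large code is required there.

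The first ingredient is the volume estimate for Hamming balls. For $0<\delta<\tfrac12$ one has $\mathrm{Vol}(n,\delta n)=\sum_{j\le \delta n}\binom{n}{j}\le 2^{H_2(\delta)n}$. I would prove it by the usual trick:
\[
1=(\delta+(1-\delta))^n\ \ge\ \sum_{j\le\delta n}\binom{n}{j}\delta^j(1-\delta)^{n-j}\ \ge\ \mathrm{Vol}(n,\delta n)\,\delta^{\delta n}(1-\delta)^{(1-\delta)n}.
\]
The second inequality uses that $\delta/(1-\delta)<1$, so each term $\delta^j(1-\delta)^{n-j}$ with $j\le\delta n$ is at least $\delta^{\delta n}(1-\delta)^{(1-\delta)n}=2^{-H_2(\delta)n}$.

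The second ingredient is a random generator matrix. Fix $n$ and let $k=n-\lceil H_2(\delta)n\rceil-1$. Draw $G\in\mathbb{F}_2^{k\times n}$ with i.i.d.\ uniform entries. For any fixed nonzero message $m\in\mathbb{F}_2^k$, the codeword $mG$ is uniform on $\mathbb{F}_2^n$, because some coordinate of $m$ is $1$ and the corresponding row of $G$ is uniform and independent of the others. Hence $\Pr[\mathrm{wt}(mG)<\delta n]\le \mathrm{Vol}(n,\delta n)/2^n\le 2^{(H_2(\delta)-1)n}$. A union bound over the $2^k-1$ nonzero messages gives failure probability below $2^{k+(H_2(\delta)-1)n}\le \tfrac12<1$. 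So some $G$ has every nonzero $mG$ of weight at least $\delta n$. In particular $mG\neq 0$ for all $m\ne 0$, so $G$ has full row rank and $\mathcal{C}_n=\{mG\}$ is a linear code of dimension exactly $k$ with minimum distance at least $\delta n$.

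Finally, the rate is $k/n\ge 1-H_2(\delta)-2/n\to 1-H_2(\delta)$ and the relative distance is at least $\delta$, which gives the family claimed. The main obstacle is the exact (non-asymptotic) form ``$R(\mathcal{C})\ge 1-H_2(\delta)$'': the random-code argument loses an additive $O(1/n)$. If the exact form were needed at every length, I would switch to Varshamov's greedy parity-check construction. That construction builds the columns of an $(n-k)\times n$ parity-check matrix one at a time, requiring each new column to avoid all sums of at most $d-2$ earlier columns, which is possible whenever $\mathrm{Vol}(n-1,d-2)<2^{n-k}$. Even so, I would still expect to absorb a rounding term, so I would state and use the bound asymptotically. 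That suffices downstream, where only an exponentially large code with constant relative distance is used.
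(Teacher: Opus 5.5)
Your proof is correct. The binomial-expansion volume bound, the uniformity of $mG$ for each fixed nonzero $m$, the union bound with $k=n-\lceil H_2(\delta)n\rceil-1$ (failure probability at most $\tfrac12$), and the full-rank conclusion all check out. With the standard definition of the rate of a family as $\liminf_n k_n/n$, this establishes the statement; you should only restrict to $n$ large enough that $k\ge 1$.

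The paper does not prove the theorem; it imports it from the cited textbook. It only ever uses it in the form \eqref{eqn:gilbert}, $|C|\ge 2^l/\mathrm{Vol}_2(\delta l,l)\ge 2^{l(1-H_2(\delta))}$, at the single length $l=2^m$. That inequality is exactly what Gilbert's greedy covering argument gives: keep adding words at distance at least $\lceil\delta l\rceil$ from everything chosen, and when stuck, balls of radius $\lceil\delta l\rceil-1<\delta l$ cover $\{0,1\}^l$. This holds exactly at every length, but the resulting code need not be linear.

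Your Varshamov-style random generator matrix buys linearity, as the statement literally asks. The price is a constant factor in $|C|$: you get $2^{l(1-H_2(\delta))-2}$. As you note, this is harmless for the counting in \cref{thm:nocompact}, where linearity is never used either.

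Your expectation that a rounding loss is unavoidable for linear codes is too pessimistic. For fixed $\delta<\tfrac12$ there is the sharper estimate $\mathrm{Vol}(n,\delta n)=O\bigl(2^{H_2(\delta)n}/\sqrt{n}\bigr)$. With it, the same union bound goes through with $k=\lceil(1-H_2(\delta))n\rceil$ for all large $n$, giving rate at least $1-H_2(\delta)$ exactly.
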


From the Gilbert-Varshamov bound there exists a set \(C\) of binary strings of length \(l\) with
\begin{equation}
\label{eqn:gilbert}
    \lvert C \rvert \geq \frac{2^l}{\text{Vol}_2(\delta l, l)} \geq 2^{l(1-H_2(\delta))} \quad \text{ and } \min_{\substack{c_1, c_2 \in C \\ c_1 \neq c_2}}  d_{\text{Hamming}}(c_1, c_2) \geq \delta l,
\end{equation}

where \(\text{Vol}_2(\delta l, l)\) is the volume of a Hamming ball of radius \(\delta l\) in \(\{0,1\}^l\) and \(H_2(\delta) = -\delta \log_2 (\delta) - (1-\delta)\log_2(1-\delta)\) is the binary entropy function.

\begin{proof}[Proof of~\cref{thm:nocompact}]
\textbf{Game generator definition.} We first define \(\mathcal{G}(f)\), which constructs a 2p0s-GSAS from any binary function \(f: \{0,1\}^m \to \{0,1\}\).

Let \(A_1 = \{H,T,1,2,\dots,m\}\) and \(A_2 = \{H,T\}\) be the action sets of players 1 and 2 respectively. Moreover, let the (zero-sum) payoff function be
\begin{align}
    u_1(a_1,a_2) = -u_2(a_1,a_2) = \begin{cases} 1 & \text{ if } a_1 = a_2, \\ 0 & \text{ if } a_1 \neq a_2, a_1 \in \{H,T\}, \\ -1 & \text{ if }a_1 \in \{1,2,\dots,m\}.\end{cases}
\end{align}

Let the game have availability sets \(\mathcal{S}_1 = \{S_{1,j}\mid j \subseteq \{1,2,\dots,m\}\}\) with $S_{1,j} = \{H,T\} \cup j$,
and 
\(\mathcal{S}_2 = \{S_{2,H},S_{2,T}\}\) with \(S_{2,H} = \{H\}, S_{2,T} = \{T\}\). The distribution over availability sets is
\begin{align}
    \rho(S_{1,j}, S_2) = \begin{cases}
		\frac{1}{2^m} & \text{ if } f(\bbone_j) = 1, S_2 = \{H\},\\
		\frac{1}{2^m} & \text{ if } f(\bbone_j) = 0, S_2 = \{T\},\\
		0 & \text{ otherwise},\end{cases}\label{eqn:nocompact_rho}
\end{align}

where $\bbone_j \in \{0,1\}^m \text { with } \bbone_j(i) \eqdef \begin{cases} 1 & i \in j \\ 0 & i \notin j\end{cases}$ the Boolean indicator vector.

\textbf{Game family construction.}
With this method to generate a 2p0s-GSAS from a binary function \(f\), we now construct a suitable family of games.
Assume toward a contradiction that there exists an \(\epsilon\)-NE bit-compact representation \(R\). Fix an \(\epsilon < \frac{1}{4}\) and let \(\delta\) be a constant such that \(\frac{1}{2} > \delta > 2\epsilon\).
From the Gilbert-Varshamov bound \eqref{eqn:gilbert}, we know there exists a set of binary strings \(C \subseteq \{0,1\}^{2^m}\) with relative distance \(\delta\) and \(\lvert C \rvert \geq 2^{(1-H_2(\delta))2^m}\). Therefore, for any \(R\), with \(k \in \mathrm{poly}(m+2, \frac{1}{\epsilon}) \) the length of the bit-compact representation bit string for a game, there exists an \(m_0\) such that \(\forall m \geq m_0, 2^{2^{m(1-H_2(\delta))}} > 2^k\).

Fix such an \(m \geq m_0\) and let \(C\) be such a set of binary strings.
Given a codeword \(c \in C\), we construct a binary function by outputting the value of the codeword at the index of the input
\(f_c : \{0,1,\dots,2^m-1\} \to \{0,1\},  f_c(x) = c_x\).

We thus consider the set of GSAS given by $\gamma = \{\mathcal{G}(f_c) \mid c \in C\}.$ Note \(\gamma\) is a strict subset of the set of all 2p0s-GSAS, and so if \(\gamma\) does not have an \(\epsilon\)-NE bit-compact representation, neither does the set of all 2p0s-GSAS. As  \(\lvert \gamma \rvert =\lvert C \rvert \geq 2^{(1-H_2(\delta))2^{m}}\), but there are only \(2^{\mathrm{poly}(m+2, 1/\epsilon)}\) possible bit-compact representation bit strings, there exists games \(\mathcal{F},\mathcal{K} \in \gamma \text{ such that } \mathcal{F} \neq \mathcal{K} \text { but } x^{\mathcal{F}} = x^{\mathcal{K}} =: x\).

\textbf{Cannot be an \(\epsilon\)-NE in both games.}
We proceed by showing that the strategy played by \(R\) using the bit string \(x\) cannot be an \(\epsilon\)-NE for both \(\mathcal{F}\) and \(\mathcal{K}\).

Let \(\pi_2^\mathcal{G}\) be the single valid strategy for player 2 in game \(\mathcal{G} \in \gamma\) (noting that any game in \(\gamma\) has a single action choice available for player 2 at all times).

Let \(\pi_R\) be the strategy output by \(R(S_1,x)\), and \(\pi_{R'}\) the strategy that plays \(H\) whenever \(\pi_R\) plays \(\{1,2,\dots,m\}\) but is otherwise identical to \(\pi_R\). 
As actions \(\{1,2,\dots,m\}\) are strictly dominated by \(\{H\}\), \(U(\pi_{R'}, \pi_2) > U(\pi_R, \pi_2)\).
We can thus assume without loss of generality that actions \(\{1,2,\dots,m\}\) are never played by \(R(S_1,x)\).

Let \(\pi^{\mathcal{G}}_1(S_1) = \begin{cases}H &\text{ if }f_{\mathcal{G}}(\bbone_{S_1})=1 \\ T &\text{ if }f_{\mathcal{G}}(\bbone_{S_1})=0\end{cases}\) be the strategy for player 1 in a GSAS \(\mathcal{G} \in \gamma\) which plays the element that, by definition of \(\rho\), player 2 is forced to play. Moreover, note that \(U^\mathcal{\mathcal{G}}_1(\pi_1^\mathcal{\mathcal{G}}, \pi_2^\mathcal{G}) = 1\). Since player 2 cannot deviate from the single valid strategy, and \(1\) is the maximal game value, \(\pi^{\mathcal{G}}\) is a NE.

Then, since by definition \(R\) outputs an \(\epsilon\)-NE,
\begin{align}
	U^\mathcal{F}_1(\pi^\mathcal{F}_1, \pi^\mathcal{F}_2) - \epsilon &\leq U^\mathcal{F}_1(\pi_R, \pi^\mathcal{F}_2)\\
	1 - \epsilon &\leq U^\mathcal{F}_1(\pi_R, \pi^\mathcal{F}_2)\\
	U^\mathcal{F}_1 (\pi_R, \pi_2^\mathcal{F}) &= \mathbb{E}_{S \sim \rho} [U^\mathcal{F}_1(R(S_1, x), \pi_2^\mathcal{F}(S_2))]\\
	&= \sum_{S \in \mathcal{S}} \Pr[S] \cdot U^\mathcal{F}_1(R(S_1,x), \pi_2^\mathcal{F}(S_2)).\label{eqn:nocompact_utility_expectation}
    \end{align}

By definition of \(\gamma\) and the corresponding \(\rho\) for any game (recalling that \(\pi_2^\mathcal{F}\) is restricted to a single output determined by \(f_\mathcal{F}\) and \(\pi_R\) never plays actions \(\{1,2,\dots,m\}\)) we can expand \(U_1^\mathcal{F}\) from Equation~\eqref{eqn:nocompact_utility_expectation} to get
\begin{align}
U^\mathcal{F}_1 (\pi_R, \pi_2^\mathcal{F}) &= \frac{1}{2^m} \sum_{S \in \mathcal{S}} \big\{ R(S_1, x)_H \cdot f_\mathcal{F}(\bbone_{S_1}) + R(S_1, x)_T \cdot (1- f_\mathcal{F}(\bbone_{S_1}))\big\}\\
1 - \epsilon &\leq \frac{1}{2^m} \sum_{S \in \mathcal{S}} \big\{ R(S_1, x)_H \cdot f_\mathcal{F}(\bbone_{S_1}) + R(S_1, x)_T \cdot (1- f_\mathcal{F}(\bbone_{S_1}))\big\}.\label{eqn:nocompact_expanded_function_form}
\end{align}

We now separate the availability sets \(\mathcal{S}\) into two disjoint sets: \(S_=\), where the binary functions that generated \(\mathcal{F}\) and \(\mathcal{K}\) are equal, using the availability set as a Boolean indicator vector as input (cf. Equation~\eqref{eqn:nocompact_rho}), and \(S_{\neq}\), where the binary functions differ. Formally, we have
\begin{align}
    S_= &= \{S_1 \in \mathcal{S}_1 \mid f_\mathcal{F}(\bbone_{S_1}) = f_\mathcal{K}(\bbone_{S_1})\}\\
    S_{\neq} &= \{S_1 \in \mathcal{S}_1 \mid f_\mathcal{F}(\bbone_{S_1}) \neq f_\mathcal{K}(\bbone_{S_1})\}.
\end{align}

Note that $S_= \cup S_{\neq} = \mathcal{S},  S_= \cap S_{\neq} = \varnothing$. Splitting Equation~\eqref{eqn:nocompact_expanded_function_form} to separately sum over these sets gives
\begin{align}
1 - \epsilon &\leq \frac{1}{\lvert \mathcal{S} \rvert} \sum_{S \in S_=} \big\{ R(S_1, x)_H \cdot f_\mathcal{F}(\bbone_{S_1}) + R(S_1, x)_T \cdot (1- f_\mathcal{F}(\bbone_{S_1}))\big\} \nonumber\\
&\quad+ \frac{1}{\lvert \mathcal{S} \rvert} \sum_{S \in S_{\neq}} \big\{ R(S_1, x)_H \cdot f_\mathcal{F}(\bbone_{S_1}) + R(S_1, x)_T \cdot (1- f_\mathcal{F}(\bbone_{S_1}))\big\}.\label{eqn:nocompact_split_sets}
\end{align}

As payoffs are bounded by 1,
\begin{align}
    \frac{1}{\lvert \mathcal{S} \rvert} \sum_{S \in S_=} \big\{ R(S_1, x)_H \cdot f_\mathcal{F}(\bbone_{S_1}) + R(S_1, x)_T \cdot (1- f_\mathcal{F}(\bbone_{S_1}))\big\} &\leq \frac{\lvert S_= \rvert}{\lvert \mathcal{S} \rvert},\label{eqn:nocompact_sequal_bound}
\end{align}

which, combining with Equation~\eqref{eqn:nocompact_split_sets}, gives
\begin{align}
1 - \frac{\lvert S_= \rvert}{\lvert \mathcal{S} \rvert} - \epsilon &\leq \frac{1}{\lvert \mathcal{S} \rvert} \sum_{S \in S_{\neq}} \big\{ R(S_1, x)_H \cdot f_\mathcal{F}(\bbone_{S_1}) + R(S_1, x)_T \cdot (1- f_\mathcal{F}(\bbone_{S_1}))\big\}.\label{eqn:nocompact_s_notequal_bound}
\end{align}

Then, utilizing that \(\pi_R \in \Delta(\{H,T\})\): \(R(S_1,x)_H + R(S_1,x)_T = 1\), and \(f_\mathcal{F}\) is always either $0$ or $1$, we get
\begin{align}
\lvert S_{\neq} \rvert &= \sum_{S \in S_{\neq}} \big\{ R(S_1, x)_H \cdot f_\mathcal{F}(\bbone_{S_1}) + R(S_1, x)_T \cdot (1- f_\mathcal{F}(\bbone_{S_1}))\nonumber\\
&\quad\quad\quad+ R(S_1, x)_H \cdot (1 - f_\mathcal{F}(\bbone_{S_1})) + R(S_1, x)_T \cdot f_\mathcal{F}(\bbone_{S_1})\big\}.
\end{align}

Splitting the sum and substituting into Equation~\eqref{eqn:nocompact_s_notequal_bound} gives
\begin{align}
1 - \frac{\lvert S_= \rvert}{\lvert \mathcal{S} \rvert} - \epsilon &\leq \frac{\lvert S_{\neq} \rvert}{\lvert \mathcal{S} \rvert} - \frac{1}{\lvert \mathcal{S} \rvert} \sum_{S \in S_{\neq}} \big\{ R(S_1, x)_H \cdot (1 - f_\mathcal{F}(\bbone_{S_1})) + R(S_1, x)_T \cdot f_\mathcal{F}(\bbone_{S_1})\big\}\\
\epsilon &\geq \frac{1}{\lvert \mathcal{S} \rvert} \sum_{S \in S_{\neq}} \big\{ R(S_1, x)_H \cdot (1 - f_\mathcal{F}(\bbone_{S_1})) + R(S_1, x)_T \cdot f_\mathcal{F}(\bbone_{S_1})\big\}.
\end{align}

Since in \(S_{\neq}\) we have \(f_\mathcal{K} \neq f_\mathcal{F}\), it follows that
\begin{align}
\epsilon &\geq \frac{1}{\lvert \mathcal{S} \rvert} \sum_{S \in S_{\neq}} \big\{ R(S_1, x)_H \cdot f_\mathcal{K}(\bbone_{S_1}) + R(S_1, x)_T \cdot (1 - f_\mathcal{K}(\bbone_{S_1}))\big\}.\label{eqn:nocompact_snotequal_eps_bound}
\end{align}

Since the availability sets are defined by truth tables read from two binary strings from \(C\), and \(\min_{c_1, c_2 \in C; c_1 \neq c_2} d_{\text{Hamming}}(c_1, c_2) \geq \delta 2^m\), \(f_\mathcal{F}\) and \(f_\mathcal{K}\) must differ on at least a \(\delta\) fraction of the inputs. Thus \(\frac{\lvert S_{\neq}\rvert}{\lvert \mathcal{S} \rvert} \geq \delta \implies \frac{\lvert S_=\rvert}{\lvert \mathcal{S} \rvert} \leq 1 - \delta\). Adding the payoff component contributed by both \(S_=\) from Equation~\eqref{eqn:nocompact_sequal_bound} and \(S_{\neq}\) from Equation~\eqref{eqn:nocompact_snotequal_eps_bound} gives
\begin{align}
1 - \delta + \epsilon & \geq \frac{1}{\lvert \mathcal{S} \rvert} \sum_{S \in \mathcal{S}} \big\{ R(S_1, x)_H \cdot f_\mathcal{K}(\bbone_{S_1}) + R(S_1, x)_T \cdot (1 - f_\mathcal{K}(\bbone_{S_1}))\big\}\\
1 + \epsilon - \delta &\geq \mathbb{E}_{S \sim \rho_\mathcal{K}} [U^\mathcal{K}_1(R(S_1,x), \pi_2^\mathcal{K}(S)]\\
U^\mathcal{K}_1(\pi^\mathcal{K}_1, \pi_2^\mathcal{K}) + \epsilon - \delta &\geq U^\mathcal{K}_1(\pi_R, \pi_2^\mathcal{K}).
\end{align}

As by definition \(\delta > 2\epsilon\) and by the Gilbert-Varshamov bound $\delta < 1/2$, it follows that
\begin{align}
U^\mathcal{K}_1(\pi^\mathcal{K}_1, \pi_2^\mathcal{K}) - \epsilon &> U^\mathcal{K}_1(\pi_R, \pi_2^\mathcal{K}),
\end{align}

where $\epsilon <\frac{1}{4}$. 
Since \(R\) must be able to play an \(\epsilon\)-NE on \(\mathcal{K}\), we have a contradiction.
\end{proof}

\end{document}